\newcommand{\EnumP}{\mathrm{\mathsf EnumP}}
\newcommand{\IncP}{\mathrm{\mathsf IncP}}
\newcommand{\DelayP}{\mathrm{\mathsf DelayP}}
\newcommand{\cH}{\ensuremath{\mathcal{H}}}
\newcommand{\cF}{\ensuremath{\mathcal{F}}}
\newcommand{\cT}{\ensuremath{\mathcal{T}}}
\newcommand{\cU}{\ensuremath{\mathcal{U}}}
\newcommand{\cE}{\ensuremath{\mathcal{E}}}
\newcommand{\ccS}{\ensuremath{\mathcal{S}}}
\newcommand{\1}{\ensuremath{\mathbb{1}}}
\newcommand{\0}{\ensuremath{\mathbb{0}}}
\newcommand{\Cl}{\ensuremath{Cl}}
\newcommand{\enumSat}[1]{\textsc{Enum#1}}
\newcommand{\EnumKMaxIndSet}{\textsc{EnumKMaxIndSet}}
\newcommand{\EnumClo}{\textsc{EnumClosure}}
\newcommand{\EnumUDClo}{\textsc{EnumUDClosure}}
\newcommand{\EnumCloMin}{\textsc{EnumClosureMin}}
\newcommand{\EnumCloMax}{\textsc{EnumClosureMax}}
\newcommand{\Mem}{\textsc{Membership}}
\newcommand{\ExtClo}{\textsc{ExtClosure}}
\newcommand{\zero}{\ensuremath{\mathbf{0}}}
\newcommand{\one}{\ensuremath{\mathbf{1}}}
\newcommand{\CloMax}{\textsc{ClosureMax}}
\newtheorem{openproblem}{\bf Open problem}
 \newtheorem{theorem}{Theorem}
 \newtheorem{lemma}[theorem]{Lemma}
 \newtheorem{proposition}[theorem]{Proposition}
  \newtheorem*{proposition*}{Proposition}
 \newtheorem{corollary}[theorem]{Corollary}
 \theoremstyle{definition}
 \newtheorem{definition}[theorem]{Definition}
\title{Efficient enumeration of solutions produced by closure operations}
\author{Arnaud Mary\affiliationmark{1}
    \and Yann Strozecki \affiliationmark{2}}
\affiliation{Universit{\'e} Lyon 1 ; CNRS, UMR5558, LBBE / INRIA - ERABLE\\
Universit{\'e} de Versailles Saint-Quentin-en-Yvelines, DAVID laboratory}
\keywords{enumeration, set saturation, incremental polynomial time, polynomial delay, Post's lattice, maximal independent sets}
\begin{document}
\maketitle

\begin{abstract}
 In this paper we address the problem of generating all elements obtained by the saturation of an initial set by some operations. More precisely, we prove that we can generate the closure of a boolean relation (a set of boolean vectors) by polymorphisms  with a polynomial delay. Therefore we can compute with polynomial delay the closure of a family of sets by any set of ``set operations'': union, intersection, symmetric difference, subsets, supersets $\dots$). To do so, we study the $\textsc{Membership}_{\mathcal{F}}$ problem: for a set of operations $\mathcal{F}$, decide whether an element belongs to the closure by $\mathcal{F}$ of a family of elements. In the boolean case, we prove that $\textsc{Membership}_{\mathcal{F}}$ is in $\P$ for any set of boolean operations $\mathcal{F}$. When the input vectors are over a domain larger than two elements, we prove that the generic enumeration method fails, since $\textsc{Membership}_{\mathcal{F}}$ is $\NP$-hard for some $\mathcal{F}$. We also study the problem of generating minimal or maximal elements of closures and prove that some of them are related to well known enumeration problems such as the enumeration of the circuits of a matroid or the enumeration of maximal independent sets of a hypergraph.
\end{abstract}

\section{Introduction}

An enumeration problem is the task of listing all elements of a set without redundancies.
Since the set to generate may be of exponential cardinality in the size of the input, the complexity of enumeration problems generally are  measured in term of the input size \emph{and} output size. Enumeration algorithms whose complexity depends both on the input and the output are called \emph{output sensitive} and when the dependency is polynomial in the sum of both measures, they are called \emph{output polynomial}. Another more precise notion of complexity, is the \emph{delay} which measures the time between the production of two consecutive solutions. We are especially interested in problems solvable with a  delay polynomial in the input size, which are considered as the \emph{tractable problems} in enumeration complexity. For instance, the spanning trees, the simple cycles~\cite{read1975bounds} or the maximal independent sets~\cite{JohnsonP88} of a graph can be enumerated with polynomial delay.

If we allow the delay to grow during the algorithm, we obtain \emph{polynomial incremental time} algorithms: the first $k$ solutions can be enumerated in a time polynomial in $k$ and in the size of the input. Many problems which can be solved in polynomial incremental time have the following form: given a set of elements and a polynomial time function acting on tuples of elements, produce the closure of the set by the function. For instance the following problems can be solved in polynomial incremental time: the enumeration of the circuits of a matroid~\cite{khachiyan2005complexity} and the enumeration of the vertices of restricted polyhedra~\cite{elbassioni2018enumerating}.

In this article, we try to understand when saturation problems, which by definition can be solved in polynomial incremental time, can be in fact solved by a polynomial delay algorithm. We would also like to get rid of the exponential space which is necessary in the enumeration algorithm by saturation. To tackle this question we need to restrict the set of saturation operations we consider. An element will be a vector over some finite set and in most of this article, we require the saturation operations to act \emph{coefficient-wise} and in the same way on each coefficient. We prove that, when the vector is over the boolean domain, every possible saturation can be computed with polynomial delay. To do that we study a decision version of our problem, denoted by $\Mem_{\cF}$: given a vector $v$ and a set of vectors $\ccS$ decide whether $v$ belongs to the closure of $\ccS$ by the operations of $\cF$. We prove that $\Mem_{\cF} \in \P$ for all set of operations $\cF$ over the boolean domain.

When the domain is boolean, the problem can be reformulated in terms of set systems or hypergraphs. It is equivalent to the generation of the smallest hypergraph which contains a given hypergraph and which is closed under some operations.
We show how to efficiently compute the closure of a hypergraph by any family of set operations (any operation that is the composition of unions, intersections and complementations) on the hyperedges.
Some particular cases were already known previously. For instance, it is well
known that if a family of subsets ordered by inclusion forms a lattice, then the
set of so called \emph{meet irreducible elements} is a generator with respect to the intersection operation (all other sets can be expressed as the intersections of some meet irreducible elements).
In general, knowing how to compute a closure may serve as a good tool to design other enumeration algorithms. One only has to express an enumeration problem as the closure of some sufficiently small and easy to compute set of elements (called a generator) and then to apply the algorithms presented in this article.

The closure computation is also related to constraint satisfaction problems (CSP). Indeed, the set of vectors can be seen as a relation $R$ and the problem of generating its closure by some operations $\cF$ is equivalent to the computation of the smallest relation $R'$ containing $R$ and for which all functions of $\cF$ are polymorphisms of $R'$.
There are several works related to the enumeration in the context of CSP. They deal with the enumeration of solutions of a CSP with polynomial delay~\cite{creignou1997generating,bulatov2012enumerating}.
The simplest such result~\cite{creignou1997generating} states that in the boolean case, there is a polynomial delay algorithm if and only if the constraint language is Horn, anti-Horn, bijunctive or affine.
While those works deal with the enumeration of solutions of CSPs, this paper focuses on finding the closure of relations. However, we use tools from CSPs such as  Post's lattice~\cite{post1941two}, used by Schaefer in his seminal paper~\cite{schaefer1978complexity}, and the Baker-Pixley theorem~\cite{baker1975polynomial}.

The main theorem of this article settles the complexity of a whole family of decision problems and implies, quite surprisingly, that the backtrack search is enough to obtain a polynomial delay algorithm to enumerate any closure of boolean vectors. For all these enumeration problems, compared to the naive saturation algorithm, our method has a better time complexity (even from a practical point of view) and a better space complexity (polynomial rather than exponential). Moreover, besides the generic enumeration algorithm, we give for each closure rule an algorithm with the best possible complexity. In doing so, we illustrate several classical methods used to enumerate objects such as amortized backtrack search, hill climbing, Gray code $\dots$
It is interesting to note that most algorithms we provide have a delay polynomial in the maximum size of a solution we generate.
However some are polynomial in the instance which can be much larger than the size of a solution, for instance the algorithm computing the closure of sets by intersection as explained in Section~\ref{sec:monotone}. In that case, we provide a reduction from the problem of generating the assignments of a monotone $DNF$ formula which suggests that the delay must depend on the instance size.

In a second part of the article, we generalize the set of operators used to compute a closure. The aim is to
capture more interesting problems and to better understand the difference between polynomial incremental time and polynomial delay.
The first generalization is to consider larger domains for the input vectors. In that setting, the problem $\Mem_{\cF}$ is $\NP$-complete for some $\cF$ and we are not able to settle the question in general.
The second generalization is to allow the operators to act differently on each coefficient. We prove that $\Mem_{\cF} \in \P$ when the operators are extremely simple: they change only a single coefficient of the vector and leave the rest unchanged. However, allowing the operators to act on three coefficients is already enough to make $\Mem_{\cF}$ $\NP$~complete.

It is classical in enumeration to try to reduce the number of generated objects, which can be very large, by requiring additional properties. The most classical properties are the maximality or minimality for inclusion, since this is often compatible with other notion of optimality of solutions. Hence algorithms are known for maximal matchings~\cite{uno1997algorithms}, maximal cliques~\cite{JohnsonP88}, minimal transversals~\cite{eiter1995identifying} $\dots$
Therefore, as a third generalization we propose to enumerate only the minimal or maximal elements of the closures. In these settings, the problems are not automatically in polynomial incremental time, since no saturation algorithm generate the maximal or minimal elements only. We prove that either these problems have a polynomial delay algorithm or that they are equivalent to well known problems such as the generation of the circuits of a binary matroid or of the maximal independent sets of a hypergaph which can be solved in polynomial incremental time.

This article is a long version of our previous work see~\cite{mary2016efficient}.
The proofs have been improved and are more detailed, the complexity of several enumeration algorithms
have been improved, sometimes using new techniques and more lower bounds are provided. The subject of the last two sections (non uniform operators and maximal sets) have been proposed in~\cite{wepa2016} and are new for the most part.

\subsection{Organization of the paper}

In Section \ref{sec:preliminary}, we define enumeration complexity, the closure problem $\EnumClo$ and the backtrack search.
In Section \ref{sec:boolean}, we use Post's lattice, restricted through suitable reductions between clones, to
prove that $\Mem_{\cF}$ is polynomial for all set of binary operations $\cF$.
It turns out that there are only a few types of closure operations to study: the monotone operations (Section~\ref{sec:monotone}), the addition over $\mathbb{F}_2$ (Section~\ref{sec:algebra}), the set of all operations (Section~\ref{sec:all}), two infinite hierarchies related to the majority function (Section~\ref{sec:threshold}) and the limit cases of the previous hierarchies (Section~\ref{sec:limit}). For all these closure operations we give for  $\EnumClo$ both a generic backtrack algorithm in polynomial delay and also an ad hoc efficient enumeration algorithm, sometimes with a conditional lower bound matching its complexity.
In Section~\ref{sec:largerdomains}, we give polynomial delay algorithms for three classes of closure operations over any domain and prove that the backtrack search we use in the boolean case fails.
In Section~\ref{sec:nonuniform}, we enrich the set of possible operations by operators acting on a single coefficient, and prove that the problem $\EnumClo$ is still in polynomial delay for any set of such operators.
Finally, in Section~\ref{sec:max} we try to enumerate only the minimal or maximal elements of a closure and we show it is either trivially in polynomial delay or equivalent to famous enumeration problems for which no polynomial delay algorithm are known such as enumerating the circuits of a binary matroid or the maximal independent sets of a $k$ uniform hypergraph.

\section{Preliminary}\label{sec:preliminary}

Given $n\in \mathbb{N}$, $[n]$ denotes the set $\{1,...,n\}$.
For a set $D$ and a vector $v\in D^{n}$, we denote by $v_i$ the $i^{\text{th}}$ coordinate of $v$. Let $i,j\in [n]$, we denote by $v_{i,j}$ the vector $(v_i,v_j)$. More generally,  for a subset $I=\{i_1,...,i_k\}$ of $[n]$ with $i_1<...<i_k$ we denote by $v_I$ the vector $(v_{i_1},..., v_{i_k})$. Let $\ccS$ be a set  of vectors. We denote by $\ccS_I$ the set $\{ v_I \mid v \in \ccS\}$, we say that $v_i$ is a \emph{projection} of $v$ and $\ccS_I$ is a projection of $\ccS$. The characteristic vector $v$ of a subset $E$ of $[n]$, is the vector in $\{0,1\}^{n}$ such that $v_i=1$ if and only if $i\in X$.

\subsection{Complexity}

In this section, we recall basic definitions about enumeration problems and their complexity, for further details and examples see~\cite{phd_strozecki}.

 Let $\Sigma$ be some finite alphabet. An \emph{enumeration problem} is a function $A$ from $\Sigma^*$ to $\mathcal{P}(\Sigma^*)$.
 That is to each input word, $A$ associates a set of words. An algorithm which solves
 the enumeration problem $A$ takes any input word $w$ and produces the set $A(w)$ word by word and \emph{without redundancies}.
 We always require the sets $A(w)$ to be finite. We may also ask $A(w)$ to contain only words of polynomial size in the size of $w$ and  that one can test whether an element belongs to $A(w)$  in polynomial time. If those two conditions hold, the problem is in the class $\EnumP$ which is the counterpart of $\NP$ for enumeration. Because of this relationship to $\NP$, we often call solutions the elements we enumerate.

The computational model is the random access machine model (RAM) with addition, subtraction
and multiplication as its basic arithmetic operations. We have additional output registers,
and when a special OUTPUT instruction is executed, the contents of the output registers is outputted and considered as an element of the outputted set. The RAM model is chosen over Turing machines, because it allows to use data structures more efficiently which impacts the complexity measure used in enumeration.

The \emph{delay} is the time between the productions of two consecutive solutions.
Usually we want to bound the delay of an algorithm for all pairs of consecutive solutions
and for all inputs of the same size. If this delay is polynomial in the size of the input, then
we say that the algorithm is in \emph{polynomial delay} and the problem is in the class $\DelayP$.
If the delay is polynomial in the input and the number of already generated solutions, we say that the algorithm is in
\emph{incremental delay} and the problem is in the class $\IncP$. By definition we have $\DelayP \subset \IncP$.
Moreover, if we assume the exponential time hypothesis, $(\DelayP \cap \EnumP) \neq (\IncP \cap \EnumP)$~\cite{capelli2018incremental}.
In practice problems in $\DelayP$ are considered to be tractable, because the total time is linear in the number of solutions generated and because the solutions can be regularly generated. Moreover, these problems often enjoy algorithms with polynomial space
in the size of the input, which is not the case for most polynomial incremental algorithms. Note that in a polynomial delay algorithm we allow a \emph{polynomial precomputation step}, usually to set up data structures, which is not taken into account in the delay. This is why we can have a delay \emph{smaller than the size of the input}.

For $S$ a set of boolean formulas, we define the enumeration problem $\enumSat{S}$ as the function which associates to a formula of $S$ the set of its models. One classical example is when $S$ is the set of formulas in conjunctive normal form denoted by $CNF$, and the associated problem is $\enumSat{CNF}$ or equivalently $\enumSat{SAT}$.
In this article, we consider other families of boolean formulas: the Horn formulas $HORN$, the bijunctive formulas $2SAT$, the formulas in disjunctive normal form $DNF$ and the monotone formulas in disjunctive normal form $MONDNF$.

To compare two enumeration problems, we use often the notion of \emph{parsimonious reduction} from counting complexity as it is adapted to enumeration complexity contrarily to Cook or Karp reductions. Given two enumeration problems $A$ and $B$, we say that there is a parsimonious reduction from $A$ to $B$ if there are two polynomial time functions $f$ and $g$ such that for all instances $x$ of $A$, $g$ maps bijectively $B(f(x))$ to $A(x)$. The problem $\enumSat{CNF}$ is $\EnumP$ complete for parsimonious reductions by adapting the classic proof of Cook~\cite{cook1971complexity}. In this article, we need a slightly more general notion of reduction,
such that a solution of $B$ may yield several solutions of $A$ as long as they can be efficiently enumerated, as introduced in~\cite{mary2013enumeration}.

\begin{definition}
 Let $A$ and $B$ be two enumeration problems. We say that there is a \emph{polynomial delay reduction} from $A$ to $B$
 if there is a polynomial time function $f$ and $C$ an enumeration problem in $\DelayP$ such that for all instances $x$ of $A$, we have:
 \begin{itemize}
  \item  if $y \neq z$ and $y,z \in B(f(x))$ then $C(xy) \cap C(xz) = \emptyset$
  \item  $A(x) = \displaystyle{\bigcup_{ y \in B(f(x))} C(xy)}$
 \end{itemize}
 where $xy$ is the concatenation of $x$ and $y$.
\end{definition}

When a problem $A$ reduces to $B$ and $B$ reduces to $A$, we say that $A$ and $B$ are \emph{equivalent}.
Remark that the class $\DelayP$ is closed under polynomial delay reduction. We will use tight reduction, where
the size of $f(x)$ is closely related to $x$ and $C$ has a linear delay to obtain lower and upper bounds on the complexity
of several enumeration problems.

We now explain a very classical and natural enumeration method called the \emph{Backtrack Search} (sometimes also called the \emph{flashlight method}) used in many previous articles~\cite{read1975bounds,strozecki2013enumerating}. It can be used to solve all auto-reducible problems, in particular $\enumSat{C}$ when $C$ is a set of formulas solvable in polynomial time~\cite{creignou1997generating} and closed by partial assignment of variables. We represent the solutions we want to enumerate as vectors of size $n$ and coefficients in $D$. In practice solutions are often subsets of $[n]$ which means that $D = \{0,1\}$ and the vector is the characteristic vector of the subset.

The  enumeration algorithm is a depth first traversal of a tree whose nodes are partial solutions.
The nodes of the tree are all vectors $v$ of size $l$, for all $l \leq n$, such that
$v = w_{[l]}$ and $w$ is a solution. The children of the node $v$ are
the vectors of size $l+1$, which restricted to $[l]$ are equal to $v$.
The leaves of this tree are the solutions of our problem, therefore a depth first traversal visits all
leaves and thus outputs all solutions. Since a branch of the tree is of size $n$, it is enough to find the children of a node in a time polynomial in $n$ to obtain a polynomial delay. The delay also depends linearly on $|D|$, but in the rest of the paper $|D|$ will be constant.
Therefore the backtrack search is in polynomial delay if and only if the the following \emph{decision} problem is in $\P$:
given $v$ of size $l$ is there $w$ a solution such that $v = w_{[l]}$? This problem is called the \emph{extension problem} associated to the enumeration problem.

\begin{proposition}\label{prop:partialsolutions}
 Given an enumeration problem $A$, such that for all $w$, $A(w)$ is a set of vectors of size $n$ and coefficients in $D$,
 with $n$ and $|D|$ polynomially related to $|w|$. If the extension problem associated to $A$ is in $\P$, then $A$ is in $\DelayP$.
\end{proposition}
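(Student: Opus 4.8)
The plan is to formalize the backtrack search described just above the statement and then bound its delay. On input $x$ I would run a depth-first traversal of the tree $T$ of partial solutions: its nodes are the vectors $v\in D^l$ with $0\le l\le n$ such that $v=w_{[l]}$ for some solution $w\in A(x)$, the root is the empty vector, and the leaves are exactly the elements of $A(x)$. The only nontrivial operation a DFS requires is, given a node $v$ of length $l<n$, to compute its children. For this I would try each of the $|D|$ values $d\in D$, form the vector $v'$ of length $l+1$ whose restriction to $[l]$ is $v$ and whose last coordinate is $d$, and call the oracle for the extension problem associated to $A$ to decide whether $v'$ is the prefix of some solution; by definition $v'$ is a child of $v$ precisely when the oracle answers yes.

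Next I would argue correctness. The leaves of $T$ are by construction exactly the vectors of $A(x)$, a depth-first traversal visits every leaf once, and no redundancy can occur since distinct leaves are distinct vectors, so the algorithm outputs each solution exactly once. The key property supplied by the extension oracle is that every node actually entered by the search has at least one descendant leaf: if $v=w_{[l]}$ with $l<n$, then the length-$(l+1)$ prefix $w_{[l+1]}$ passes the extension test and is a child of $v$, so no internal node is a dead end. It is this pruning that stops the search from descending into subtrees that contain no solution, and it is what distinguishes the extension problem from a mere recognition test for complete solutions.

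Finally I would analyze the delay, which is the crux of the proof. Between two consecutive outputs the DFS backtracks from the current leaf up to some ancestor and then descends to the next leaf, hence it follows a path with at most $2n$ edges; at each node on this path it performs at most $|D|$ oracle calls, each costing time $p(|x|)$ for a polynomial $p$ bounding the extension problem. The delay is therefore $O(n\cdot|D|\cdot p(|x|))$, and the same estimate covers the initial descent to the first solution (which may alternatively be charged to the polynomial precomputation) and the phase after the last solution. Since $n$ and $|D|$ are polynomially related to $|x|$, this delay is polynomial in $|x|$, giving $A\in\DelayP$; moreover only the current root-to-node path, together with the last value of $D$ tried at each of its nodes, needs to be stored, so the algorithm uses polynomial space. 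The main obstacle, and the only place where membership in $\DelayP$ rather than mere output polynomiality is at stake, is exactly this step: one must observe that the pruning by the extension oracle bounds the inter-solution traversal by the height $n$ of the tree instead of by the total number of explored nodes.
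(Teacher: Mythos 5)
Your proof is correct and follows essentially the same route as the paper, which justifies this proposition by the backtrack (flashlight) search described immediately before it: a depth-first traversal of the tree of prefixes of solutions, with children computed by calls to the extension oracle, and the delay bounded by the $O(n\cdot|D|)$ oracle calls along the path between consecutive leaves. Your write-up merely makes explicit the correctness and delay analysis that the paper leaves informal, so there is nothing to add.
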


We will see in the next part, that the complexity of solving the extension problem can be amortized
over a whole branch of the tree, since we solve it many times, using well chosen data structures.

There is a second classical enumeration method to design a polynomial delay algorithm, named the \emph{supergraph} method. The idea is to organize the solutions (and not the partial solutions) as a strongly connected digraph instead of a tree, and to traverse this supergraph. For that we should be able to visit all the successors of a node in polynomial time. To avoid the storage of the nodes of the supergraph, the \emph{reverse search} method is often used~\cite{avis1996reverse}. It consists on defining a canonical parent computable in polynomial time for each node and thus defining a spanning arborescence of the supergraph.  This method may have a better delay than the backtrack search, because the traversal goes over solutions only.

\subsection{Closure of families by set operations}
%
We fix a finite domain $D$. Given a $t$-ary operation $f$  (a function from $D^t$ to $D$), $f$ can be naturally extended to a $t$-ary operation over vectors of the same size. Let $(v^{1},\dots v^{t})$ be a $t$-uple of vectors of size $n$, $f$ acts coefficient-wise on it, that is for all $i\leq n$, $f(v^{1},\dots, v^{t})_i = f(v^{1}_i,\dots, v^{t}_i)$.

\begin{definition}
Let $\cF$ be a finite set of operations over $D$.
Let $\ccS$ be a set of vectors over $D$.
Let $\cF^i(\ccS) = \{f(v_1,\dots, v_t) \mid v_1,\dots, v_t \in \cF^{i-1}(S) \text{ and } f\in \cF \} \cup \cF^{i-1}(\ccS)$ and $\cF^0(\ccS) = \ccS$.
The closure of $\ccS$ by $\cF$ is $\Cl_\cF(\ccS) = \displaystyle{\cup_i \cF^i(\ccS)}$.
\end{definition}

Notice that $\Cl_{\cF}(\ccS)$ is also the smallest set which contains $\ccS$ and which is closed by the operations of $\cF$.
The set $\Cl_\cF(\ccS)$ is invariant under the operations of $\cF$: these operations are called \emph{polymorphisms} of the set $\Cl_\cF(\ccS)$, a notion which comes from universal algebra.

As an illustration, assume that $D = \{0,1\}$ and that  $\cF = \{ \vee \}$.
Then the elements of $\ccS$ can be seen as subsets of $[n]$ (each vector of size $n$ is the characteristic vector of a subset of $[n]$) and $\Cl_{\{\vee\}}(\ccS)$ is the closure by union of all
sets in $S$. Let $\ccS = \{ \{1,2,4\},  \{2,3\},  \{1,3\} \}$ then
$$\Cl_{\{\vee\}}(\ccS)= \{ \{1,2,4\},  \{1,2,3,4\}, \{2,3\},  \{1,3\},  \{1,2,3\}\}.$$ Note that $\Cl_{\{\vee\}}(\ccS)$ is indeed closed by union, that is $\vee$ is a polymorphism of  $\Cl_{\{\vee\}}(\ccS)$.

The problem we try to solve in this article, for all set of operations $\cF$ over $D$,
is $\EnumClo_\cF$: given a set of vectors $\ccS$ list the elements of $\Cl_\cF(\ccS)$.
We always denote the size of the vectors of $\ccS$ by $n$ and the cardinality of $\ccS$
by $m$. A naive saturation algorithm solves $\EnumClo_\cF$: the elements of $\cF$ are applied to
all tuples of $\ccS$ and any new element is added to $\ccS$, the algorithm stops when no new element can be produced.
If the largest arity of an operation in $\cF$ is $t$ then this algorithm adds a new element to a set of $k$ elements (or stops)
in time $O(k^tn)$, therefore it is in $\IncP$.

The aim of this article is to find better algorithms to solve $\EnumClo_\cF$, more specifically polynomial
delay algorithms with polynomial memory. To do that, let us introduce the extension problem associated to a set of operations $\cF$, denoted by $\ExtClo_\cF$: given $\ccS$ a set of vectors of size $n$,  and a vector $v$ of size $l \leq n$, is there a vector $v' \in \Cl_\cF(\ccS)$ such that $v'_{[l]} = v$. Solving this problem in polynomial time yields a polynomial delay algorithm using backtrack search by Proposition~\ref{prop:partialsolutions}. In fact, we need only to solve the simpler membership problem, denoted by $\Mem_\cF$:
given $\ccS$ a set of vectors of size $n$,  and a vector $v$ of size $n$ does $v$ belongs to $\Cl_\cF(\ccS)$?

\begin{proposition}\label{prop:decisionenum}
 If $\Mem_\cF \in \P$ then $\EnumClo_\cF \in \DelayP$.
\end{proposition}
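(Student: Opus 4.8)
The plan is to reduce the membership problem to the \emph{extension} problem and then invoke Proposition~\ref{prop:partialsolutions}. That proposition yields a polynomial delay algorithm by backtrack search as soon as the associated extension problem is in $\P$; here the domain $D$ is fixed (so $|D|$ is constant) and the vector length $n$ is bounded by the input size, so the hypotheses of Proposition~\ref{prop:partialsolutions} are met. Thus it suffices to prove that $\ExtClo_\cF \in \P$ under the assumption $\Mem_\cF \in \P$. Recall that $\ExtClo_\cF$ asks, given $\ccS$ and a partial vector $v$ of size $l\le n$, whether some $v'\in\Cl_\cF(\ccS)$ satisfies $v'_{[l]}=v$.

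The crucial observation is that, because every operation of $\cF$ acts coefficient-wise and in the same manner on each coordinate, projection onto a set of coordinates commutes with the closure operator. First I would prove, for every $I\subseteq[n]$, the identity
$$\Cl_\cF(\ccS)_I = \Cl_\cF(\ccS_I).$$
Both inclusions follow by a routine induction on the stages $\cF^i$. For the left-to-right direction one uses that if $w=f(v^1,\dots,v^t)$ with the $v^j$ in $\cF^{i-1}(\ccS)$, then $w_I = f(v^1,\dots,v^t)_I = f(v^1_I,\dots,v^t_I)$, so the projection of anything built over $\ccS$ is built over $\ccS_I$ by the same operations. For the converse, given $u=f(u^1,\dots,u^t)\in\cF^i(\ccS_I)$, the induction hypothesis supplies preimages $w^j\in\cF^{i-1}(\ccS)$ with $(w^j)_I=u^j$; then $f(w^1,\dots,w^t)\in\cF^i(\ccS)$ projects to exactly $u$, so $u\in\cF^i(\ccS)_I$.

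Granting this identity for $I=[l]$, the extension problem becomes transparent: there is $v'\in\Cl_\cF(\ccS)$ with $v'_{[l]}=v$ if and only if $v\in\Cl_\cF(\ccS)_{[l]}=\Cl_\cF(\ccS_{[l]})$. Hence $\ExtClo_\cF$ on the instance $(\ccS,v)$ is equivalent to the membership question ``$v\in\Cl_\cF(\ccS_{[l]})$''. Since the projected family $\ccS_{[l]}$ is computable from $\ccS$ in polynomial time and membership is in $\P$ by hypothesis, we conclude $\ExtClo_\cF\in\P$, and Proposition~\ref{prop:partialsolutions} finishes the argument. The real content is the commutation identity, which is precisely where the restriction to coefficient-wise operators is exploited; I do not expect a genuine obstacle in either induction, and the value of the statement lies in the observation that the \emph{weaker} membership oracle already suffices, because extension on the first $l$ coordinates is nothing more than membership for the projected instance.
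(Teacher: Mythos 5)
Your proof is correct and follows exactly the paper's own argument: the paper likewise reduces $\ExtClo_\cF$ to $\Mem_\cF$ via the equivalence ``$\exists v'\in\Cl_\cF(\ccS)$ with $v'_{[l]}=v$ iff $v\in\Cl_\cF(\ccS_{[l]})$'' (justified by the coordinate-wise action of $\cF$, which is your commutation identity $\Cl_\cF(\ccS)_I=\Cl_\cF(\ccS_I)$) and then invokes Proposition~\ref{prop:partialsolutions}. The only difference is that you spell out the two inductions behind the commutation identity, which the paper leaves implicit.
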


\begin{proof}
$\ExtClo_\cF$ can be reduced to $\Mem_\cF$.
Indeed, given a vector $v$ of size $l$, because the operations of $\cF$ act coordinate-wise,
the two following predicates are equivalent:
\begin{itemize}
 \item  $\exists v' \in \Cl_\cF(\ccS)$ such that $v'_{[l]} = v$
 \item $v \in  \Cl_\cF(\ccS_{[l]})$
\end{itemize}
Therefore if $\Mem_\cF \in \P$ then we have also  $\ExtClo_\cF \in \P$.
 We use Proposition~\ref{prop:partialsolutions} to conclude.
\end{proof}

We have introduced an infinite family of problems, whose complexity we want to determine.
Several families of operations may always produce the same closure. To deal with that, we need to introduce the notion of functional clone. We write clone instead of functional clone in the rest of the paper since there is no ambiguity in our context.

\begin{definition}
 Let $\cF$ be a finite set of operations over $D$, the functional clone generated by
 $\cF$, denoted by $<\cF>$, is the set of operations obtained by any composition of the operations of $\cF$ and of the projections $\pi_k^n : D^n \to D$
 defined by $\pi_k^n(x_1,\dots,x_n) = x_k$.
\end{definition}

This notion is useful, because two sets of functions which generate the same clone applied to the same set produce the same closure.

\begin{lemma}
For all set of operations $\cF$ and all set of vectors $\ccS$, $\Cl_\cF(\ccS) = \Cl_{<\cF>}(\ccS)$.
\end{lemma}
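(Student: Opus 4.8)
The plan is to prove the two inclusions separately, relying on the characterization (stated just after the definition of the closure) of $\Cl_\cF(\ccS)$ as the \emph{smallest} set which contains $\ccS$ and is closed under the operations of $\cF$.

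First I would establish $\Cl_\cF(\ccS) \subseteq \Cl_{<\cF>}(\ccS)$. Every operation of $\cF$ belongs to $<\cF>$ (it is obtained as a trivial composition), so any set closed under all operations of $<\cF>$ is in particular closed under those of $\cF$. Thus $\Cl_{<\cF>}(\ccS)$ is a set that contains $\ccS$ and is closed under $\cF$; by minimality of $\Cl_\cF(\ccS)$ among such sets, the inclusion follows.

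The reverse inclusion $\Cl_{<\cF>}(\ccS) \subseteq \Cl_\cF(\ccS)$ is the substantial part. It suffices to show that $\Cl_\cF(\ccS)$ is closed under every operation $g \in <\cF>$, since the minimality of $\Cl_{<\cF>}(\ccS)$ then yields the inclusion. I would prove this by structural induction on the term witnessing that $g$ belongs to $<\cF>$. The base cases are the operations of $\cF$, for which closure holds by the very definition of $\Cl_\cF(\ccS)$, and the projections $\pi_k^n$, which are immediate since $\pi_k^n(v_1,\dots,v_n) = v_k \in \Cl_\cF(\ccS)$ whenever $v_1,\dots,v_n \in \Cl_\cF(\ccS)$. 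For the inductive step, write $g = f(g_1,\dots,g_t)$ with $f \in \cF$ and $g_1,\dots,g_t \in <\cF>$ already known to preserve $\Cl_\cF(\ccS)$. Given vectors $w_1,\dots,w_s$ in $\Cl_\cF(\ccS)$, each $g_j(w_1,\dots,w_s)$ lies in $\Cl_\cF(\ccS)$ by the induction hypothesis, and then applying $f$ keeps the result inside $\Cl_\cF(\ccS)$ because this set is closed under $f$. Hence $\Cl_\cF(\ccS)$ is closed under $g$, which completes the induction.

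The one point requiring care is that all these operations act coordinate-wise on vectors, exactly as $f$ does: since a composition of coordinate-wise maps is again coordinate-wise, the inductive argument carried out on $D$ transfers verbatim to the extension of the operations to vectors over $D$. I expect this compatibility check, rather than any conceptual difficulty, to be the main thing to verify carefully.
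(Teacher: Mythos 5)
Your proof is correct. Note that the paper states this lemma without any proof at all (it is treated as a folklore observation following the definition of a clone), so there is nothing to compare against; your double-inclusion argument, with the structural induction showing that $\Cl_\cF(\ccS)$ is preserved by every composition of operations of $\cF$ and projections, is exactly the standard argument that fills this gap. One negligible omission: in the inductive step you should also allow the outermost symbol of the term to be a projection, i.e.\ $g = \pi_k^s(g_1,\dots,g_s) = g_k$, but this case is immediate from the induction hypothesis, and your remark about coordinate-wise extensions commuting with composition is precisely the compatibility check needed.
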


The number of clones over $D$ is infinite even when $D$ is the boolean domain (of size $2$).
However, in this case the clones form a countable lattice, called  Post's lattice~\cite{post1941two}.
Moreover there is a \emph{finite} number of well described clones plus a few number of infinite families of very regular clones.

\section{The Boolean Domain}\label{sec:boolean}

In this part, we prove the main theorem on the complexity of $\Mem_\cF$, when the domain is boolean.
An instance of one such problem, denoted  by $\ccS$, will be indifferently seen as a set of vectors of size $n$ or a set of subsets of $[n]$.

\begin{theorem}\label{th:main}
 Let $\cF$ be any fixed finite set of operations over the boolean domain, then $\Mem_\cF \in \P$ and $\EnumClo_\cF \in \DelayP$.
\end{theorem}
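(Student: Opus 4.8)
The plan is to reduce the general problem to a finite case analysis using Post's lattice. By the preceding lemma, $\Cl_\cF(\ccS) = \Cl_{<\cF>}(\ccS)$, so the complexity of $\Mem_\cF$ depends only on the clone $<\cF>$ generated by $\cF$. Since over the boolean domain the clones form Post's lattice, it suffices to settle $\Mem_\cF \in \P$ for a \emph{representative} set of operations generating each clone. Moreover, by Proposition~\ref{prop:decisionenum}, once we know $\Mem_\cF \in \P$ we immediately get $\EnumClo_\cF \in \DelayP$, so the entire theorem reduces to proving the membership problem is polynomial for every boolean clone.

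\medskip

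The main structural idea I would exploit is the \emph{Baker--Pixley theorem}, alluded to in the introduction. Baker--Pixley says that if a clone contains a \emph{near-unanimity} operation of arity $k$ (for instance the majority function when $k=3$), then any relation invariant under that clone is determined by its projections onto $k-1$ coordinates; equivalently, $v \in \Cl_\cF(\ccS)$ if and only if for every $(k-1)$-subset $J$ of $[n]$, the projection $v_J$ lies in $\ccS_J$ (which equals $\Cl_\cF(\ccS)_J$). This gives an immediate polynomial-time membership test: enumerate the $O(n^{k-1})$ projections and check each against the projected set, a decomposition that is tractable precisely because the local-to-global principle reduces membership to a bounded-arity check. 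So for all clones containing a near-unanimity operation, $\Mem_\cF \in \P$ by this projection criterion.

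\medskip

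It remains to handle the clones that do \emph{not} contain a near-unanimity operation. Here I would walk through Post's lattice and, rather than treat each of the infinitely many clones separately, use \emph{reductions between clones} (as the introduction and Section~\ref{sec:boolean} header suggest) to collapse the analysis into a handful of cases: the monotone operations ($\vee$, $\wedge$), the affine operations (addition over $\mathbb{F}_2$), the clone of all operations, the two infinite majority-type hierarchies, and the limit cases. For each representative I would give a direct polynomial membership algorithm. For monotone clones the closure by $\vee$ or $\wedge$ admits a simple criterion (a set belongs to the union-closure iff it is a union of members, testable greedily). For the affine case, $\Cl_\cF(\ccS)$ is an affine subspace (a coset of the $\mathbb{F}_2$-span of differences of vectors in $\ccS$), so membership is a linear-algebra question solvable by Gaussian elimination in polynomial time. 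The clone of all operations and the bijunctive/majority hierarchies are covered either by Baker--Pixley or by an analogous explicit criterion.

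\medskip

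The main obstacle I anticipate is twofold. First, establishing the \emph{reductions between clones} that make the case analysis finite: one must argue that it suffices to check polynomiality for one generating set per clone, and organize these so that the infinite majority hierarchies (Section~\ref{sec:threshold}) and their limits (Section~\ref{sec:limit}) are handled uniformly rather than clone-by-clone. Second, the clones \emph{without} a near-unanimity operation and \emph{without} an obvious algebraic structure are the delicate ones; for those the straightforward projection criterion fails, and I expect the real work to lie in producing a tailored polynomial-time decision procedure for each remaining representative, which is exactly why the paper devotes separate sections to the monotone, algebraic, threshold, and limit cases.
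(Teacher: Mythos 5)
Your overall strategy---pass to clones via Post's lattice, use Proposition~\ref{prop:decisionenum} to turn $\Mem_\cF \in \P$ into $\EnumClo_\cF \in \DelayP$, apply Baker--Pixley to near-unanimity clones, and give greedy and linear-algebra criteria for the $\wedge$/$\vee$ and affine cases---is exactly the paper's. The genuine gap is your last paragraph: the clones $S_{10} = <x \wedge (y \vee z)>$ and $S_{12} = <x \wedge (y \to z)>$, the limits of the two infinite hierarchies, fall through every case you set up. They contain no near-unanimity operation (so Baker--Pixley does not apply), are not generated by $\wedge$ or $\vee$ alone, and are not affine, and you explicitly defer them to ``a tailored polynomial-time decision procedure'' without producing one. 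Since Post's lattice contains these two clones, the theorem is simply not proved for them, and they cannot be dispatched by any of the general principles you invoke. The paper closes this hole with Lemma~\ref{lemma:struc}, a structural decomposition $\Cl_{S_{10}}(\ccS) = \bigcup_{s\in \ccS} \Cl_{M_2}(\ccS_{\1(s)})$ (and an analogous decomposition of $\Cl_{S_{12}}$ into $BF$-closures of projections onto supports of elements of $\ccS$), which reduces membership for the limit clones to the already-solved $M_2$ and $BF$ cases; some such statement is indispensable here.

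There is also a concrete error in your Baker--Pixley criterion: you test $v_J \in \ccS_J$ for every $(k-1)$-subset $J$, asserting that $\ccS_J$ equals $\Cl_\cF(\ccS)_J$. That equality is a coincidence of the majority case (arity $3$ over the booleans, where every set of pairs is closed under $maj$, cf.\ Lemma~\ref{lemma:maj}); it fails for every higher near-unanimity arity. For example, with $Th_3^4$ the set $\{(1,1,0),(1,0,1),(0,1,1)\}$ is not closed, since $Th_3^4(u,u,w,w) = u \wedge w$ produces $(1,0,0)$; hence a vector whose projection is $(1,0,0)$ can lie in the closure yet be rejected by your test, so your algorithm has false negatives on the infinite hierarchies $S_{10}^k, S_{12}^k$ for $k \geq 3$. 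The correct test, as in Theorem~\ref{thm:BP} and Corollary~\ref{cor:coro}, is $v_J \in \Cl_\cF(\ccS_J)$, and the repair is that $\Cl_\cF(\ccS_J)$ is computable in constant time for fixed $k$ by running the naive saturation on the bounded-size projected instance. This error is fixable, but as written the membership procedure underlying your main case is incorrect.
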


There is also a uniform version of $\Mem_\cF$, where $\cF$ is given as input.
It turns out that this problem is $\NP$-hard as proven in Section \ref{sec:threshold}.

To prove Theorem~\ref{th:main}, we prove that $\Mem_\cF \in \P$,
for each clone $\cF$ of the Post's lattice. Since the Post's lattice contains many classes, we first show that
many of them are equivalent with regard to the complexity of $\Mem_\cF$: for some $\cF$
the problem $\Mem_\cF$ can be reduced to $\Mem_\mathcal{G}$ where $\mathcal{G}$ is another clone obtained from a simple transformation of $\cF$. This reduces the number of cases we need to consider.

To an operation $f$ we can associate its dual $\overline{f}$ defined by $\overline{f}(s_1,\dots,s_t) = \neg{f(\neg{s_1},\dots,\neg{s_t})}$. If $\cF$ is a set of operations, $\overline{\cF}$ is the set of duals of operation in $\cF$.
We denote by $\zero$ and $\one$ the constant functions which always return $0$ and $1$. By a slight abuse of notation,
we also denote by $\zero$ the all zero vector and by $\one$ the all one vector.

\begin{proposition}\label{prop:postsimpl}
The following problems can be reduced to $\Mem_{\cF}$ by a polynomial time parsimonious reduction:
\begin{enumerate}
 \item $\Mem_{\cF \cup \{\zero\}}$, $\Mem_{\cF \cup \{\one\}}$, $\Mem_{\cF \cup \{\zero,\one\}}$
 \item $\Mem_{\overline{\cF}} $
 \item $\Mem_{\cF\cup \{\neg \}} $ when $\cF = \overline{\cF}$
\end{enumerate}
\end{proposition}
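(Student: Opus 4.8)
The plan is to handle the three items separately. Each reduction will be a polynomial-time map on instances that preserves the membership answer; since $\Mem_\cF$ is a decision problem (an instance has one solution when $v \in \Cl_\cF(\ccS)$ and none otherwise), such an answer-preserving map is exactly a parsimonious reduction. The common idea throughout is that each enrichment of $\cF$ can be absorbed into the initial set $\ccS$.

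For item (1), I would observe that a constant operation, inside a closure, behaves exactly like the corresponding constant vector added to $\ccS$: since $\zero$ returns the all-zero vector on every tuple (and likewise $\one$), a straightforward induction on the stages $\cF^i$ gives
$$\Cl_{\cF \cup \{\zero\}}(\ccS) = \Cl_\cF(\ccS \cup \{\zero\}),$$
and similarly with $\one$, or with both constants. The reduction maps $(\ccS,v)$ to $(\ccS \cup \{\zero\}, v)$ (respectively adding $\one$, or both); it is polynomial and preserves the answer.

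For item (2), the tool is a duality identity obtained by unfolding the definition of the dual. Writing $\neg\ccS = \{\neg w \mid w \in \ccS\}$ for the coordinate-wise complement, the single computation $\neg g(v^1,\dots,v^t) = \overline{g}(\neg v^1,\dots,\neg v^t)$ feeds an induction on the stages and yields, for every operation set $\mathcal{G}$, the identity
$$\neg \Cl_{\mathcal{G}}(\ccS) = \Cl_{\overline{\mathcal{G}}}(\neg \ccS).$$
Taking $\mathcal{G} = \overline{\cF}$ shows that $v \in \Cl_{\overline{\cF}}(\ccS)$ if and only if $\neg v \in \Cl_\cF(\neg \ccS)$, so the reduction sends $(\ccS,v)$ to $(\neg\ccS, \neg v)$.

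Item (3) is the one needing real work, and is where I expect the only genuine difficulty. Put $T = \ccS \cup \neg\ccS$, so that $\neg T = T$. I would prove $\Cl_{\cF \cup \{\neg\}}(\ccS) = \Cl_\cF(T)$. The inclusion $\supseteq$ is immediate because $T \subseteq \Cl_{\cF \cup \{\neg\}}(\ccS)$. For $\subseteq$, since $\Cl_\cF(T)$ already contains $\ccS$ and is closed under $\cF$, it suffices to show it is closed under $\neg$; this is precisely where the hypothesis $\cF = \overline{\cF}$ enters. Applying the identity of item (2) with $\mathcal{G} = \cF$ and the set $T$,
$$\neg \Cl_\cF(T) = \Cl_{\overline{\cF}}(\neg T) = \Cl_{\overline{\cF}}(T) = \Cl_\cF(T),$$
where we used $\neg T = T$ and $\overline{\cF} = \cF$. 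Hence $\Cl_\cF(T)$ is stable under complementation, the claimed equality follows, and the reduction maps $(\ccS,v)$ to $(\ccS \cup \neg\ccS, v)$. The crux of the whole proposition is thus recognising that self-duality of $\cF$ is exactly what forces $\Cl_\cF(T)$ to be closed under negation; everything else is the bookkeeping of the two inductions.
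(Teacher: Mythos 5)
Your proof is correct and takes essentially the same route as the paper: all three items rest on the same closure identities, namely absorbing the constants into $\ccS$, the duality identity $\neg\,\Cl_{\mathcal{G}}(\ccS) = \Cl_{\overline{\mathcal{G}}}(\neg\ccS)$, and $\Cl_{\cF\cup\{\neg\}}(\ccS) = \Cl_{\cF}(\ccS\cup\neg\ccS)$. The only cosmetic difference is in item (3): you verify that identity by reusing the duality lemma together with minimality of the closure, whereas the paper pushes negations through each $f\in\cF$ via $\neg f(v_1,\dots,v_t)=\overline{f}(\neg v_1,\dots,\neg v_t)$ — the same underlying fact.
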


\begin{proof}
The reductions follow easily from these observations:
\begin{enumerate}
 \item $\Cl_{\cF\cup \{f\}}(\ccS) = \Cl_{\cF}(\ccS \cup \{f\})$ for $f=\zero$ or $f=\one$ and  $\ccS \neq \emptyset$.
 \item $\Cl_{\overline{\cF}}(\ccS) =  \overline{\Cl_{\cF}(\overline{\ccS})} $ where $\overline{\ccS}$ denotes the set of negation of vectors in $\ccS$.
 \item $\Cl_{\cF\cup \{\neg \}}(\ccS) = \Cl_{\cF}(\ccS \cup \overline{\ccS}) $  since for every $f \in \cF$, there exists $g\in \cF $ such that $\neg f(v_1,\dots,v_t) = \overline{f}(\neg{v_1},\dots,\neg{v_t})=g(\neg{v_1},\dots,\neg{v_t})$.
\end{enumerate}
\end{proof}

In Figure~\ref{fig:post}, we represent the clones which cannot be reduced to another one using Proposition~\ref{prop:postsimpl} and their bases. We also represent $BF$ the clone of all function since it is useful in several reductions presented in the rest of the section. For a modern presentation of all boolean clones, their bases and the Post's lattice see~\cite{bohler2002boolean,reith2003optimal}. Each clone correspond to a case to settle in our proof of Theorem~\ref{th:main}. We can further group clones by the algorithm used to solve their associated enumeration problems
and we represent these by ellipses enclosing similar clones in Figure~\ref{fig:post}.
Each ellipse in Figure~\ref{fig:post} corresponds to a subsection of this section.

\begin{figure}[h]
\begin{center}
\begin{tabular}{c c}

\begin{tabular}{|l|l|}\hline
Clone & Base\\\hline
$I_2$ & $\emptyset$\\\hline
$L_2$ & $x+y+z$\\\hline
$L_0$ & $x+y$\\\hline
$E_2$ & $\wedge$\\\hline
$S_{10}$ & $x \wedge( y \vee z) $ \\\hline
$S^2_{10}$ &$maj, x \wedge( y \vee z) $\\\hline
$S^k_{10}$ &$Th_k^{k+1}, \,k\geq 3$\\\hline
$S_{12}$ & $x \wedge (y \to z)$\\\hline
$S^k_{12}$ & $Th_k^{k+1}, x \wedge (y \to z)$\\\hline
$D_2$ & $maj$\\\hline
$D_1$ & $maj, x+y+z$\\\hline
$M_2$ & $\vee,\wedge$ \\\hline
$R$ & $x\,?\,y\,:\,z$\\\hline
$R_0$ & $\vee, +$\\\hline
$BF$ & $\vee, \neg$\\\hline
 \end{tabular}& \hspace{1cm}
 \parbox{5cm}{
\input{./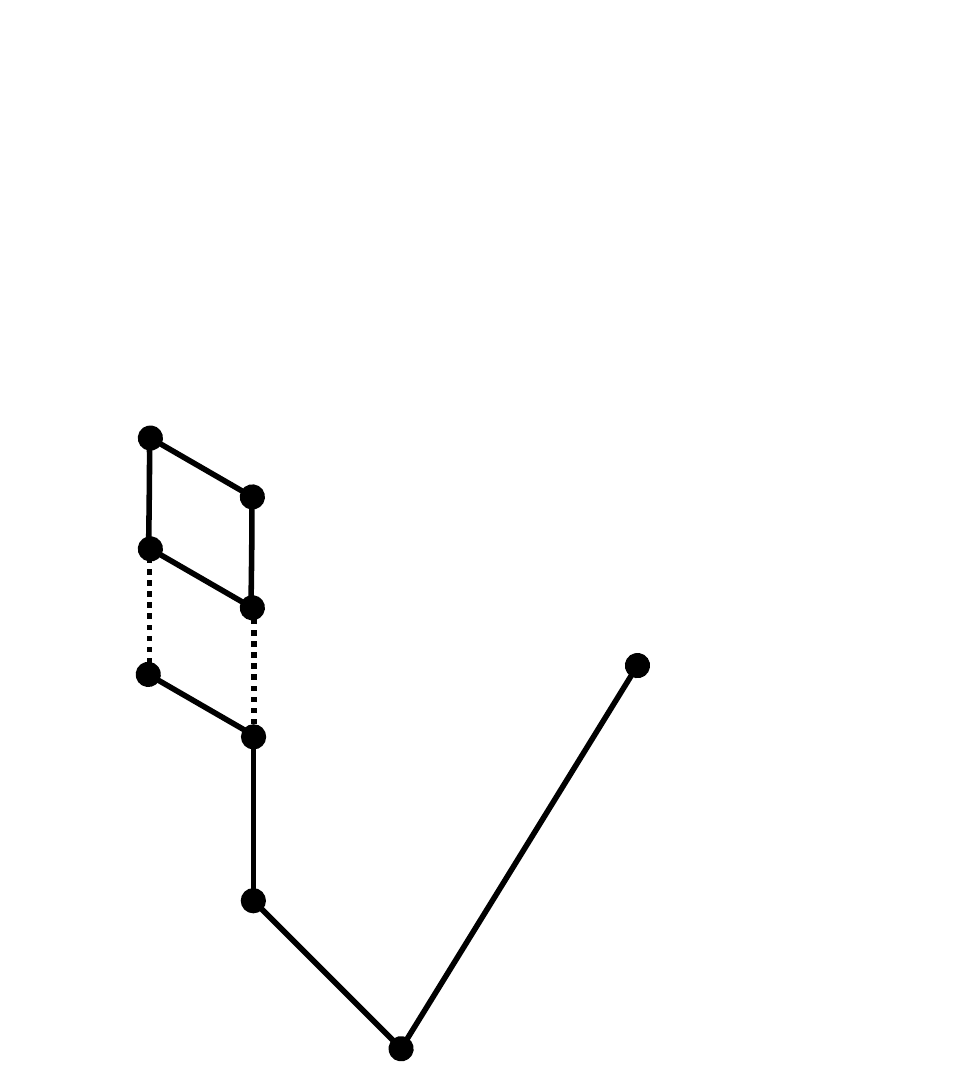_tex}}
\end{tabular}
 \caption{The reduced Post's lattice, upward edges represent inclusions of clones \label{fig:post}}
\end{center}
\end{figure}

To further simplify the enumeration problems we solve, we remark the following.
When there is an index $i$ such that the value of $v_i$ for all elements $v$ to be enumerated is determined by the value of another index or is not constrained, then we can project it out. The formal statement of this fact is given in the next proposition, and the proof is by obvious polynomial delay reductions. Note that these polynomial delay reductions decrease the size of the instance and incurs an overhead for each solution which is constant. Since these reductions can be applied at most once per index, the general overhead for each solution is bounded by the size of the solution.

\begin{proposition}\label{prop:red}
The problem $\EnumClo_{\cF}$ can be reduced by a polynomial delay reduction to
$\EnumClo_{\cF}$ where the instances $\ccS$ do not satisfy these two properties:
 \begin{itemize}
  \item  $\exists i,j \in [n], \, \exists f :\{0,1\}\rightarrow \{0,1\}, \,\forall v\in \Cl_{\cF}(\ccS), \, v_i = f(v_j)$
  \item $\exists i \in [n], \, \forall v\in \Cl_{\cF}(\ccS), \, \exists v'\in \Cl_{\cF}(\ccS), \, v_{[n]\setminus i} = v'_{[n]\setminus i}$ and $v_i = \neg v'_i$
 \end{itemize}
\end{proposition}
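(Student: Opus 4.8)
The plan is to reduce a general instance $\ccS$ to a non‑degenerate one by repeatedly projecting out a single coordinate $i$ that witnesses one of the two properties, stopping when no coordinate does. The engine of every step is the identity $\Cl_\cF(\ccS)_{I} = \Cl_\cF(\ccS_{I})$ for any $I\subseteq[n]$, which holds precisely because the operations of $\cF$ act coordinate‑wise; this is the equivalence already used in the proof of Proposition~\ref{prop:decisionenum}, there stated for $I=[l]$. Consequently, deleting coordinate $i$ from every vector of $\ccS$ yields the instance $\ccS_{[n]\setminus i}$ whose closure is exactly the projection $\Cl_\cF(\ccS)_{[n]\setminus i}$, so the reduction function required by the definition of a polynomial delay reduction is this (iterated) deletion, computable in polynomial time once detection is available.

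Next I would fix, for each property, the auxiliary problem $C\in\DelayP$ that recovers the discarded coordinate. If coordinate $i$ satisfies the first property with $v_i=g(v_j)$ for all $v\in\Cl_\cF(\ccS)$, then every solution $u$ of the projected instance has a unique preimage in $\Cl_\cF(\ccS)$, obtained by re‑inserting $g(u_j)$ at position $i$; so $C$ outputs a single vector and the correspondence is a bijection. If coordinate $i$ satisfies the second property, then every $u\in\Cl_\cF(\ccS)_{[n]\setminus i}$ has exactly two preimages, $u$ extended by $0$ and by $1$ at position $i$, both in $\Cl_\cF(\ccS)$ since by hypothesis the fibre of each solution over its projection carries both values; so $C$ outputs these two vectors. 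In either case $C$ is trivially in $\DelayP$, the images are disjoint for distinct $u$, and their union is $\Cl_\cF(\ccS)$, which verifies the two conditions of the reduction definition. I would then compose these one‑coordinate steps: each strictly decreases $n$, so at most $n$ steps occur and the final instance witnesses neither property. For the global reduction the function records the removed coordinates (storing $g$ and $j$ for the first type, and the fact of being free for the second), and $C$ recomputes this record from $\ccS$ and reinserts the coordinates in reverse order of removal, so that the index $j$ on which a first‑type coordinate depends is already present. A reduced solution of size $n'\le n$ may unfold into up to $2^{k}$ full solutions, $k$ being the number of free coordinates removed, but $C$ still produces them with delay $O(n)$ by a Gray‑code traversal of the free coordinates; this matches the informal accounting before the statement, with one reinsertion per removed index and total per‑solution overhead $O(n)$.

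The part needing real care, and the main obstacle, is polynomial‑time detection, required for the reduction function to be polynomial. The first property is local: for each ordered pair $(i,j)$ I would compute $\Cl_\cF(\ccS_{\{i,j\}})=\Cl_\cF(\ccS)_{\{i,j\}}\subseteq\{0,1\}^2$ by brute‑force saturation — over the boolean domain this set has at most four elements and stabilises in constantly many rounds — and test whether the resulting relation makes coordinate $i$ a function of coordinate $j$; this is $O(n^2)$ work. The second property is genuinely harder because it is \emph{not} detectable pairwise: a coordinate forced to equal the sum modulo two of two others looks unconstrained on every pair yet is not free, so its fibres do not all carry both values.

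To handle the second property I would establish the characterization that, when $\cF$ contains no constant operation (which may be assumed after absorbing constants into $\ccS$ via Proposition~\ref{prop:postsimpl}), the property holds for $i$ if and only if $\sigma_i(\ccS)\subseteq\Cl_\cF(\ccS)$, where $\sigma_i$ flips coordinate $i$. The forward implication is immediate. For the converse I would show that $U=\{v\in\Cl_\cF(\ccS): \sigma_i(v)\in\Cl_\cF(\ccS)\}$ contains $\ccS$ and is closed under $\cF$, hence equals $\Cl_\cF(\ccS)$: given $f\in\cF$ of arity $t$ and $v^1,\dots,v^t\in U$, applying $f$ to every sign pattern $\sigma_i^{[j\in A]}v^j$ for $A\subseteq[t]$ stays in the closure and changes only coordinate $i$ of $w=f(v^1,\dots,v^t)$, while the coordinate‑$i$ input values range over all of $\{0,1\}^t$, so a non‑constant $f$ realises the flipped value $\neg w_i$ and thus $\sigma_i(w)\in\Cl_\cF(\ccS)$. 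Given this lemma, the second property is decided by at most $m$ membership tests, polynomial whenever $\Mem_\cF\in\P$, which completes the detection step and hence the construction of the polynomial delay reduction.
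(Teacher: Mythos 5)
Your core argument is exactly the paper's proof, which consists of two sentences: reduce to $\EnumClo_{\cF}$ on the instance $\ccS_{[n]\setminus i}$, then extend each solution by $f(v_j)$ in the first case and by both $0$ and $1$ in the second; your first two paragraphs reproduce this (closure commutes with projection, unique respectively exactly two preimages, composition of at most $n$ single-coordinate steps with reinsertion in reverse order). Where you genuinely go beyond the paper is the detection issue: the paper's proof treats the witnessing index $i$ (and the dependency $f$, $j$) as given and never explains how the reduction function finds it, whereas you correctly note that under the paper's own definition of a polynomial delay reduction the map $\ccS \mapsto \ccS_{[n]\setminus J}$ must be computable in polynomial time, which requires detection. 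Your pairwise test for the first property is sound, since $\Cl_\cF(\ccS)_{\{i,j\}} = \Cl_\cF(\ccS_{\{i,j\}})$ is computable by constant-size saturation (each pair costs $O(m)$ to project, so the total is $O(n^2 m)$ rather than $O(n^2)$, a harmless slip), and your flip lemma for the second property is correct: for a non-constant $f$ of arity $t$, the $2^t$ sign patterns realize every tuple on coordinate $i$ while fixing the other coordinates of $f(v^1,\dots,v^t)$, so the set $U$ is closed under $\cF$ and equals $\Cl_\cF(\ccS)$.

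The one point you should make explicit is that your second test needs $\Mem_\cF \in \P$, which at the place where Proposition~\ref{prop:red} is stated is precisely what Theorem~\ref{th:main} has not yet established, and the paper later invokes Proposition~\ref{prop:red} inside arguments contributing to that theorem (e.g.\ Proposition~\ref{prop:allreduc}). So, read as a standalone unconditional proof, your detection step is conditional and risks circularity. The circle is easy to break, and you should say how: the clones for which the paper actually applies the proposition contain the majority operation (so $\Mem_\cF \in \P$ follows independently from Corollary~\ref{cor:coro}) or have their membership problem settled beforehand (as for $S_{12}$ in Proposition~\ref{prop:limits}); alternatively one can read the proposition, as the paper implicitly does, as a reduction applied only once a witnessing coordinate is exhibited by clone-specific arguments. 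With that remark added, your proof is a strictly more careful version of the paper's argument rather than a different one.
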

\begin{proof}
 In the two cases, the reduction is to $\EnumClo_{\cF}$ on the instance $\ccS_{[n]\setminus i}$.
 Then each solution $v$ is extended by $f(v_j)$ in the first case or by $0$ and $1$ in the second case.
\end{proof}

\subsection{Conjunction}\label{sec:monotone}

We first study one of the simplest clones: $E_2 = <\wedge>$.
We give an elementary proof that $\Mem_{E_2} \in \P$, then we explain how to
obtain a good delay for $\EnumClo_{E_2}$.
For a binary vector $v$, let us denote by $\0 (v)$ (resp.  $\1 (v)$) the  set of indices $i$ for which $v_i=0$ (resp. $v_i=1$).

\begin{proposition}
$\Mem_{E_2} \in \P$.
\end{proposition}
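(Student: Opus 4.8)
The plan is to characterize membership in $\Cl_{E_2}(\ccS)$ explicitly in terms of the set $\ccS$, since $E_2 = \langle \wedge \rangle$ is the clone generated by conjunction alone. Working with the set-of-indices notation, applying $\wedge$ coordinate-wise to a tuple of vectors corresponds exactly to taking the \emph{intersection} of the corresponding subsets of $[n]$. Since $\wedge$ is associative, commutative, and idempotent, iterating it produces nothing more than arbitrary finite intersections of the original vectors. Hence I expect the key structural claim to be:
\[
\Cl_{E_2}(\ccS) = \left\{ \bigcap_{v \in T} v \;\middle|\; T \subseteq \ccS,\; T \neq \emptyset \right\},
\]
where intersection is read through the characteristic-vector correspondence, i.e. the resulting vector has a $1$ at coordinate $i$ exactly when every $v \in T$ has $v_i = 1$.

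Given this characterization, the plan is to reduce the membership test to a simple computation. To decide whether a target vector $v$ lies in $\Cl_{E_2}(\ccS)$, I would consider the set $\1(v)$ of coordinates where $v$ is $1$. A vector $w \in \ccS$ can appear in an intersection yielding $v$ only if $\1(v) \subseteq \1(w)$, equivalently $w_i = 1$ for every $i \in \1(v)$; any $w$ violating this would force a $0$ somewhere in $\1(v)$ after intersection. So I would collect precisely the candidate set $T_v = \{ w \in \ccS \mid \1(v) \subseteq \1(w) \}$, which are the vectors of $\ccS$ that dominate $v$ on its $1$-coordinates. The natural candidate for producing $v$ is the intersection of \emph{all} of $T_v$, because intersecting more vectors can only remove $1$s, and we want to kill every $1$ outside $\1(v)$ while preserving the $1$s inside. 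Concretely, I would compute $u = \bigcap_{w \in T_v} w$ and test whether $u = v$: the algorithm accepts if and only if this holds.

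The correctness argument splits into two directions. For soundness, if $u = v$ then $v$ is the intersection of vectors in $\ccS$, hence $v \in \Cl_{E_2}(\ccS)$ by the characterization. For completeness, suppose $v \in \Cl_{E_2}(\ccS)$, so $v = \bigcap_{w \in T} w$ for some nonempty $T \subseteq \ccS$. Every $w \in T$ must satisfy $\1(v) \subseteq \1(w)$ (otherwise some $i \in \1(v)$ would get value $0$), so $T \subseteq T_v$ and therefore $u = \bigcap_{T_v} w \subseteq v$ coordinate-wise (more vectors, fewer $1$s), giving $\1(u) \subseteq \1(v)$. Conversely, for each $i \in \1(v)$ every $w \in T_v$ has $w_i = 1$ by definition of $T_v$, so $i \in \1(u)$, giving $\1(v) \subseteq \1(u)$; hence $u = v$ and the test succeeds.

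The complexity is immediate: computing $T_v$ requires comparing $v$ against each of the $m$ vectors of $\ccS$ in time $O(nm)$, and forming the intersection $u$ takes another $O(nm)$, so the whole procedure runs in polynomial time. I do not anticipate a genuine obstacle here, as $E_2$ is essentially the easiest clone to handle; the only point requiring care is the completeness direction, namely arguing that taking the intersection of \emph{all} dominating vectors is the right canonical choice and cannot accidentally destroy a $1$ that $v$ requires — which is exactly what the definition of $T_v$ guarantees.
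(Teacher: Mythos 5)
Your proof is correct and takes essentially the same approach as the paper: your candidate set $T_v$ is exactly the paper's set $\ccS_1 = \{w \in \ccS \mid w_{\1(v)} = \one\}$, and both arguments establish that $v \in \Cl_{E_2}(\ccS)$ if and only if $v$ equals the intersection of all vectors of $\ccS$ dominating $v$ on its $1$-coordinates, decidable in time $O(mn)$.
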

\begin{proof}
 Let $\ccS$ be a set of boolean vectors, if we apply $\wedge$ to a couple of vectors in $\ccS$ it produces the intersection of two vectors when seen as sets.
 Since the intersection operation is associative and commutative, $\Cl_{E_2}(\ccS)$ is the set of arbitrary intersections of elements of  $\ccS$.
 Let $v$ be a vector and let $\ccS_1$ be the set  $\{w \in \ccS \mid w_{\1 (v)} = \one \} $.
 Assume now that $v$ can be obtained as an intersection of elements $v_1,\dots,v_t$, those elements must be in $\ccS_1$ because of the monotonicity of the intersection for the inclusion.
 On the other hand, by definition of $\ccS_1$, $v$ is contained in  $\displaystyle{\bigcap_{w \in \ccS_1} w}$.
  Therefore, $v \in \Cl_{E_2}(\ccS)$ if and only if $v = \displaystyle{\bigcap_{w \in \ccS_1} w}$.
 This intersection can be computed in time $O(mn)$ which concludes the proof.
\end{proof}

By Proposition~\ref{prop:partialsolutions}, we can turn the algorithm for $\Mem_{E_2}$ into an enumeration algorithm
for $\EnumClo_{E_2}$ with delay $O(mn^2)$. We explain in the next proposition how to reduce this delay to $O(mn)$.

\begin{proposition}\label{prop:unionfast}
There is an algorithm solving $\EnumClo_{E_2}$ with delay $O(mn)$.
\end{proposition}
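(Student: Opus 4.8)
The plan is to keep the backtrack search of Proposition~\ref{prop:partialsolutions}, which already yields delay $O(mn^2)$, and to shave off the extra factor $n$ by solving the extension test \emph{incrementally} instead of recomputing it from scratch at every node. Recall from the proof of the previous proposition that $\Cl_{E_2}(\ccS)$ is the family of intersections of subfamilies of $\ccS$, and that a prefix $v$ of length $l$ is extendable if and only if $v$ equals the restriction to $[l]$ of $\bigcap_{w \in \ccS_v} w$, where $\ccS_v = \{ w \in \ccS \mid w_i = 1 \text{ for all } i \in \1(v)\}$ is the family of \emph{survivors} of $v$. Along the current root-to-node branch I would maintain $\ccS_v$ together with, for every coordinate $i \in [n]$, a counter $c_i$ recording how many survivors carry a $0$ at coordinate $i$; the running intersection $I=\bigcap_{w\in\ccS_v}w$ is then read off coordinate by coordinate through the equivalence $I_i=0 \iff c_i>0$.

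With these data the two possible children of a node of length $l$ are classified in constant time per coordinate. If $I_{l+1}=1$ then every survivor carries a $1$ at $l+1$: appending $0$ is infeasible, while appending $1$ leaves $\ccS_v$, the counters and $I$ untouched, so this is a \emph{forced} step of cost $O(1)$. If $I_{l+1}=0$ then appending $0$ is always feasible and changes nothing, whereas appending $1$ amounts to discarding the survivors having a $0$ at $l+1$ and updating the counters accordingly; this child is kept only if the restricted family stays nonempty and no counter $c_i$ attached to an already fixed $0$-coordinate $i\le l$ drops to $0$ (which would force a $1$ on a coordinate the prefix sets to $0$). Both the test and the update are carried out while scanning the discarded survivors, at cost $O(n)$ per discarded survivor, and are undone symmetrically on backtracking.

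The heart of the argument is the amortized cost analysis. Going down a single branch the survivor family only shrinks, hence over the whole branch at most $m$ survivors are ever discarded, and since each discard touches $O(n)$ counters the total work spent on the $\wedge$-restrictions of one branch is $O(mn)$, the forced steps and the constant-time checks adding only $O(n)$. Because the backtrack search passes from one solution to the next by climbing to the deepest ancestor that still has an unexplored child and then descending again, and both the climb and the descent run along portions of root-to-leaf branches, the same counting bounds the number of survivor removals and re-insertions between two consecutive outputs by $O(m)$, each costing $O(n)$, for a delay of $O(mn)$; the data structures have size $O(mn)$, so the space stays polynomial. I expect the delicate point to be precisely this amortization: one must ensure that the cost of the extension test is always charged to survivors that leave the current survivor family for good, and that the children which turn out to be non-extendable are detected within the same budget rather than re-triggering a full $O(mn)$ recomputation at every coordinate — this is exactly the ``amortization of the extension problem over a whole branch'' announced after Proposition~\ref{prop:partialsolutions}.
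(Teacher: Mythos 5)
Your proposal is correct and follows essentially the same route as the paper's proof: the same backtrack search augmented with the set of compatible (``survivor'') vectors and per-coordinate counters of survivors carrying a $0$, with setting a coordinate to $0$ costing $O(1)$, setting it to $1$ paid for by permanently discarding survivors, and the $O(mn)$ delay obtained by amortizing these discards over a branch. Your explicit treatment of the nonemptiness of the survivor family and of restricting the ``counter drops to $0$'' test to already-fixed $0$-coordinates is, if anything, slightly more careful than the paper's wording.
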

\begin{proof}
 We use the backtrack search described in Proposition \ref{prop:partialsolutions} but we maintain data structures which allow to decide $\Mem_{E_2}$ quickly.
 Let $\ccS$ be the input set of $m$ vectors of size $n$.
 During the traversal of the tree we update the partial solution $p$, represented by an array of size $n$ which stores
 whether $p_i = 1$, $p_i=0$  or is yet undefined.

 A vector $v$ of $\ccS$ is compatible with the partial solution if $\1(p) \subseteq \1(v)$.
 We maintain an array $COMP$ indexed by the sets of $\ccS$, which stores whether each vector of $\ccS$ is compatible or not with the current partial solution.
 Finally we update an array $COUNT$, such that $COUNT[i]$ is  the number of compatible vectors $v \in \ccS$ such that $v_i =0$.
 Notice that a partial solution $p$ can be extended into a vector of $\Cl_{E_2}(\ccS)$ if and only if for all $i \in \0(p)$ $COUNT[i]>0$, the solution is then the intersection of all compatible vectors.

 At each step of the traversal, we select an index $i$ such that $p_i$ is undefined and we set first $p_i= 0$ then  $p_i=1$.
 When we set $p_i=0$, there is no change to do in $COUNT$ and $COMP$ and we can check whether this extended partial solution is correct by checking if $COUNT[i]>0$ in constant time.
 When we set $p_i=1$, we need to update $COMP$ by removing from it every vector $v$ such that $v_i =0$.
 Each time we remove such a vector $v$, we decrement $COUNT[j]$ for all $j$ such that $v_j = 0$.
 If there is a $j$ such that  $COUNT[j]$ is decremented to $0$ then the extension of $p$ by $p_i = 1$ is not possible.

 When we traverse a whole branch of the tree of partial solutions during the backtrack search,
 we set $p_i=1$ for each $i$ at most once and then we need to remove each vector from $COMP$ at most once.
 Therefore the total number of operations we do to maintain $COMP$ and $COUNT$ is $O(mn)$ and so is the delay.
\end{proof}

We now relate the problem $\EnumClo_{E_2}$ to $\enumSat{DNF}$ which is the problem of listing all satisfying assignments of a formula in disjunctive normal form. The problem $\enumSat{DNF}$ appears in several contexts such as the enumeration of satisfying assignments of an existential formula with second order variables~\cite{DurandS11} or knowledge compilation~\cite{capelli2016structural}.
The best algorithms to solve $\enumSat{DNF}$ have a delay of $O(mn)$ where $n$ is the number of variables and $m$ the number of clauses. The following question is open and seems hard to solve:\\
Is there an algorithm for $\enumSat{DNF}$ with delay polynomial in $n$ only or which has a sublinear dependency on $m$ ?
In this paper we use $\enumSat{MONDNF}$, the variant with only positive variables, as a \emph{hard} problem, that is we conjecture  it cannot be solved with a delay better than $O(m)$. Then, in the spirit of fine grained complexity~\cite{williams2010subcubic}, it can be used to prove conditional lower bounds on the delay of other enumeration problems.

Note that if we ask for all subsets of the sets in $\ccS$ instead of all intersections, we exactly get the problem of enumerating the solutions of a monotone DNF formula. Moreover, the algorithm of Proposition~\ref{prop:unionfast} is similar to one used to generate the solutions of a DNF formula or of a monotone CNF formula~\cite{murakami2014efficient}.  We now show that we can reduce the problem
$\enumSat{MONDNF}$ to $\EnumClo_{E_2}$. Moreover, an instance of $\enumSat{MONDNF}$ with $n$ variables and $m$ clauses is transformed into an instance of $\EnumClo_{E_2}$ with $mn$ vectors of size $n$. Therefore any improvement in the delay for $\EnumClo_{E_2}$ yields a better delay for $\enumSat{MONDNF}$.

 \begin{proposition}\label{prop:hardDNF}
 There is a polynomial delay reduction from $\enumSat{MONDNF}$ to $\EnumClo_{E_2}$.
 \end{proposition}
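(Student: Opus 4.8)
The plan is to encode the solutions of a monotone DNF \emph{directly} as the closure under intersection of a small, explicitly given family, so that the reduction is in fact a bijection up to complementation. Let $\phi = \bigvee_{j=1}^m \bigwedge_{i \in T_j} x_i$ be a monotone DNF over $x_1,\dots,x_n$, with nonempty terms $T_j \subseteq [n]$. I identify an assignment with the set $B \subseteq [n]$ of its true variables, so that $B$ is a solution iff $T_j \subseteq B$ for some $j$; that is, $\mathrm{Sol}(\phi) = \bigcup_j \{B : T_j \subseteq B\}$ is the up-set generated by the $T_j$. The polynomial-time function $f$ builds the instance $\ccS$ of $\EnumClo_{E_2}$ over $[n]$ whose vectors, read as subsets, are $[n]\setminus(T_j \cup \{i\})$ for every $j \in [m]$ and every $i \in [n]$. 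This gives $mn$ vectors of size $n$. Writing $\overline{T_j} := [n]\setminus T_j$, note that for $i \in T_j$ this set is the ``top'' $\overline{T_j}$, while for $i \notin T_j$ it is the coatom $\overline{T_j}\setminus\{i\}$.

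First I would prove that $\Cl_{E_2}(\ccS) = \bigcup_j 2^{\overline{T_j}}$, the down-set generated by the complemented terms. For $\subseteq$, recall that $\Cl_{E_2}(\ccS)$ is exactly the family of intersections of subfamilies of $\ccS$; since every generator is contained in some $\overline{T_j}$, any such intersection is contained in that same $\overline{T_j}$ and hence lies in $2^{\overline{T_j}}$. For $\supseteq$, fix $A \subseteq \overline{T_j}$: if $A = \overline{T_j}$ it is the generator obtained from any $i \in T_j$, and otherwise $A = \bigcap_{i \in \overline{T_j}\setminus A}(\overline{T_j}\setminus\{i\})$ is an intersection of coatoms of $\overline{T_j}$, all of which are generators. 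This double inclusion is the only real obstacle, and it is precisely the coatom gadget that makes it work: because a down-set is automatically closed under intersection, no spurious set can escape the intended family, and because every strict subset of $\overline{T_j}$ is recovered as the intersection of the coatoms omitting its missing elements, every intended set is produced.

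It then remains to package this as a polynomial delay reduction. Complementation gives $\overline{A} \in \mathrm{Sol}(\phi)$ iff $A \in \bigcup_j 2^{\overline{T_j}}$, since $A \subseteq \overline{T_j} \Leftrightarrow T_j \subseteq \overline{A}$. Hence $v \mapsto \overline{v}$ is a bijection from $\Cl_{E_2}(\ccS)$, the set enumerated by $\EnumClo_{E_2}$ on $f(\phi)$, onto $\mathrm{Sol}(\phi) = \enumSat{MONDNF}(\phi)$. I therefore take the auxiliary problem $C$ to output, on the input consisting of $\phi$ concatenated with a closure element $v$, the single solution $\overline{v}$. This $C$ is trivially in $\DelayP$, the sets $C(\phi v)$ are pairwise disjoint because distinct vectors have distinct complements, and their union is all of $\mathrm{Sol}(\phi)$; these are exactly the two conditions of a polynomial delay reduction, which proves the claim.

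Finally, I would observe that $f(\phi)$ has $mn$ vectors of size $n$ and $C$ has constant delay, so the reduction is tight. This is the feature that justifies using $\enumSat{MONDNF}$ as a source of conditional lower bounds on the delay of $\EnumClo_{E_2}$: any delay improvement for the latter transfers, through this bijection, to the former.
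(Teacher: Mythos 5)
Your proof is correct and takes essentially the same route as the paper: the paper associates to each clause $C_i$ and each index $j$ the vector equal to the characteristic vector of $C_i$ with coordinate $j$ set to $1$, shows the closure under $\vee$ of these $mn$ vectors is exactly the set of satisfying assignments, and then passes to $E_2=\langle\wedge\rangle$ via the duality of Proposition~\ref{prop:postsimpl}, whereas your generators $[n]\setminus(T_j\cup\{i\})$ are precisely the complements of those vectors, so you have merely inlined that duality step (while spelling out the closure characterization and the delay-reduction packaging in more detail than the paper does). No gap.
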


 \begin{proof}
 First recall that by the second point of Proposition~\ref{prop:postsimpl}, $\EnumClo_{\{\vee\}}$ can be reduced to $\EnumClo_{E_2}$ with an instance of the exact same size. Let $\phi \equiv \bigvee_{i=1}^{m} C_i$ be a monotone DNF formula over the variables $X = \{x_0,\dots,x_{n-1}\}$. The $C_i$ are clauses over positive variables and can be seen as the set of their variables.
 To $C_i$ we associate the vectors $v^{i,j}$ which are equal to the characteristic vector of $C_i$ except on the coordinate $j$ where it is equal to $1$. Let $\ccS = \{ v^{i,j} \mid i \in [m],j \in [n]\}$. Notice that a solution of $\phi$ is any assignment which contains all variables of some clause $C_i$. The set $\Cl_{\{\vee\}}(\ccS)$ is also the set of all sets containing some clause which proves the reduction from $\enumSat{MONDNF}$ to $\EnumClo_{\{\vee\}}$ and thus to $\EnumClo_{D_2}$.
\end{proof}

\subsection{Algebraic operations}\label{sec:algebra}
We first deal with the clone $L_0 = <+>$ where $+$ is the boolean addition (or equivalently the boolean operation $XOR$).
Note that $\Cl_{L_0}(\ccS)$ is the vector space generated by the vectors in $\ccS$.
Seen as an operation on sets, $+$ is the symmetric difference of the two sets.

\begin{proposition}
 $\Mem_{L_0} \in \P$.
\end{proposition}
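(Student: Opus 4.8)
The plan is to recast $\Mem_{L_0}$ as a membership test in a vector space over $\mathbb{F}_2$, which is solvable by Gaussian elimination. As already observed, $\Cl_{L_0}(\ccS)$ is the $\mathbb{F}_2$-linear span of $\ccS$: since $+$ is associative and commutative and $v + v = \zero$ for every vector $v$, the closure consists exactly of the subset-sums $\sum_{w \in T} w$ for $T \subseteq \ccS$, which is precisely $\mathrm{span}_{\mathbb{F}_2}(\ccS)$ (note it contains $\zero$, obtained from $v+v$, or from the empty sum). Hence deciding whether $v \in \Cl_{L_0}(\ccS)$ is exactly deciding whether $v$ is an $\mathbb{F}_2$-linear combination of the vectors of $\ccS$.

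The algorithm I would use is the following. Form the $m \times n$ matrix $M$ over $\mathbb{F}_2$ whose rows are the vectors of $\ccS$, and compute a row-echelon form of $M$ by Gaussian elimination, where each elimination step is a coordinate-wise XOR of two rows. This produces a basis $B$ of the row space, that is, of $\Cl_{L_0}(\ccS)$. To test membership of $v$, reduce $v$ against $B$: processing the basis vectors in echelon order, whenever $v$ has a $1$ in the pivot coordinate of the current basis vector $b$, replace $v$ by $v + b$. Then $v \in \Cl_{L_0}(\ccS)$ if and only if this reduction drives $v$ to $\zero$.

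Correctness follows from the standard fact that adjoining $v$ to a spanning set leaves the rank unchanged if and only if $v$ lies in the span, and that reduction against an echelon basis terminates at $\zero$ precisely in that case. For the running time, the echelon form is obtained with at most $n$ pivots, each costing $O(mn)$ bit operations, and the final reduction of $v$ costs $O(n^2)$; the total is $O(mn^2)$, which is polynomial, so $\Mem_{L_0} \in \P$.

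There is no genuine obstacle here: the entire content is that, over $\mathbb{F}_2$, the single operation $+$ already generates the full linear structure (subtraction coincides with addition, and $\zero$ is always reachable), so the closure is a subspace and membership is plain linear algebra. The one point to keep straight is that $L_0$ preserves $0$, so no constant or affine term can ever appear; the closure is therefore a genuinely \emph{linear} (not merely affine) span containing $\zero$, consistent with the subset-sum description above.
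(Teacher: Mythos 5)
Your proof is correct and takes essentially the same route as the paper: both identify $\Cl_{L_0}(\ccS)$ with the $\mathbb{F}_2$-linear span of $\ccS$ and decide membership by polynomial-time linear algebra, the paper phrasing this as solving the linear system $Ax = v$ where the rows of $A$ are the vectors of $\ccS$, while you make the Gaussian elimination and reduction against an echelon basis explicit. Your added care about $\zero$ (obtainable as $v+v$, so the closure is genuinely the linear span and not just a coset) is a detail the paper leaves implicit, but it does not change the argument.
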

\begin{proof}
Let $\ccS$ be the set of input vectors, let $v$ be a vector and let $A$ be the matrix whose rows are the elements of $\ccS$.
The vector $v$ is in $\Cl_{L_0}(\ccS)$ if and only if there is a solution to $Ax =v$.
Solving a linear system over $\mathbb{F}_2$ can be done in polynomial time which proves the proposition.
\end{proof}

The previous proposition yields a polynomial delay algorithm by applying Proposition~\ref{prop:partialsolutions}.
One can get a better delay, by computing in polynomial time a maximal free family $M$ of $\ccS$, which is
a basis of $\Cl_{L_0}(\ccS)$. The basis $M$ is a succinct representation of $\Cl_{L_0}(\ccS)$. One can generate all elements of $\Cl_{L_0}(\ccS)$
by going over all possible subsets of elements of $M$ and summing them. The subsets can be enumerated in constant time by using Gray code enumeration (see~\cite{knuth2011combinatorial}).
The sum can be done in time $n$ by adding a single vector since two consecutive sets differ by a single element in the Gray code order. Therefore we have, after the polynomial time computation of $M$, an enumeration with delay $O(n)$.
If one allows to output the elements represented in the basis $M$, the algorithm even has constant delay.

The closure by the clone $L_2$ (generated by the sum modulo two of three elements) corresponds to an affine space rather than a vector space. Hence, we can easily extend the previous result to  $L_2$.
\begin{proposition}
 $\Mem_{L_2} \in \P$.
\end{proposition}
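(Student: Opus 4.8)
The plan is to reduce $\Mem_{L_2}$ to the linear-algebra situation already settled for $L_0$. The guiding observation, hinted at by the remark preceding the statement, is that the ternary operation $x+y+z$ over $\mathbb{F}_2$ is the Mal'cev operation of the group $(\mathbb{Z}/2)^n$ (it satisfies $p(x,y,y)=p(y,y,x)=x$), so it is an \emph{affine} operation: its three coefficients sum to $1$ modulo $2$. Consequently $\Cl_{L_2}(\ccS)$ should be exactly the affine hull of $\ccS$ over $\mathbb{F}_2$, that is, a coset of a linear subspace, rather than the subspace $\Cl_{L_0}(\ccS)$ itself. Since deciding membership in an affine subspace is, like membership in a linear one, a matter of solving a linear system over $\mathbb{F}_2$, the problem is polynomial.

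Concretely, I would first dispose of the trivial case $\ccS = \emptyset$, where $\Cl_{L_2}(\ccS)=\emptyset$ and the answer is always negative. Otherwise fix any $v_0 \in \ccS$ and let $g$ be the translation $g(w)=w+v_0$ (symmetric difference with $v_0$), an involution of $\{0,1\}^n$. A one-line computation shows $g$ commutes with the ternary sum, since $g(x)+g(y)+g(z)=x+y+z+3v_0=(x+y+z)+v_0=g(x+y+z)$ over $\mathbb{F}_2$; hence $g$ maps closures to closures, giving $g(\Cl_{L_2}(\ccS))=\Cl_{L_2}(g(\ccS))$, and because $g$ is an involution, $\Cl_{L_2}(\ccS)=g(\Cl_{L_2}(g(\ccS)))$. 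Crucially, as $v_0\in\ccS$, the translated set $g(\ccS)$ contains $\zero$.

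The crux is then a small lemma: for any set $\cT$ of vectors with $\zero\in\cT$, one has $\Cl_{L_2}(\cT)=\Cl_{L_0}(\cT)$, i.e. the ternary-sum closure coincides with the linear span. The inclusion $\Cl_{L_2}(\cT)\subseteq\Cl_{L_0}(\cT)$ always holds, since a vector space is closed under $(x+y)+z$; for the reverse inclusion I would use that $x+y = x+y+\zero$, so having $\zero\in\Cl_{L_2}(\cT)$ realizes every binary sum as a single ternary operation, whence $\Cl_{L_2}(\cT)$ is closed under $+$ and, containing $\cT$, contains the whole span. Applying this with $\cT=g(\ccS)$ yields $\Cl_{L_2}(g(\ccS))=\Cl_{L_0}(g(\ccS))=\mathrm{span}_{\mathbb{F}_2}(g(\ccS))$, so $v\in\Cl_{L_2}(\ccS)$ if and only if $g(v)=v+v_0\in\mathrm{span}_{\mathbb{F}_2}\{\,w+v_0 \mid w\in\ccS\,\}$. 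This last test is exactly an instance of $\Mem_{L_0}$ — solve $A^{\top}x=v+v_0$ over $\mathbb{F}_2$, with the rows of $A$ the vectors $w+v_0$ — which is polynomial, establishing $\Mem_{L_2}\in\P$. The only genuine obstacle is the lemma identifying the ternary closure with the span once $\zero$ is present; everything else is bookkeeping. I expect the authors may instead argue directly that iterated ternary sums generate precisely the sums of an odd number of elements of $\ccS$, which are exactly the affine hull, but either route collapses $\Mem_{L_2}$ onto $\Mem_{L_0}$.
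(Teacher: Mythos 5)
Your proof is correct, and it takes a genuinely different (though closely related) route from the paper's. The paper proves exactly the characterization you guessed at the end: since $x+y+z$ is associative and commutative, $\Cl_{L_2}(\ccS)$ consists of the sums of an odd number of vectors of $\ccS$; membership then becomes the question of whether $Ax=v$ admits a solution $x$ of \emph{odd Hamming weight}, which the paper decides by computing a basis of the solution set and using that parity of Hamming weight is additive over $\mathbb{F}_2$, so an odd-weight solution exists if and only if some basis element has odd weight. You instead conjugate by the translation $g(w)=w+v_0$ for a fixed $v_0\in\ccS$, check that $g$ commutes with the ternary sum, and prove the key lemma that once $\zero$ is among the generators the $L_2$-closure equals the $L_0$-closure; this collapses $\Mem_{L_2}$ to a single black-box call to $\Mem_{L_0}$. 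What each buys: your reduction avoids parity-constrained linear systems altogether, and it sidesteps a small infelicity in the paper's wording (the solution set of $Ax=v$ is an affine subspace, not a vector space, so its ``basis'' must be read as a particular solution together with a kernel basis for the parity argument to go through); it also matches the coset viewpoint the paper itself invokes later, in Section~\ref{sec:nonuniform}, where $\Cl_{L_2}(\ccS)$ is treated as a translate of a linear space. The paper's route, in exchange, needs no auxiliary translation and yields the explicit odd-sum description of the closure directly from the algebraic identities, in the same self-contained style as its $L_0$ proof.
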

\begin{proof}
 Since the sum of three elements is associative and commutative, the vectors in $\Cl_{L_2}(\ccS)$ are the sum of an odd number of vectors in $\ccS$. In other words $v \in \Cl_{L_2}(\ccS)$ if and only if there is a $x$ such that $Ax = v$ and the Hamming weight of $x$ is odd. One can compute a basis $B$ of the vector space of the solutions to the equation $Ax = v$. If all elements of $B$ have even Hamming weight, then their sums also have even Hamming weight. Therefore
 $v \in \Cl_{L_2}(\ccS)$ if and only if there is an element in $B$ with odd Hamming weight, which can be decided in polynomial time.
\end{proof}

\subsection{Boolean algebras}\label{sec:all}

In this subsection, we deal with the largest clones of our reduced Post lattice: $M_2 = <\wedge,\vee>$, $BF = <\vee,\neg>$, $R_0= <\vee,+>$ and $R = <x \,?\, y \,:\, z>$, where $x \,?\, y \,:\, z$ is the "if then else" operator which is equal to $(\neg x \vee y) \wedge (x \vee z) $.

\begin{proposition}
$\Mem_{M_2} \in \P$.
\end{proposition}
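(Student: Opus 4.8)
The plan is to invoke the Baker--Pixley theorem mentioned in the introduction, using the fact that the clone $M_2 = \langle \wedge, \vee \rangle$ contains a near-unanimity operation. The ternary majority function
$$\mathrm{maj}(x,y,z) = (x \wedge y) \vee (y \wedge z) \vee (x \wedge z)$$
is built from $\wedge$ and $\vee$, so $\mathrm{maj} \in M_2$, and it satisfies the near-unanimity identities $\mathrm{maj}(x,x,y) = \mathrm{maj}(x,y,x) = \mathrm{maj}(y,x,x) = x$. Since $\Cl_{M_2}(\ccS)$ is invariant under every operation of $M_2$, the function $\mathrm{maj}$ is a polymorphism of it. The Baker--Pixley theorem states that a relation admitting a $d$-ary near-unanimity polymorphism equals the intersection of the cylindrifications of its projections onto $(d-1)$-element sets of coordinates; for the ternary majority this gives $d-1 = 2$, so $\Cl_{M_2}(\ccS)$ is \emph{determined by its binary projections}.

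First I would record the projection identity $\Cl_{M_2}(\ccS)_I = \Cl_{M_2}(\ccS_I)$ for every $I \subseteq [n]$, which holds precisely because the operations act coordinate-wise; this is the same argument as in the proof of Proposition~\ref{prop:decisionenum}, applied to an arbitrary coordinate set $I$ rather than a prefix $[l]$. Combining the identity with Baker--Pixley yields the membership criterion I aim for: for a vector $v$ of size $n$,
$$v \in \Cl_{M_2}(\ccS) \iff \forall \{i,j\} \subseteq [n], \; v_{\{i,j\}} \in \Cl_{M_2}(\ccS_{\{i,j\}}).$$

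The algorithm is then immediate. For each of the $O(n^2)$ pairs $\{i,j\}$, I compute the projection $\ccS_{\{i,j\}} \subseteq \{0,1\}^2$ in time $O(m)$ and saturate it under $\wedge$ and $\vee$; since $\{0,1\}^2$ has only four elements, this saturation and the test $v_{\{i,j\}} \in \Cl_{M_2}(\ccS_{\{i,j\}})$ take constant time. Accepting iff all pairs pass gives an $O(mn^2)$ decision procedure, hence $\Mem_{M_2} \in \P$; by Proposition~\ref{prop:decisionenum} this also places $\EnumClo_{M_2}$ in $\DelayP$.

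The only genuinely delicate point is the appeal to Baker--Pixley: one must check that $\mathrm{maj}$ is a polymorphism of $\Cl_{M_2}(\ccS)$ (clear, since $\mathrm{maj} \in M_2$ and the closure is invariant under the generators) and state the theorem in the correct form, namely that a near-unanimity polymorphism of arity $d$ forces the relation to be the intersection of the cylindrifications of its $(d-1)$-coordinate projections. Everything else is routine. I would also remark that this $2$-decomposability is shared by the remaining large clones of this subsection: each of $BF$, $R_0 = \langle \vee, + \rangle$ and $R = \langle x\,?\,y\,:\,z \rangle$ also contains $M_2$ (for $R_0$ one checks $x \wedge y = (x \vee y) + x + y$, and for $R$ one has $x\,?\,x\,:\,y = x \vee y$ and $x\,?\,y\,:\,x = x \wedge y$), hence all of them have $\mathrm{maj}$ as a polymorphism. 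Thus the very same projection-based method settles $\Mem_{\cF} \in \P$ for $\cF \in \{M_2, BF, R_0, R\}$ uniformly, which explains why they are grouped together here.
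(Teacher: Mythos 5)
Your proof is correct, but it takes a genuinely different route from the paper's own proof of this proposition. The paper argues in an elementary, self-contained way: it defines for each $i \in [n]$ the atom $x^i = \bigwedge_{v \in \ccS \text{ s.t. } v_i = 1} v$ and shows, using distributivity of $\wedge$ over $\vee$, that $u \in \Cl_{M_2}(\ccS)$ if and only if $u = \bigvee_{i \in \1(u)} x^i$; membership is then decided in $O(mn^2)$ time by computing the atoms and checking this identity. You instead observe that $\mathrm{maj}(x,y,z) = (x \wedge y) \vee (y \wedge z) \vee (x \wedge z)$ lies in $M_2$ and invoke the Baker--Pixley theorem (Theorem~\ref{thm:BP}) to reduce membership to the $O(n^2)$ binary projections, each saturated in constant time; combined with the coordinate-wise identity $\Cl_{M_2}(\ccS)_I = \Cl_{M_2}(\ccS_I)$, this is sound and non-circular (Baker--Pixley is quoted from the literature, not derived from anything in this subsection), and it yields the same $O(mn^2)$ bound. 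The trade-off is as follows: your route is more uniform --- it settles $M_2$, $BF$, $R_0$, $R$ (and $D_1$) in one stroke, a point the paper itself concedes in the remark following Proposition~\ref{prop:2SAT}, where it notes that all these clones contain $\mathrm{maj}$ and are therefore characterized by their projections of size two --- whereas the paper's atom characterization, besides avoiding universal-algebra machinery, exposes the structure of $\Cl_{M_2}(\ccS)$ as exactly the unions of atoms, which is precisely what powers the faster $O(n)$-delay hill-climbing enumeration of Proposition~\ref{prop:M2}; the bare projection criterion does not immediately provide that.
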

\begin{proof}
Let $\ccS$ be a vector set and for all $i\in [n]$, let $x^i = \displaystyle{\bigwedge_{v\in \ccS \text{ s.t. } v_i=1} v}$, we call
$x^i$ an atom. We show that a vector $u$ belongs to $\Cl_{M_2}(\ccS)$ if and only if $u = \bigvee\limits_{i\in \1(u)} x^i$. By definition of  $\Cl_{M_2}(\ccS)$, it contains the  atoms since they are intersections of elements in $\ccS$ and it contains their unions $u = \bigvee\limits_{i\in \1(u)} x^i$. Since the intersection distributes with the union, we can write any element $v$ in $\Cl_{M_2}(\ccS)$ as $v = \bigvee_i \left(\bigwedge_j s^{i,j} \right)$ where $s^{i,j} \in \ccS$.
Hence it is enough to show that $v^ i = \bigwedge_j s^{i,j} $ is an union of atoms. Let $k$ be an index such that $v^i_k = 1$,
then for all vectors $s^{i,j}$, $s^{i,j}_k = 1$. It implies that seen as a set, $x^k$ is included in $v^i$ and that $v^i =  \bigvee\limits_{k\in \1(v^i)} x^k$ which proves the characterization of $\Cl_{M_2}(\ccS)$.

We can compute the atoms in time $O(mn^2)$, and then to decide whether $v \in \Cl_{M_2}(\ccS)$, one must check
 whether $v$ contains all atoms $x^i$ such that $v_i = 1$ in time $0(n^2)$ which proves the proposition.
\end{proof}

 Applying Proposition~\ref{prop:partialsolutions}, we get an enumeration algorithm with delay $O(mn^3)$.
 Moreover, $\Cl_{M_2}(\ccS) = \Cl_{<\vee>}(\{x^i\}_{i \in [n]})$. Recall that $\EnumClo_{<\vee>}$ can be reduced to $\EnumClo_{E_2}$ by Proposition~\ref{prop:postsimpl} and that $\EnumClo_{E_2}$ has an algorithm with delay $O(mn)$ thanks to Proposition~\ref{prop:unionfast}. Hence the $n$ atoms can be precomputed and their union generated with delay $O(n^2)$ since here $m = n$. We can do better by using the inclusion structure of the $x^i$'s to obtain a $O(n)$ delay.

\begin{proposition}\label{prop:M2}
 $\EnumClo_{M_2}$ can be solved with delay $O(n)$.
\end{proposition}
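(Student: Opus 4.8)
The plan is to show that, after the reductions of Proposition~\ref{prop:red}, $\Cl_{M_2}(\ccS)$ is nothing but the set of \emph{order ideals} of a partial order on the atoms, and that in this encoding each solution is simply the characteristic vector of an ideal; the enumeration of the ideals of a poset can then be carried out with linear delay.

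First I would extract the poset hidden in the previous proposition. Keep the atoms $x^1,\dots,x^n$ with $x^i=\bigwedge_{v\in\ccS,\,v_i=1}v$, and define a relation on $[n]$ by $j\preceq i$ iff $x^i_j=1$. A one-line computation gives $j\preceq i\iff x^j\subseteq x^i$: if $x^i_j=1$ then every $v$ with $v_i=1$ also has $v_j=1$, so the intersection defining $x^j$ is over a larger family, whence $x^j\subseteq x^i$; the converse uses $x^i_i=1$. Thus $\preceq$ is a preorder, and after using Proposition~\ref{prop:red} to remove the coordinates that are constant or that duplicate another one (these duplicates are exactly the pairs with $x^i=x^j$) it becomes a genuine partial order $P$. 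The characterization $u\in\Cl_{M_2}(\ccS)\iff u=\bigvee_{i\in\1(u)}x^i$ then reads: $u\in\Cl_{M_2}(\ccS)$ iff $\1(u)$ is a down-set of $P$ (if $i\in\1(u)$ and $j\preceq i$ then $x^i_j=1$ and $x^i\subseteq u$, so $u_j=1$; conversely any $u$ whose support is a down-set equals $\bigvee_{i\in\1(u)}x^i$ because they share the same support). Moreover, since $\1(x^i)=\{\,j:j\preceq i\,\}=\downarrow i$, for a down-set $I$ one gets $\1\bigl(\bigvee_{i\in I}x^i\bigr)=\bigcup_{i\in I}\downarrow i=I$: the closure element attached to $I$ is exactly the characteristic vector of $I$. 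Hence $\EnumClo_{M_2}$ is equivalent to listing the down-sets of $P$, each output as its characteristic vector.

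For the enumeration itself I would use the recursion on minimal elements: choose the $\preceq$-minimal element $a$ of the current poset and split the ideals into those avoiding $a$ (the ideals of $P\setminus\uparrow a$) and those containing $a$ (the ideals of $P\setminus\{a\}$, each augmented by $a$). Every recursive call has at least one ideal (the empty one) and strictly shrinks the poset, so the recursion tree is binary, carries the $|\Cl_{M_2}(\ccS)|$ solutions exactly at its leaves, and has depth at most $n$. Consequently two consecutive leaves are joined by a path of length $O(n)$, and if each node costs amortized $O(1)$ we obtain delay $O(n)$; during the traversal the output vector $u$, whose support $\1(u)$ is the current ideal, is maintained incrementally, a single coordinate being flipped each time an element enters or leaves the ideal (this uses precisely that adding an addable element to a down-set changes the union in that one coordinate only).

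The delicate point — and where the inclusion structure of the $x^i$ is really used — is to realize each node in amortized constant time. Representing the active elements as a doubly linked list in a fixed linear extension makes ``choose a minimal element'' an $O(1)$ operation, since the first active element is automatically minimal, and the whole cost reduces to maintaining the active sub-poset under the deletions of the up-sets $\uparrow a$, with rollback on backtracking. The key accounting is that along any root-to-leaf branch the deleted up-sets are pairwise disjoint, so the total number of element deletions on a branch is $O(n)$, which caps the work spent between two consecutive solutions at $O(n)$. The main obstacle I expect is to actually \emph{locate} the active part of $\uparrow a$ in time proportional to its size rather than to $|\uparrow a|$: a careless scan reintroduces an $\Theta(n^2)$ factor (as the complete-bipartite order, which does arise as an atom poset, already shows), and avoiding it requires exploiting how the atoms sit inside one another to walk the Hasse diagram without revisiting already-deleted elements. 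Once this maintenance is in place the delay is $O(n)$, matching the size of a single solution.
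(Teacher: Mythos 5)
Your first two paragraphs are sound: identifying $\Cl_{M_2}(\ccS)$ with the down-sets of the atom poset is correct and is essentially a repackaging of the characterization the paper proves for $\Mem_{M_2}$, and your in/out recursion on a minimal element $a$ is combinatorially valid (the two branches partition the ideals, every call has a leaf below it, the depth is at most $n$). The gap is exactly where you place it, and the repair you propose does not close it. Your accounting bounds the number of \emph{deletions} per branch by $n$, not the \emph{work}, and the fix you offer --- walking the Hasse diagram while skipping already-deleted elements --- fails because a graph search pays for every edge it inspects, not for every vertex it deletes. Concretely, every poset arises as an atom poset (take $\ccS=\{{\downarrow}p \mid p\in P\}$): let $P$ consist of one element $a$, elements $m_1,\dots,m_k$ above $a$, elements $t_1,\dots,t_k$ above every $m_i$, and one isolated element $z$ (so that Proposition~\ref{prop:red} removes nothing); here $n=2k+2$. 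At the node that branches ``avoid $a$'' nothing above $a$ has been deleted yet, and the upward walk from $a$ must traverse all $k^2$ Hasse edges between the $m_i$ and the $t_j$, i.e.\ $\Theta(n^2)$ steps, even though each vertex is visited only once; since your outputs sit only at leaves, all of this work falls inside a single gap, so the delay is $\Theta(n^2)$. (Splicing deleted vertices out of adjacency lists fails similarly in your own complete-bipartite example: removing the $k$ maximal elements at the first ``avoid'' node already costs $\Theta(k^2)$ splices.) So the sentence ``once this maintenance is in place the delay is $O(n)$'' is precisely what remains unproven, and no maintenance of the kind you sketch provides it; ironically, the ``careless scan'' you reject is the better primitive, since it costs only $O(|{\uparrow}a|)=O(n)$ per node.

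The paper avoids the issue by a structurally different recursion in which \emph{every} node outputs a solution: it keeps the current solution $v$ and the list $L$ of indices $i$ with $v_i=0$, sorted by a linear extension of inclusion on the atoms taken with larger atoms first; at each node it picks the first $i\in L$, outputs $v\vee x^i$ at once (plain $O(n)$ vector work, no poset data structure at all), recurses on the new pair, and then recurses on $(v,L\setminus\{i\})$, the ordering guaranteeing that no remaining atom can ever set coordinate $i$ again, so the two calls produce disjoint solutions. Because each node emits an output, $O(n)$ work per node plus $O(n)$ amortized rollback (the coordinate flips along a root-to-leaf path are disjoint) immediately gives delay $O(n)$. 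Alternatively, your leaf-based scheme can be repaired, but by an argument absent from your write-up: explore ``contain $a$'' \emph{before} ``avoid $a$'' and implement ``avoid $a$'' by the naive $O(n)$ scan of the precomputed static list ${\uparrow}a$; then between two consecutive leaves the traversal performs at most one avoid step (after backtracking one always descends into an avoid child, and the descent below it consists of contain steps only), so each gap costs $O(n)$ even though a single branch may cost $\Theta(n^2)$. Either way, the missing ingredient is an argument about where the outputs sit relative to the expensive nodes; per-branch bookkeeping cannot yield $O(n)$, because per-branch work is genuinely quadratic here.
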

\begin{proof}
Let $\ccS$ be the input. We first build for all $i$, $x^i = \bigwedge\limits_{v \in \ccS, v_i = 1} v$.
Inclusion is a partial order between the elements $x^i$ seen as sets. We consider a linear extension $T$ of that partial order, i.e. a total order obtained by ordering all the incomparable pairs.
 We then generate all elements of $\Cl_{M_2}(S)$ by a Hill climbing algorithm: we go from one solution
 to another by adding a single $x^i$. Let $v$ be the current solution, we maintain a list $L$ ordered according to $T$ of the indices $i$ of $v$ such that $v_i = 0$. At each step we select $i$ the smallest element of $L$ and we set $v_j =1$ and remove $j$ from $L$ for all
 $j \in x^i$. This produces a new solution in time $O(n)$. We then recursively call the algorithm on this new solution and list.
 When the recursive call is finished, we call the algorithm on $v$ and $L \setminus \{i\}$.

 This algorithm is correct, because the solutions generated in the two recursive calls are disjoint.
 Indeed, in the second call $v_i$ will always be $0$, because all indices in $L$ are bigger than $i$ in $T$.
 It means that $x^j$ for $j \in L$ is either smaller or incomparable. Since $x^i$ is the smallest element with $x^i_i =1$ it implies that $x^j_i = 0$.
\end{proof}

The problem $\EnumClo_{BF}$, which can be reduced to $\EnumClo_{M_2}$ by Proposition~\ref{prop:allreduc} is in fact even easier to enumerate. Let $\ccS$ be a set of vectors, let $X^i = \{ v \mid v \in \ccS, v_i = 1 \} \cup \{ \neg v \mid v \in \ccS, v_i = 0 \}$ and
let $x^i =  \bigwedge_{v\in X^i} v$. The set $\Cl_{BF}(\ccS)$ is in fact a boolean algebra, whose atoms are the disjoint elements $x^i$.
Indeed, either $x^i_{i,j}= x^j_{i,j}$ and they are equal or $\1_{x^i} \cap \1_{x^j} = \emptyset$.
Let $A = \{ x^i \mid i\in [n]\}$, two distinct unions of elements in $A$ produce distinct elements.
Hence by enumerating all possible subsets of $A$ with a Gray code, we can generate $\Cl_{BF}(\ccS)$ with delay $O(n)$ or $O(1)$ if groups of always equal coefficients are represented by a single coefficient.

We now prove that the closures by the clones $R$ and $R_0$ are equal to the closure by $BF$ up to
some coefficients which are fixed to $0$ or $1$, thus they are as easy to enumerate as stated in the next proposition.

\begin{proposition}\label{prop:allreduc}
 The problems $\EnumClo_{R}$ and $\EnumClo_{R_0}$ can be reduced to $\EnumClo_{BF}$ by a polynomial delay reduction.
\end{proposition}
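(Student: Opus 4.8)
The problems $\EnumClo_{R}$ and $\EnumClo_{R_0}$ can be reduced to $\EnumClo_{BF}$ by a polynomial delay reduction.

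The plan is to exploit the structural description given just above for $\Cl_{BF}$, namely that it is a boolean algebra whose atoms are the disjoint elements $x^i = \bigwedge_{v \in X^i} v$. The key observation is that the clones $R$ and $R_0$ are obtained from $BF$ by intersecting with the constant-preserving restrictions: $R$ (which preserves both $0$ and $1$) and $R_0$ (which preserves $0$). Concretely, $R = <x\,?\,y\,:\,z>$ contains exactly those boolean functions $f$ with $f(\zero)=\zero$ and $f(\one)=\one$, while $R_0 = <\vee,+>$ contains those $f$ with $f(\zero)=\zero$. Because these functions act coordinate-wise, the constant-preservation on each individual coefficient forces the closures $\Cl_R(\ccS)$ and $\Cl_{R_0}(\ccS)$ to agree with $\Cl_{BF}(\ccS)$ except on the coordinates where the BF-closure would have produced the all-zero or all-one pattern via negation; on those coordinates the value is \emph{fixed}.

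The steps I would carry out are as follows. First, compute the atoms $x^i$ of $\Cl_{BF}(\ccS)$ exactly as in the paragraph above, in polynomial time. Second, identify the coordinates whose value is forced: for the clone $R_0$, since every function preserves $\zero$ and $\zero \in \Cl_{R_0}(\ccS)$ whenever $\ccS$ is closed appropriately, any coordinate $i$ that is $0$ on all of $\ccS$ must remain $0$ throughout $\Cl_{R_0}(\ccS)$; dually for $R$ one also fixes the coordinates that are $1$ on all of $\ccS$. In effect, $\Cl_R(\ccS)$ (resp. $\Cl_{R_0}(\ccS)$) equals $\Cl_{BF}(\ccS)$ restricted to the sub-boolean-algebra generated by those atoms $x^i$ that are neither forced to $\zero$ nor to $\one$, with the forced coordinates pinned. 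Third, having isolated the free atoms, build the $\EnumClo_{BF}$ instance on the projection onto the unforced coordinates, solve it with the $O(n)$-delay Gray-code algorithm described above, and extend each produced solution by the fixed values on the pinned coordinates. Since this extension is a constant-overhead map and the pinning is precomputed, this is a valid polynomial delay reduction in the sense of the definition given earlier.

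The main obstacle I anticipate is making the correspondence between the $R$- (resp. $R_0$-) closure and a \emph{pinned} $BF$-closure precise: one must verify carefully that the only way $\Cl_{BF}$ can strictly exceed $\Cl_{R}$ or $\Cl_{R_0}$ is through the presence of constants (equivalently, through coordinates whose value in $\Cl_{BF}$ is not determined by the non-constant part of the algebra), and that these are exactly the coordinates that are uniformly $0$ or uniformly $1$ across $\ccS$. This is essentially an application of the Baker--Pixley / Post-lattice machinery identifying $R = M_2 \cap R_2$-type intersections, but the honest work is to check that after projecting out the pinned coordinates the remaining closure is genuinely a full $\Cl_{BF}$ on the smaller instance, so that the already-established $O(n)$-delay algorithm applies verbatim. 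Once this equivalence is established, the reduction itself and its delay bound follow immediately.
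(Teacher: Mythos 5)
Your reduction has the same overall shape as the paper's: identify the coordinates whose value is forced (the paper does this by invoking Proposition~\ref{prop:red}; you do it by hand, pinning the coordinates that are constantly $0$, and for $R$ also those constantly $1$, across $\ccS$), project them out, enumerate a $BF$-closure on what remains, and re-attach the pinned values to each output. The claims you state along the way are all true, and the easy inclusion is indeed easy: since every function in $R_0$ (resp.\ $R$) preserves $0$ (resp.\ both constants) and closure commutes with projection, every element of $\Cl_{R_0}(\ccS)$ is an element of $\Cl_{BF}(\ccS_J)$ extended by the pinned values, where $J$ denotes the unpinned coordinates.

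However, the converse inclusion --- that the projected closure is \emph{all} of $\Cl_{BF}(\ccS_J)$, i.e.\ $\Cl_{R}(\ccS_J)=\Cl_{R_0}(\ccS_J)=\Cl_{BF}(\ccS_J)$ --- is the entire content of the proposition, and you explicitly leave it unproven (``the honest work is to check\dots''), deferring to ``Baker--Pixley / Post-lattice machinery'' that you never execute. Your description of that machinery is also off: $R$ \emph{is} the clone of all functions preserving $0$ and $1$ (your $R_2$), so ``$R = M_2\cap R_2$'' would just give $M_2$, and Baker--Pixley is not needed at all. The missing argument is short and is exactly what the paper supplies: both clones contain $M_2$, hence $\vee$ and $\wedge$; since every unpinned coordinate takes the value $1$ somewhere in $\ccS_J$, the join of all vectors of $\ccS_J$ is $\one$, and for $R$ every unpinned coordinate also takes the value $0$, so the meet is $\zero$; thus $\one\in\Cl_{R_0}(\ccS_J)$ and $\zero,\one\in\Cl_{R}(\ccS_J)$. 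Finally $\neg x = x\,?\,\zero\,:\,\one$ and $\neg x = x+\one$, so both closures are closed under $\{\vee,\neg\}$ and therefore contain (hence equal) $\Cl_{BF}(\ccS_J)=\Cl_{<\vee,\neg>}(\ccS_J)$. Without this simulation of negation from the constants, your third step is an assertion rather than a proof, so the proposal as written has a genuine gap at its central point.
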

\begin{proof}
First notice that both $R$ and $R_0$ contains $M_2$ which means that they contain $\vee$ and $\wedge$.
 By Proposition ~\ref{prop:red} we can assume that for all $i$, there is $u \in \ccS$ such that $u_i = 1$ and $v$
 such that $v_i =0$. Since $\vee$ is in both clones the vector $\one$ is in their closure. Since the clones also contain $\wedge$, the vector $\zero$ is also in their closures.
 Now notice that $ x \,?\, \zero \,:\, \one = \overline x$ and that $x + \one = \overline{x}$. As a consequence we have $\Cl_{R}(\ccS) = \Cl_{R_0}(\ccS) = \Cl_{BF}(\ccS)$ which proves the proposition.
\end{proof}

\subsection{Limits of the infinite parts}\label{sec:limit}

We deal with the two infinite hierarchies of clones in Subsection~\ref{sec:threshold}.
However, the method for the hierarchy does not directly apply to the clones which are the limits of these hierarchies:
$S_{10} = < x \wedge (y \vee z)>$ and $S_{12} = < x \wedge (y \to z)>$. We show here how we can see them as a union of
closures by $M_2$ or $BF$, which complexity wise makes them similar to $\EnumClo_{E_2}$. As a consequence, they have a polynomial delay algorithm but the delay depends on $m$ the size of $\ccS$ and a reduction from $\enumSat{MONDNF}$ suggests it is hard to get rid of this dependency.

\begin{lemma}\label{lemma:struc}
 Let $\ccS$ be a set of vectors then:
 \begin{itemize}
  \item $\Cl_{S_{10}}(\ccS) = \displaystyle{\bigcup_{s\in \ccS} \Cl_{M_2}(\ccS_{\1(s)})}$
  \item there is a set of indices $I$ such that if $v \in \Cl_{S_{12}}(\ccS_I)$, $v_I = \one$ and
  $$\Cl_{S_{12}}(\ccS_{\bar{I}}) = \displaystyle{\bigcup_{s\in \ccS} \Cl_{BF}(\ccS_{{\bar{I}} \cap \1(s)})}$$

 \end{itemize}
where the elements of $\Cl_{M_2}(\ccS_{\1(s)})$ and of $\Cl_{BF}(\ccS_{\1(s)})$ are extended by zeros outside of $\1(s)$.
\end{lemma}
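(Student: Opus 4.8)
The plan is to exploit the fact that in both generators the first argument acts as a mask: writing $f(x,y,z)=x\wedge(y\vee z)$ for the base of $S_{10}$ and $g(x,y,z)=x\wedge(y\to z)$ for the base of $S_{12}$, the support of the output is in each case contained in $\1(x)$. A straightforward induction on the number of operations applied then shows that every $w\in\Cl_{S_{10}}(\ccS)$ (resp. $w\in\Cl_{S_{12}}(\ccS)$) satisfies $\1(w)\subseteq\1(s)$ for some $s\in\ccS$. This is what lets us index the union by the elements of $\ccS$ and justifies the extension-by-zero convention: every solution ``lives'' on the support of some input vector.

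I would treat $S_{10}$ first. Both inclusions reduce to the identity $\Cl_{S_{10}}(\ccS_{\1(s)})=\Cl_{M_2}(\ccS_{\1(s)})$ on the projected instance. The inclusion $\subseteq$ is immediate since $f\in M_2$. For $\supseteq$, note that $s$ projected to $\1(s)$ is the all-ones vector, so $\one\in\ccS_{\1(s)}$; then $f(x,y,y)=x\wedge y$ simulates $\wedge$ and $f(\one,y,z)=y\vee z$ simulates $\vee$, and since $\one$ remains in the closure these suffice to generate all of $\Cl_{M_2}(\ccS_{\1(s)})$. To pass between the projected and the full instance I use that the operations act coordinate-wise: projecting any $S_{10}$-computation onto $\1(s)$ gives the $\subseteq$ direction, while for $\supseteq$ I lift using $s$ as a mask. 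Concretely $f(s,v,v)=s\wedge v$ is exactly the extension by zero of $v_{\1(s)}$, and $f(a,b,b)=a\wedge b$, $f(s,a,b)=s\wedge(a\vee b)$ realize masked $\wedge$ and $\vee$ whose supports stay inside $\1(s)$, so the whole $M_2$-computation can be replayed inside $\Cl_{S_{10}}(\ccS)$.

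For $S_{12}$ I would set $I=\{i\mid \forall v\in\ccS,\ v_i=1\}$. Since $g(1,1,1)=1$, these coordinates remain equal to $1$ throughout the closure and $\ccS_I=\{\one\}$, which gives the first claim. On $\bar{I}$ the argument parallels the $S_{10}$ case but now targets $BF$ instead of $M_2$: on $\bar{I}\cap\1(s)$ the projection of $s$ is again $\one$, $g(x,x,z)=x\wedge z$ simulates $\wedge$, and $g(\one,y,z)=y\to z$ simulates implication. The extra ingredient, and the real obstacle, is to produce the constant $\zero$ on $\bar{I}\cap\1(s)$, which is what unlocks negation ($y\to\zero=\neg y$) and hence functional completeness. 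Here I use the definition of $\bar{I}$: for every coordinate $i\in\bar{I}$ there is a vector $v^{(i)}\in\ccS$ with $v^{(i)}_i=0$, so the conjunction $\bigwedge_{i\in\bar{I}}v^{(i)}$, computable with the simulated $\wedge$, is $0$ on all of $\bar{I}$. With $\wedge$, $\to$ and $\zero$ available we obtain $\neg$ and then $\vee$ by De~Morgan, yielding $\Cl_{S_{12}}(\ccS_{\bar{I}\cap\1(s)})=\Cl_{BF}(\ccS_{\bar{I}\cap\1(s)})$.

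Finally I would assemble the two inclusions exactly as in the $S_{10}$ case, using $g(s,s,v)=s\wedge v$ to mask an input vector to its support and the masked versions of $\wedge,\to,\zero$ to replay any $BF$-computation inside $\Cl_{S_{12}}(\ccS_{\bar{I}})$, while the $\subseteq$ direction again comes from projecting computations together with the support-shrinking observation. The delicate points to verify are that every simulated operation keeps its support within $\bar{I}\cap\1(s)$, so that the extension-by-zero convention is respected, and that removing $I$ beforehand is genuinely necessary: without it the $BF$-closure would wrongly allow a $0$ on coordinates that are in fact forced to $1$, which is precisely the discrepancy that the set $I$ absorbs.
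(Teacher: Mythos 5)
Your proof is correct and follows essentially the same route as the paper's: the same support-containment induction, the same key identities ($x\wedge(y\vee y)=x\wedge y$, $x\wedge(x\to z)=x\wedge z$, simulation of $\vee$, $\neg$ via the projected all-ones vector and the derived $\zero$), and the same reduction to the projected instances $\ccS_{\1(s)}$ with the same choice of $I$. The only differences are cosmetic (masking each simulated operation rather than lifting the whole computation and masking once at the end, and obtaining $\vee$ by De Morgan rather than via $\one\wedge(\neg x\to y)$).
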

\begin{proof}
 By induction on the elements of  $\Cl_{S_{10}}(\ccS)$ (respectively $\Cl_{S_{12}}(\ccS)$),
 one proves that their support are always contained in the support of some element of  $\ccS$.

 Hence, for $v\in \Cl_{S_{10}}(\ccS)$,  we assume that $v \subseteq s \in \ccS$.
 Notice that $x \wedge (y \vee y)$ is equal to $x \wedge y$, therefore we have the operation $\wedge$ in $S_{10}$.  If $v_{\1(s)} \in \Cl_{S_{10}}(\ccS_{\1(s)})$, it implies that there is a $v' \in \Cl_{S_{10}}(\ccS)$ such that $v'_{\1(s)} = v_{\1(s)}$. Therefore  $v'\wedge s = v$ and $v \in  \Cl_{S_{10}}(\ccS)$. We have proved $v \in \Cl_{S_{10}}(\ccS)$ if and only if $v_{\1(s)} \in \Cl_{S_{10}}(\ccS_{\1(s)})$ since the other direction is just the projection on $\1(s)$.

  As a consequence, if we consider $\ccS_{\1(s)}$, it contains the vector $\one$. We can simulate $\vee$, since
  $ \one \wedge (x \vee y) = x \vee y$. Therefore $\Cl_{M_2}(\ccS_{\1(s)}) \subseteq \Cl_{S_{10}}(\ccS_{\1(s)})$
  and the reverse inclusion is clear since $M_2 = <\vee,\wedge>$ contains $S_{10}$ which proves the first item of the lemma.

  Assume that there is a coefficient $i$ such that for all $s\in \ccS$, $s_i=1$, then for all $v \in \Cl_{S_{12}}(\ccS)$, $v_i = 1$. We denote by $I$ the set of such coefficients and by the previous remark,
  for all $v$ in $Cl_{S_{12}}(\ccS)$, we have $v_I = \one$.
  By definition, for each $i \in \bar{I}$, there is a $s\in \ccS$ with $s_i=0$. Since $x \wedge (x \to y) = x \wedge y$, the vector $\zero$ is in $\Cl_{S_{12}}(\ccS_{\bar{I}})$.

  As in the first part of the proof, $v \in \Cl_{S_{12}}(\ccS_{\bar{I}})$ if and only if there is $v \subseteq s\in \ccS_{\bar{I}}$, such that $v_{\1(s)} \in \Cl_{S_{12}}(\ccS_{\bar{I} \cap \1(s)})$.
  By definition,  $s_{\1(s)} = \one$. Thus we can simulate the negation in $\Cl_{S_{12}}(\ccS_{\bar{I} \cap\1(s)})$ by $\one \wedge (x \to \zero) = \neg x$. We also simulate the disjunction by  $\one \wedge (\neg x \to y) = x \vee y$. Therefore $\Cl_{S_{12}}(\ccS_{\bar{I} \cap \1(s)})$  contains $\Cl_{BF}(\ccS_{\bar{I} \cap\1(s)})$.
  The reverse inclusion follows from the fact that $BF$ contains $S_{12}$ which proves the second item of the lemma.
\end{proof}

As a consequence of this structural characterization, is it easy to decide the problems $\Mem_{S_{10}}$ and
$\Mem_{S_{12}}$ in time $O(mn)$. Therefore using a backtrack search, we obtain enumeration algorithms with delay $O(mn^2)$.
However, using the structures of $\Cl_{S_{10}}(\ccS)$ and $\Cl_{S_{12}}(\ccS)$ described in Lemma~\ref{lemma:struc} we obtain better algorithms described in the following proposition.

 \begin{proposition}\label{prop:S10S12}
    The problems $\EnumClo_{S_{10}}$ and $\EnumClo_{S_{12}}$ can be solved with delay $O(mn)$.
 \end{proposition}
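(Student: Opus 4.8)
The plan is to run a backtrack (flashlight) search directly on the \emph{single} closure $\Cl_{S_{10}}(\ccS)$, rather than patching together the pieces of the union in Lemma~\ref{lemma:struc}. The advantage of this framing is that a backtrack search on one closure visits every solution exactly once (Proposition~\ref{prop:partialsolutions}), so no redundancy can arise and there is nothing to deduplicate; the union decomposition of Lemma~\ref{lemma:struc} will be used only as a computational device to \emph{evaluate} the extension test and then to \emph{amortise} it. Since $\Mem_{S_{10}}\in\P$, the associated extension problem is in $\P$ by Proposition~\ref{prop:decisionenum}, so the search is already correct and in polynomial delay; the whole content of the statement is to bring the delay down from the naive $O(mn^2)$ to $O(mn)$.

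First I would express the extension test through the lemma. For a partial vector $p$ of length $l$, deciding whether $p$ extends to an element of $\Cl_{S_{10}}(\ccS)$ amounts, because the operations act coordinate-wise, to deciding $p\in\Cl_{S_{10}}(\ccS_{[l]})$; applying Lemma~\ref{lemma:struc} to $\ccS_{[l]}$, this holds iff $p\in\Cl_{M_2}\big((\ccS_{[l]})_{\1(s_{[l]})}\big)$ for some $s\in\ccS$ (with the usual zero-extension outside $\1(s_{[l]})$). Thus the extension test splits into $m$ independent $M_2$-membership tests, one per vector $s\in\ccS$, and each of these reduces — by the atom characterisation used to prove $\Mem_{M_2}\in\P$ — to comparing $p$ with the union of the atoms of the corresponding projected instance.

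The main step, which is where the improvement actually comes from, is to amortise these $m$ membership tests along a root-to-leaf branch, generalising the $COMP$/$COUNT$ bookkeeping of Proposition~\ref{prop:unionfast} from one instance to $m$ simultaneous ones. For each $s\in\ccS$ I would maintain incrementally, as coordinates of the partial solution get fixed while we descend, exactly the data needed to answer the $s$-th $M_2$-membership test (which atoms are still compatible and which forced coordinates remain to be covered). As in Proposition~\ref{prop:unionfast}, each coordinate is fixed at most once along a branch, so the total maintenance cost for a fixed $s$ over a whole branch is $O(n)$; summing over the $m$ vectors $s$ gives $O(mn)$ per branch, hence delay $O(mn)$. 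The delicate point here — the main obstacle — is precisely this amortisation: one must verify that the $m$ membership data structures can each be updated in $O(1)$ amortised time per coordinate-fixing, so that the per-branch cost stays $O(mn)$ and does not degrade back to $O(mn^2)$.

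Finally, $S_{12}$ is handled identically after one preprocessing step dictated by the second item of Lemma~\ref{lemma:struc}: compute the set $I$ of coordinates forced to $1$, fix them and project them out (a reduction of the kind used in Proposition~\ref{prop:red}), and then run the same amortised backtrack search over the decomposition $\bigcup_{s}\Cl_{BF}(\ccS_{\bar I\cap\1(s)})$, using the $BF$ atoms in place of the $M_2$ atoms. The correctness and the delay analysis are word-for-word the same, since $BF$-membership is also decided through disjoint atoms, so the same per-$s$ incremental structures apply and yield delay $O(mn)$.
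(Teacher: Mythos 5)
The step you yourself flag as ``the main obstacle'' is not a detail to be checked; it is the entire content of the statement, and for $\EnumClo_{S_{10}}$ it fails under every natural implementation. Your plan does work for the $S_{12}$ side: there the components are $\Cl_{BF}(\ccS_{\bar I\cap\1(s)})$, the $BF$ atoms are \emph{disjoint} (they are the classes of the relation ``coordinates $i,j$ agree on every vector''), so membership is ``constant on each class and zero outside $\1(s)$'' and can indeed be maintained in $O(1)$ time per fixed coordinate per component, giving $O(mn)$ per branch. But for $S_{10}$ the components are $M_2$-closures, and the $M_2$ atoms $x^i$ are neither disjoint nor laminar (for $\ccS=\{110,011\}$ the atoms are $\{1,2\},\{2\},\{2,3\}$). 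The extension condition for one component is: for every fixed $i$ with $p_i=1$ and every fixed $j$ with $x^i_j=1$, one must have $p_j=1$. This is a condition on the coordinate--atom incidence relation, which can have $\Theta(n^2)$ pairs, and every bookkeeping scheme in the style of Proposition~\ref{prop:unionfast} (counters of violated incidences, unions of atoms with skip structures, DFS in the Hasse diagram of the atom preorder) ends up touching each incidence once per branch, i.e.\ $\Theta(n^2)$ per branch \emph{per component}, hence $\Theta(mn^2)$ overall --- exactly the naive bound you were trying to beat. The paper itself records this phenomenon for a single $M_2$ instance: treating $\Cl_{M_2}(\ccS)$ as a union closure of its $n$ atoms and backtracking gives only $O(n^2)$ delay, and a different algorithm (the hill climbing of Proposition~\ref{prop:M2}) is needed to reach $O(n)$. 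So asserting $O(1)$ amortised updates for the $m$ simultaneous $M_2$-membership structures is not a verification task; it is an open claim, and most likely a false one in this form.

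The paper's proof avoids backtrack search altogether, precisely because of this. It uses Lemma~\ref{lemma:struc} to split $\Cl_{S_{10}}(\ccS)$ into the $m$ components $\Cl_{M_2}(\ccS_{\1(s)})$ (resp.\ $\Cl_{BF}$ for $S_{12}$), enumerates each component with its own $O(n)$-delay algorithm \emph{in lexicographic order} (Proposition~\ref{prop:M2} for $M_2$, disjoint atoms for $BF$), and then runs the $m$ enumerators in parallel, at each step outputting the smallest solution currently produced; because all streams are sorted in the same order, duplicates across components are suppressed, and the delay is the sum of the component delays, $O(mn)$. To repair your write-up you should either adopt this merging argument for $S_{10}$, or supply a genuinely new data structure realizing the $O(1)$ amortised update for $M_2$-membership --- your amortised backtrack, as it stands, only establishes the $S_{12}$ half of the proposition.
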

\begin{proof}
We first prove the result for $\EnumClo_{S_{12}}$.
By Lemma~\ref{lemma:struc}, there is a set $I$, such that $\Cl_{S_{12}}(\ccS_{\bar{I}}) = \cup_{s\in \ccS} \Cl_{BF}(\ccS_{\bar{I} \cap \1(s)}) $ where the elements of $\ccS_{\1(s)}$ are extended by zeros. The coefficients in $I$ are always one and do not matter for an enumeration algorithm.
Moreover, $\EnumClo_{BF}$ can be solved with delay $O(n)$.  In particular, we can output the solutions in lexicographic order, since they are union of disjoint atoms.
If the elements of $\Cl_{BF}(\ccS_{\1(s)}) $ are generated for each $s$ we obtain all the elements we want to output but with \emph{redundancies}. We can obtain an algorithm without redundancies by simulating an enumeration of the elements of $\Cl_{BF}(\ccS_{\1(s)}) $ for all $s$ and always outputting the smallest one. This method is described in~\cite{phd_strozecki} (Lemma $3$, Chapter $2$) and the delay is the sum of delays of the enumeration procedure for each $s$, that is $O(mn)$.

 The proof for $\EnumClo_{S_{10}}$ is similar and follows from Proposition~\ref{prop:M2}
 which shows that the elements of $\Cl_{M_2}(\ccS)$ can be enumerated with delay $O(n)$ in lexicographic order.
\end{proof}

\begin{proposition}
 There is a polynomial delay reduction from $\enumSat{MONDNF}$ to $\EnumClo_{S_{10}}$ and $\EnumClo_{S_{12}}$.
\end{proposition}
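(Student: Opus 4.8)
The plan is to bootstrap from the reduction already built in the proof of Proposition~\ref{prop:hardDNF}. There, from a monotone DNF formula $\phi \equiv \bigvee_{i=1}^m C_i$ over variables $[n]$ one constructs a set $\ccS = \{v^{i,j} \mid i \in [m], j \in [n]\}$ of vectors of size $n$ such that $\Cl_{\{\vee\}}(\ccS)$ is exactly the set of assignments containing some clause, that is, the models of $\phi$. I would reuse this very $\ccS$ and only add the all-ones vector, setting $f(\phi) = \ccS \cup \{\one\}$. The vector $\one$ plays two roles: it guarantees a generator of full support (needed because models of $\phi$ can be large, up to $\one$ itself), and it is precisely the ingredient that lets $S_{10}$ (resp.\ $S_{12}$) simulate $\vee$ inside the relevant block, exactly as observed in the proof of Lemma~\ref{lemma:struc}.

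For $S_{10}$, Lemma~\ref{lemma:struc} gives $\Cl_{S_{10}}(f(\phi)) = \bigcup_{s \in f(\phi)} \Cl_{M_2}(f(\phi)_{\1(s)})$, and the term corresponding to $s=\one$ is the full closure $\Cl_{M_2}(\ccS \cup \{\one\})$, which contains $\Cl_{\{\vee\}}(\ccS)$ and hence every model of $\phi$. Thus every solution of $\enumSat{MONDNF}$ already appears as a solution of $\EnumClo_{S_{10}}(f(\phi))$. The closure is however strictly larger, since $S_{10}$ also provides $\wedge$ and therefore produces intersections $v^{i,j} \wedge v^{i',j'}$ that need not contain any clause. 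This is where the flexibility of a polynomial delay reduction is used: I would take $C$ to be the problem that, on input $f(\phi)\,y$, outputs the single vector $y$ when $y$ is a model of $\phi$ and outputs nothing otherwise. Testing whether $y \models \phi$ is polynomial, so $C \in \DelayP$; the sets $C(f(\phi)\,y)$ are pairwise disjoint (each is empty or the singleton $\{y\}$), and their union over $y \in \Cl_{S_{10}}(f(\phi))$ is exactly the set of models of $\phi$. This is the required reduction.

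For $S_{12}$ I would proceed identically with the same $f$. After projecting out the coordinates of the set $I$ that are forced to $1$ (which do not affect enumeration), Lemma~\ref{lemma:struc} expresses $\Cl_{S_{12}}(f(\phi))$ as a union of closures by $BF$, whose full-support block again realizes $\Cl_{BF}(\ccS \cup \{\one\}) \supseteq \Cl_{\{\vee\}}(\ccS)$, using that $\one$ together with a coordinate taking value $0$ lets $S_{12}$ simulate negation and disjunction. Hence every model of $\phi$ is again a solution, and the same filter $C$ removes the spurious elements, which now come from complementation as well as from intersection.

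The only real difficulty is conceptual rather than technical: the closures under $S_{10}$ and $S_{12}$ are genuinely larger than the union-closure used for $\enumSat{MONDNF}$, so no size-preserving parsimonious reduction can work directly. The point to get right is that the polynomial delay reduction framework permits a cheap post-processing step $C$, and that the predicate it must decide, namely membership of a vector in the set of models of a monotone DNF, is in $\P$, which is exactly what keeps $C$ in $\DelayP$. Checking that the hypotheses of Lemma~\ref{lemma:struc} apply, and in particular that the auxiliary blocks $\Cl_{M_2}(f(\phi)_{\1(s)})$ and $\Cl_{BF}(\cdot)$ are absorbed into the full-support block indexed by $\one$, is the routine verification that remains.
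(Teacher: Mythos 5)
Your containment claim is correct: with $\one$ added, the $s=\one$ block of Lemma~\ref{lemma:struc} gives $\Cl_{S_{10}}(\ccS\cup\{\one\})\supseteq \Cl_{M_2}(\ccS\cup\{\one\})\supseteq \Cl_{\{\vee\}}(\ccS)$, so every model of $\phi$ is indeed a solution of your target instance (and similarly for $S_{12}$). The genuine gap is in how you use the reduction framework. Your $C$ maps every spurious closure element to the empty set, and there can be exponentially many spurious elements: for $\phi=\bigvee_{i\leq n/2}x_i$ your generators are the sets $\{i,j\}$ with $i\leq n/2$, whose pairwise intersections yield all singletons, so $\Cl_{S_{10}}(\ccS\cup\{\one\})$ is the entire powerset of $[n]$, while the $2^{n/2}$ subsets of $\{n/2+1,\dots,n\}$ are non-models. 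A reduction that may discard unboundedly many solutions of $B$ does not transfer delay bounds: composing a polynomial delay algorithm for $\EnumClo_{S_{10}}$ with your filter yields a delay proportional to the longest run of consecutive non-models in an enumeration order you do not control, which can be exponential (even the first output can be exponentially delayed). This defeats the very purpose of the proposition, namely the conditional lower bound stated right after it. Worse, under your reading of the definition, $\DelayP$ would not be closed under polynomial delay reductions -- the same filtering trick would ``reduce'' $\enumSat{CNF}$ to the trivial problem of enumerating all binary strings -- contradicting the remark following the definition. The intended reading, satisfied by every reduction actually used in the paper, requires each $y\in B(f(x))$ to contribute at least one solution, i.e.\ $C(xy)\neq\emptyset$.

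The missing idea is therefore to build an instance whose closure is \emph{exactly} the set of models, so that nothing has to be discarded; your assertion that ``no size-preserving parsimonious reduction can work directly'' is mistaken, since this is precisely what the paper does. It works with the bit-flip dual of $\phi$ (clauses of negative variables, a trivial parsimonious correspondence), assumes no variable occurs in all clauses (the remaining variables are projected out by a trivial reduction), encodes each clause $C_i$ by the vector $v^i$ that is $0$ exactly on the variables of $C_i$, and adds all unit vectors $e^j$. By Lemma~\ref{lemma:struc}, the block of $v^i$ is then the set of \emph{all} vectors supported in $\1(v^i)$ -- the projected unit vectors generate everything under $\vee$ -- which is exactly the set of models of the clause $C_i$; the blocks of the $e^j$ consist of models as well. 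Hence $\Cl_{S_{10}}(\ccS)=\Cl_{S_{12}}(\ccS)=\phi(X)$ with no junk at all, $C$ is essentially the identity, and the reduction is tight ($m$ clauses and $n$ variables become $m+n$ vectors of size $n$), which is what makes the conditional lower bound on the delay of $\EnumClo_{S_{10}}$ and $\EnumClo_{S_{12}}$ go through.
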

\begin{proof}
 Let $\phi = \displaystyle{\vee_{i \in [m]} C_i }$ where each $C_i$ is a conjunction of negative variables in $X =\{x_0,\dots,x_{n-1}\}$. We assume that no variable appear in all clauses otherwise we could do a trivial
 polynomial delay reduction to get rid of it.
 To a clause $C_i$ we associate the vector $v^i$ such that $v^i_j = 0$ if and only if $x_j \in C_i$.
 Let  $e^i$ denote  the vector which has the coefficient $e^i_i =1$ and its other coefficients are zero.
 We define $\ccS = \{ v^i \mid i \in [m]\} \cup \{ e^i \mid i \in [n]\}$. Note that there is natural bijection
 between assignments of $\phi$ and vectors of size $n$. Let $\phi(X)$ stand for the set of satisfying assignments of $\phi$.
 Note that $\phi(X) = \displaystyle{\cup_{i\in[m]} C_i(X)}$. And by definition $C_i(X)$ is equal to the set of all assignments where the variables in $C_i$ are equal to $0$. Since both $S_{10}$ and $S_{12}$ contain $\wedge$, $C_i(X) = \Cl_{S_{10}}(\{v^i\} \cup \{e^j \mid j\in[n]\})_{\1_{v^i}} = \Cl_{S_{12}}(\{v^i\} \cup \{e^j \mid j\in[n]\})_{\1_{v^i}}$. Since $C_i(X)$ contains $\Cl_{S_{10}}(\ccS_{\1_{v^i}})$ and $\Cl_{S_{12}}(\ccS_{\1_{v^i}})$ it is equal to these sets. Since no variable appear in all clauses, $\Cl_{S_{10}}(\ccS_{\1_{e^i}})$ and $\Cl_{S_{12}}(\ccS_{\1_{e^i}})$ are in $\phi(X)$ for all $i$. Hence, using the structure of $\Cl_{S_{10}}(\ccS)$ and $\Cl_{S_{12}}(\ccS)$ given in Lemma~\ref{lemma:struc}, we obtain $\Cl_{S_{10}}(\ccS) = \Cl_{S_{12}}(\ccS) = \phi(X)$.
\end{proof}

Since the reduction transforms an instance of $\enumSat{MONDNF}$ with $m$ clauses and $n$ variables into an instance of $\EnumClo_{S_{10}}$ or $\EnumClo_{S_{12}}$ with $m+n$ vectors of size $n$ any algorithm improving the delay of $\EnumClo_{S_{10}}$ or $\EnumClo_{S_{12}}$ would improve the one of $\enumSat{MONDNF}$.

\subsection{Majority and threshold}\label{sec:threshold}

An operation $f$ is a \emph{near unanimity} of arity $k$ if it satisfies $f(x_1,x_2,\dots,x_k) = x$ for each $k$-tuple
with at most one element different from $x$. The \emph{threshold} function of arity $k$, denoted by $Th^{k}_{k-1}$, is defined by $Th^{k}_{k-1}(x_1,\dots,x_k)$ is equal to $1$ if and only if at least $k-1$ of the elements $x_1,\dots,x_k$ are equal to one.
It is the smallest near unanimity operation over the booleans.
We use the Baker-Pixley theorem from universal algebra to characterize the closure by any clone which contains a threshold function by its projections of fixed size.

\subsubsection*{Majority}

The threshold function $Th^{3}_2$ is the majority operation over three booleans that we denote by $maj$ and the clone it generates is $D_2$. We could directly apply the Baker-Pixley theorem in this case, but we felt that for a computer science readership, it would be useful to present a proof of a special case: the characterization of $\Cl_{D_2}(\ccS)$. We also show how to obtain the best possible enumeration algorithm for deciding $\EnumClo_{D_2}$.

\begin{lemma}\label{lemma:maj}
Let $\ccS$ be set of vectors. Then $v$ belongs to $\Cl_{D_2}(\ccS)$ if and only if for all $i,j\in [n]$, $i\neq j$, there exists $x\in \ccS$ such that $x_{i,j}=v_{i,j}$.
\end{lemma}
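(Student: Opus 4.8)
The statement is exactly the arity-three (i.e.\ $d=2$) instance of the Baker--Pixley theorem~\cite{baker1975polynomial}: since $D_2 = <maj>$ and $maj$ is a near unanimity operation of arity $3$, the relation $\Cl_{D_2}(\ccS)$ is completely determined by its projections onto pairs of coordinates. My plan is to reprove this special case by hand, splitting into the two implications of the equivalence. One direction asserts that passing to the closure cannot create new pair-projections; the other, which carries the real content, asserts that a vector all of whose pair-projections already occur in $\ccS$ must itself be reachable by iterated majorities.

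For the forward implication I would argue by induction on the number of applications of $maj$ used to build $v$. If $v \in \ccS$ take $x = v$. Otherwise $v = maj(a,b,c)$ with $a,b,c$ obtained by fewer applications, and since $maj$ acts coordinate-wise we have $v_i = maj(a_i,b_i,c_i)$ for every $i$. Fix a pair $\{i,j\}$ with $i \neq j$. Because $v_i$ is the majority of $a_i,b_i,c_i$, at least two of $a,b,c$ agree with $v$ on coordinate $i$; likewise at least two agree with $v$ on coordinate $j$. As there are only three vectors, these two subsets of size $\geq 2$ must intersect, so some single vector among $a,b,c$ --- say $a$ --- satisfies $a_{i,j} = v_{i,j}$. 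Applying the induction hypothesis to $a$ yields $x \in \ccS$ with $x_{i,j} = a_{i,j} = v_{i,j}$, as required.

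The converse is the Baker--Pixley gluing argument, and is where the work lies. Assuming that every pair-projection of $v$ already occurs in $\ccS$, I would prove by induction on $|K|$ the stronger claim that for every $K \subseteq [n]$ with $|K| \geq 2$ there is $w \in \Cl_{D_2}(\ccS)$ with $w_K = v_K$; taking $K = [n]$ (assuming $n \geq 2$, the degenerate case $n \leq 1$ being immediate) then gives $w = v \in \Cl_{D_2}(\ccS)$. The base case $|K| = 2$ is exactly the hypothesis. For the inductive step, let $|K| = k+1 \geq 3$ and choose three distinct indices $i_1,i_2,i_3 \in K$. The three sets $K \setminus \{i_1\}, K \setminus \{i_2\}, K \setminus \{i_3\}$ each have size $k$, so the induction hypothesis supplies $r_1,r_2,r_3 \in \Cl_{D_2}(\ccS)$ agreeing with $v$ on the respective sets. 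Set $w = maj(r_1,r_2,r_3)$, which lies in $\Cl_{D_2}(\ccS)$ because the closure is invariant under $maj$. The key counting step is that any coordinate $j \in K$ is omitted from at most one of the three sets $K \setminus \{i_t\}$ (namely only when $j = i_t$), hence $j$ belongs to at least two of them; consequently at least two of $r_1,r_2,r_3$ equal $v_j$ on coordinate $j$, so $w_j = v_j$. Since this holds for every $j \in K$, we get $w_K = v_K$, closing the induction.

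The only delicate point is this coordinate-counting in the inductive step --- verifying that dropping one index at a time from a common $(k+1)$-element set leaves each coordinate covered by a majority of the three witnesses --- together with the bookkeeping that all three intermediate witnesses genuinely lie in $\Cl_{D_2}(\ccS)$ so that $maj$ may legitimately be applied to them. Everything else, including the forward direction, is routine.
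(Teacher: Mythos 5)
Your proof is correct, and while your forward direction matches the paper's, your converse follows a genuinely different route. For the forward implication the two arguments are essentially the same pigeonhole observation about $maj$: you phrase it positively (at least two of the three arguments agree with $v$ at $i$, at least two agree at $j$, and since there are only three vectors these two majorities share a common member), while the paper phrases it contrapositively (a pair value absent from $\ccS$ cannot be created by applying $maj$); both are inductions over the derivation of $v$. The real divergence is in the converse. The paper fixes an ordering of the coordinates and inducts on the length $k$ of a prefix: given $w \in \Cl_{D_2}(\ccS)$ agreeing with $v$ on the first $k-1$ coordinates and witnesses $v^i \in \ccS$ with $v^i_{i,k} = v_{i,k}$, it repairs coordinate $k$ sequentially via the chain $u^1 = v^1$, $u^i = maj(w, u^{i-1}, v^i)$, with a nested induction to track which coordinates have been secured. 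You instead run the symmetric Baker--Pixley gluing: induct on $|K|$, take witnesses for $K \setminus \{i_1\}$, $K \setminus \{i_2\}$, $K \setminus \{i_3\}$, and merge them with a \emph{single} application of $maj$, noting that each $j \in K$ lies in at least two of the three sets. Your version is shorter, avoids the bookkeeping, and generalizes verbatim to near-unanimity operations of any arity; it is in effect a direct proof of the special case of Theorem~\ref{thm:BP} that the paper's authors say they are specializing. What the paper's chain construction buys in exchange is a witness derivation of polynomial size (roughly $n^2$ applications of $maj$), whereas unrolling your subset recursion naively costs exponentially many applications; this is irrelevant for the membership criterion and the enumeration algorithms built on it, which only check pairs, but it is the one respect in which the paper's proof is more informative. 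A small caveat: at $n=1$ the pair condition is vacuous and the stated equivalence actually fails (take $\ccS = \{\zero\}$ and $v = \one$), so the case $n \leq 1$ you call immediate is really an implicit assumption that $n \geq 2$ --- an assumption the paper's proof shares, since its induction also starts at $k=2$.
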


\begin{proof}

    \noindent ($\Longrightarrow$)
     Given $a,b\in \{0,1\}$ and $i,j\in [n]$, $i\neq j$, we first show that if for all $v\in\ccS$, $v_i\neq a$ or $v_j\neq b$ then for all $u\in \Cl_{D_2}(\ccS)$, $u_i\neq a$ or $u_j\neq b$. It is sufficient to prove that this property is preserved by applying $maj$ to a vector set i.e. that if $\ccS$ has this property, then $maj(\ccS)$ has also this property. Let $x,y,z\in \ccS$, $v:=maj(x,y,z)$, and assume for contradiction that $v_{i,j}=(a,b)$. Since $v_i=a$, there is at least two vectors among $\{x,y,z\}$ that are equal to $a$ at index $i$. Without loss of generality, let $x$ and $y$ be these two vectors. Since for all $u\in\ccS$, $u_i\neq a$ or $u_j\neq b$, we have $x_j\neq b$ and $y_j\neq b$ and then $v_j\neq b$ which contradicts the assumption. We conclude that if $v\in \Cl_{D_2}(\ccS)$, then for all $i,j\in [n]$, there exists $u\in \ccS$ with $v_{i,j}=u_{i,j}$.

\noindent ($\Longleftarrow$)
  Let $k\leq n$ and let $a_1,...,a_k\in \{0,1\}$.
We show by induction on $k$, that if for all $i,j\leq k$ there exists $v\in\ccS$ with $v_i=a_i$ and $v_j=a_j$, then there exists $u\in \Cl_{D_2}(\ccS)$ with $u_1=a_1$, $u_2=a_2$, $...$, $u_k=a_k$. The assertion is true for $k=2$. Assume
it is true for $k-1$, and let $a_1,...,a_k\in \{0,1\}$. By induction hypothesis there
exists a vector $w\in \Cl_{D_2}(\ccS)$ with $w_1=a_1$, $...$, $w_{k-1}=a_{k-1}$.
By hypothesis, for all $i\leq k$ there exists $v^{i}\in \ccS$ with
$v^{i}_{i}=a_i$ and $v^{i}_{k}=a_k$. We then construct a sequence of vectors
$(u^i)_{i\leq k}$ as follow. We let $u^{1}=v^{1}$ and for all $1<i<k$,
$u^{i}=maj(w,u^{i-1},v^{i})$. We claim that $u:=u^{k-1}$ has the property sought
i.e. for all $i\leq k$, $u_{i}=a_{i}$. First let prove that for all $i<k$ and for
all $j\leq i$, $u^{i}_{j}=a_j$. It is true for $u^{1}$ by definition.
Assume now that the property holds for $u^{i-1}$, $i<k$. Then, by construction,
for all $j\leq i-1$, we have $u^{i}_j=a_j$ since $w_j=a_j$
and $u^{i-1}_{j}=a_j$. Furthermore, we have
$u^{i}_{i}=maj(w_i,u^{i-1}_{i},v^{i}_i)=a_i$ since $w_i=a_i$ and $v_i=a_i$. We
conclude that for all $i\leq k-1$, $u_i=u^{k-1}_{i}=a_i$.

    We claim now that for all $i<k$,  $u^{i}_k=a_k$. It is true for $u^{1}$. Assume it is true for $u^{i-1}$, $i<k$. Then we have $u^{i}_k=maj(w_k,u^{i-1}_{k},v^{i}_k)$ which is equal to $a_k$ since $u^{i-1}_{k}=a_k$ by induction and $v^{i}_k=a_k$ by definition. We then have $u_i=a_i$ for all $i\leq k$ which concludes the proof.
\end{proof}

As an immediate consequence we get the following corollary and proposition.

\begin{corollary}
$\Mem_{D_2}\in \P$.
\end{corollary}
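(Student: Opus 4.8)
The plan is to read off the algorithm directly from the characterization in Lemma~\ref{lemma:maj}. That lemma asserts that a vector $v$ of size $n$ lies in $\Cl_{D_2}(\ccS)$ if and only if for every pair of distinct indices $i,j \in [n]$ there is a witness $x \in \ccS$ with $x_{i,j} = v_{i,j}$. Since membership has thus been reduced to a conjunction of at most $\binom{n}{2}$ purely local conditions, it suffices to check each of them in polynomial time.

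Concretely, I would iterate over all $\binom{n}{2}$ pairs $(i,j)$, and for each pair scan the $m$ vectors of $\ccS$ until one with $x_i = v_i$ and $x_j = v_j$ is found; if no such vector exists for some pair the algorithm rejects, and otherwise it accepts. Each pair is handled in time $O(m)$, so the total running time is $O(mn^2)$, which is polynomial in the size of the input. Alternatively, one may precompute, for each pair $(i,j)$ and each of the four values $(a,b) \in \{0,1\}^2$, whether a witness exists, by a single pass over $\ccS$ recording every projection $x_{i,j}$; this costs $O(mn^2)$ and then answers the membership query in time $O(n^2)$.

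The correctness is immediate from Lemma~\ref{lemma:maj}, so there is in fact no real obstacle at this stage: all the difficulty has already been absorbed into the proof of the lemma, and the corollary is merely the observation that its pairwise criterion is efficiently testable. Combined with Proposition~\ref{prop:decisionenum}, this also yields $\EnumClo_{D_2} \in \DelayP$, although a better delay will be obtained by exploiting the structure of $\Cl_{D_2}(\ccS)$ rather than calling the membership test as a black box.
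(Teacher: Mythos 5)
Your proof is correct and follows exactly the paper's argument: both reduce the membership test to checking, for every pair of indices $(i,j)$, whether some vector of $\ccS$ agrees with $v$ on those coordinates, which Lemma~\ref{lemma:maj} certifies is equivalent to membership, and both arrive at the same $O(mn^2)$ running time. The extra remarks on precomputation and on the consequence for $\EnumClo_{D_2}$ are fine but add nothing beyond what the paper itself notes immediately after the corollary.
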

\begin{proof}
Using Lemma~\ref{lemma:maj}, one decides whether a vector $v$ is in $\Cl_{D_2}(S)$, by checking for every pair of indices
$i,j$ whether there is a vector $w \in S$ such that $v_{i,j} = w_{i,j}$. \end{proof}

The complexity of the algorithm of the previous corollary is $O(mn^2)$, hence applying Prop~\ref{prop:partialsolutions}, we get an enumeration algorithm with delay $O(mn^3)$. We explain how to improve this delay in the next proposition.

\begin{proposition}\label{prop:speedup}
 $\EnumClo_{D_2}$ can be solved with delay $O(n^2)$.
\end{proposition}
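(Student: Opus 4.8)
The goal is to enumerate $\Cl_{D_2}(\ccS)$ with delay $O(n^2)$, improving on the naive $O(mn^3)$ obtained by plugging the membership test into backtrack search. The plan is to run the backtrack search but amortize the cost of the extension test along each branch, avoiding recomputation from scratch at every node. The crucial observation, coming from Lemma~\ref{lemma:maj}, is that membership (and hence extendability) is characterized by a purely \emph{local} condition on pairs of coordinates: a partial vector $v$ of length $l$ extends to an element of $\Cl_{D_2}(\ccS)$ if and only if for every pair $i,j \leq l$ there is some $w \in \ccS$ with $w_{i,j} = v_{i,j}$. Thus the whole combinatorial information about $\ccS$ that backtrack search needs is captured by, for each ordered pair of indices $(i,j)$ and each pair of bits $(a,b) \in \{0,1\}^2$, whether some vector realizes that pattern on $(i,j)$.

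First I would precompute, in time $O(mn^2)$, a table recording for each pair $i,j \in [n]$ the set of pairs $(a,b)$ that appear as $w_{i,j}$ for some $w \in \ccS$; equivalently, which of the four patterns $00,01,10,11$ is \emph{forbidden} on each pair. After this precomputation $\ccS$ itself is no longer needed, which is what removes the factor $m$ from the delay. The backtrack tree is then traversed as in Proposition~\ref{prop:partialsolutions}: at a node corresponding to a partial solution $v$ of length $l$, the children are the extensions by $0$ and by $1$ at position $l+1$. To decide whether the extension $v b$ (for $b \in \{0,1\}$) is a valid partial solution, I would \emph{not} recheck all pairs: since $v$ was already known to be extendable, only the new pairs $(i, l+1)$ for $i \leq l$ can become violated. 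Checking each such pair against the precomputed table costs $O(1)$, so testing one extension costs $O(n)$ and testing both children costs $O(n)$.

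The delay analysis then proceeds by charging work to the depth of the tree rather than to its nodes. Each branch from the root to a leaf has length $n$, and at each of the $n$ levels we perform the $O(n)$ extension test described above; moving from one output solution to the next in a depth-first traversal of a tree whose branches have length $n$ requires traversing $O(n)$ edges, each carrying $O(n)$ work, for a total delay of $O(n^2)$. One must take care, as usual in backtrack search, to guarantee that every internal node has at least one leaf in its subtree — but this is exactly the extendability test being correct, which is precisely what Lemma~\ref{lemma:maj} supplies, so no solutions are ever ``dead ends'' and the delay bound holds uniformly.

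The step I expect to be the main obstacle is making the incremental update rigorous enough that the per-level cost is genuinely $O(n)$ and not $O(n^2)$: one must argue that when extending $v$ to $vb$, exactly the pairs involving the new coordinate need to be examined, and that no previously satisfied pair can be invalidated by fixing a later coordinate. This follows from the local, pairwise nature of the condition in Lemma~\ref{lemma:maj} — the constraints decompose over pairs and fixing coordinate $l+1$ affects only constraints mentioning $l+1$ — but stating it carefully, together with the amortized accounting over a branch, is where the real work lies. The precomputation and the Gray-code-free, purely structural delay bound are then routine.
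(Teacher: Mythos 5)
Your proof is correct and follows essentially the same route as the paper's: precompute the $O(mn^2)$ table of pair patterns realized in $\ccS$, use the pairwise characterization of Lemma~\ref{lemma:maj} (projected via Proposition~\ref{prop:decisionenum}) so that extending a partial solution by one coordinate only requires checking the $O(n)$ new pairs involving that coordinate, and amortize over a branch of length $n$ to get delay $O(n^2)$. The incremental-update point you flag as the ``main obstacle'' is exactly the argument the paper makes, and it is immediate since the constraints decompose over pairs.
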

\begin{proof}
We do a backtrack search and we explain how to efficiently decide $\Mem_{D_2}$ during the enumeration.
 We first precompute for each pair $(i,j)$ all values $(a,b)$ such that there exists $v\in \ccS$,
 $v_{i,j} = (a,b)$. When we want to decide whether the vector $v$ of size $l$ can be extended into a solution,
 it is enough that it satisfies the condition of Lemma~\ref{lemma:maj}. Moreover, we already know that
 $v_{[l-1]}$ satisfies the condition of Lemma~\ref{lemma:maj}. Hence we only have to check that the values of $v_{i,l}$ for all $i <l$
 can be found in $\ccS_{i,l}$ which can be done in time $O(l)$. The delay is the sum of the complexity of deciding $\Mem_{D_2}$ for each partial solution in a branch: $O(n^2)$.
\end{proof}

We can yet improve the delay by using an algorithm which does not use the backtrack search of Prop~\ref{prop:partialsolutions}.
The closure of $\ccS$ can be represented by a boolean formula $\phi_{\ccS}$ which describes the obstructions to being in $\Cl_{D_2}(\ccS)$. Its variables are $X = \{x_0,\dots,x_{n-1}\}$, therefore its assignments are naturally in bijection with the boolean vectors of size $n$ and we do not distinguish both. Recall that $\enumSat{2SAT}$ is the problem of generating all satisfying assignments of a $2CNF$ formula, that is a formula in conjunctive normal form where each clause is of size at most two.

\begin{proposition}\label{prop:2SAT}
 $\EnumClo_{D_2}$ can be can be solved with delay $O(n)$.
\end{proposition}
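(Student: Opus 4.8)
The plan is to translate $\EnumClo_{D_2}$ into the enumeration of the satisfying assignments of a $2$CNF formula and then to enumerate those with linear delay.

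First I would make the formula $\phi_{\ccS}$ explicit, so that its models are exactly $\Cl_{D_2}(\ccS)$. For every pair $i<j$ and every pattern $(a,b)\in\{0,1\}^2$ such that no $x\in\ccS$ satisfies $x_{i,j}=(a,b)$, I add the $2$-clause forbidding $x_i=a$ and $x_j=b$ simultaneously, namely the disjunction of the literal that is true when $x_i\neq a$ and the literal that is true when $x_j\neq b$. By Lemma~\ref{lemma:maj}, an assignment $v$ satisfies all of these clauses if and only if for every pair $(i,j)$ the pattern $v_{i,j}$ is realized in $\ccS$, that is, if and only if $v\in\Cl_{D_2}(\ccS)$. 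Hence the models of the $2$CNF $\phi_{\ccS}$ coincide with the vectors of $\Cl_{D_2}(\ccS)$, and $\EnumClo_{D_2}$ becomes an instance of $\enumSat{2SAT}$. Building $\phi_{\ccS}$ and its implication graph is carried out once, in the polynomial precomputation step that is not charged to the delay.

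Second I would normalise the instance. Using Proposition~\ref{prop:red} I may assume that no coordinate is constant, equal to a fixed function of another coordinate, or free; as a result the implication graph of $\phi_{\ccS}$ has no forced literal and every strongly connected component is a single literal, so that it is already a skew-symmetric acyclic graph on the $2n$ literals. A model is then nothing but a set of literals closed under implication containing exactly one literal of each complementary pair, which is the classical description of the solutions of a satisfiable $2$SAT instance. I would exploit the fact that $\Cl_{D_2}(\ccS)$, being closed under $maj$, induces a connected (in fact convex) subgraph of the hypercube, and enumerate its elements by a supergraph / reverse-search traversal rather than by the backtrack search of Proposition~\ref{prop:partialsolutions}. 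A move from a model $v$ would consist in choosing a non-constant coordinate $i$ and jumping to the unique model closest to $v$ with $v_i$ flipped; this move is well defined by the median structure, changes at most $n$ coordinates, and is computed by a single traversal of the implication graph from the flipped literal in time $O(n)$ once, for each literal, the component of coordinates it forces has been precomputed. Fixing a canonical parent for each non-root model turns this solution graph into a spanning arborescence, which reverse search traverses while updating the current vector in $O(n)$ per step.

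The hard part will be to guarantee a delay of $O(n)$ rather than $O(n^2)$. A plain reverse search pays, at each node, the cost of scanning all $n$ candidate flips and of recomputing each forced component, which already amounts to $\Theta(n^2)$ and merely reproduces the bound of Proposition~\ref{prop:speedup}. The crucial technical point is therefore to amortise these extension tests along the traversal: maintaining incrementally, for the current model, which coordinates are flippable together with the data needed to perform and to undo a flip, so that the total work between two consecutive outputs is linear in $n$. This amortisation, combined with the connectivity supplied by the majority (median) structure of $\Cl_{D_2}(\ccS)$, is what upgrades the straightforward quadratic-delay algorithm to the claimed delay $O(n)$.
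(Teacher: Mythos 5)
Your first half is exactly the paper's proof: you build the same $2$CNF $\phi_{\ccS}$ whose clauses forbid the patterns absent from $\ccS_{i,j}$, and Lemma~\ref{lemma:maj} gives that its models are precisely $\Cl_{D_2}(\ccS)$, so $\EnumClo_{D_2}$ reduces to $\enumSat{2SAT}$. Where you diverge is in how the resulting $2$SAT instance is enumerated with delay $O(n)$: the paper simply invokes the known algorithm of Feder~\cite{feder1994network}, which enumerates the models of a $2$CNF with delay $O(n)$, and stops there. You instead try to rebuild such an algorithm from scratch, and this is where your proof has a genuine gap. You yourself identify the crucial step --- amortising the per-model work so that scanning candidate flips and recomputing forced components costs $O(n)$ rather than $\Theta(n^2)$ between consecutive outputs --- but you only \emph{state} that this amortisation should be done (``maintaining incrementally \dots so that the total work between two consecutive outputs is linear in $n$''), you never exhibit the data structures, the invariants, or the charging argument. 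That missing amortisation is precisely the nontrivial content of Feder's result; without it your argument only re-derives the $O(n^2)$ delay of Proposition~\ref{prop:speedup}.

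There is also an incorrect structural claim along the way: a set closed under $maj$ does \emph{not} induce a connected, let alone convex, subgraph of the hypercube. For instance $\{(0,1),(1,0)\}$ is closed under majority but is disconnected as an induced subgraph, and its hypercube interval contains $(0,0)$ and $(1,1)$. The median-graph structure on the models of a $2$CNF lives on a different adjacency (your ``jump to the gate'' moves), so the connectivity of the traversal graph you define is itself something that would have to be proved, not read off from convexity. The cleanest fix is the paper's: after your (correct) reduction, cite the $O(n)$-delay enumeration of $2$SAT models as a black box; alternatively, if you want a self-contained algorithm, you must actually carry out the amortised analysis you postponed.
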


\begin{proof}
We prove that $\EnumClo_{D_2}$ can be reduced to $\enumSat{2SAT}$ by a polynomial delay reduction.
To an instance  $\ccS$, we associate a formula $\phi_{\ccS}$ in conjunctive normal form. Given $i\neq j$ and $v_{i,j}$ which is \emph{not} in $\ccS_{i,j}$, we define the clause $l_i \vee l_j$ where  $l_k$, for $k=i,j$, is $x_k$ if $v_k = 0$ and $\neg x_k$ otherwise. The formula $\phi_{\ccS}$ is the conjunction of all such clauses. By construction, the formula $\phi_{\Cl_{D_2}(\ccS)}$ is a $2CNF$.

 Notice that each clause forbids the assignment to take a pair of values which is not present in $\ccS$.
 Therefore a boolean vector $v$ seen as an assignment satisfies $\phi_{\ccS}$ if and only if every projection of size $2$ of $v$ is in $\ccS$. By Lemma~\ref{lemma:maj}, it means that $v \in \Cl_{D_2}(\ccS)$ if and only if $v$ satisfies   $\phi_{\ccS}$.

The proposition is proved since the problem $\enumSat{2SAT}$ can be solved with delay $O(n)$, where $n$ is the number of variables using an algorithm of Feder~\cite{feder1994network}.
%
 \end{proof}

We remark that we can apply the idea of Prop~\ref{prop:2SAT} to the clones of Subsection~\ref{sec:all} and to $D_1$ which all contain the $maj$ function. Indeed, closures by these clones are all characterized by their projections of size two, which allows to do a reduction to a $2CNF$ formula. The algorithms we obtain have delay $O(n)$. However, we felt it was relevant to deal with clones of Subsection~\ref{sec:all} separately, since the algorithms presented in Section~\ref{sec:all} are different and simpler. In particular, for the clones $R_0$ and $R$, the delay is only $O(1)$ if we allow a compact representation of the solutions.

\subsubsection*{The Baker-Pixley theorem}

We now state the Baker-Pixley theorem which generalizes Lemma~\ref{lemma:maj} to all clones which contain near unanimity terms. For more context on this theorem and universal algebra see \cite{burris2006course}.

\begin{theorem}[Baker-Pixley, adapted from~\cite{baker1975polynomial}]\label{thm:BP}
Let $\cF$ be a clone which contains a near unanimity term of arity $k$,
then $v \in \Cl_{\cF}(\ccS)$ if and only if for each set of indices $I$ of size $k-1$,
$v_I \in \Cl_{\cF}(\ccS)_I$.
\end{theorem}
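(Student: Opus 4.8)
The plan is to prove the Baker-Pixley theorem (Theorem~\ref{thm:BP}) by generalizing the induction argument already used in Lemma~\ref{lemma:maj}, where the majority operation $maj$ is simply the near unanimity term of arity $3$. The easy direction ($\Longrightarrow$) is that membership in $\Cl_{\cF}(\ccS)$ is preserved under projection: if $v \in \Cl_{\cF}(\ccS)$, then certainly $v_I \in \Cl_{\cF}(\ccS)_I$ for every index set $I$, since projection commutes with the coordinate-wise application of the operations of $\cF$. This is immediate and requires no work.

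The substance is the reverse direction ($\Longleftarrow$). First I would fix a near unanimity term $\mu$ of arity $k$ in $\cF$, so that $\mu(x,\dots,x,y)=\mu(x,\dots,y,x)=\dots=\mu(y,x,\dots,x)=x$ for all $x,y \in D$. I would then prove by induction on $l$ (ranging from $k-1$ up to $n$) the statement: if for every index set $I \subseteq [l]$ with $|I|=k-1$ we have $v_I \in \Cl_{\cF}(\ccS)_I$, then $v_{[l]} \in \Cl_{\cF}(\ccS)_{[l]}$. The base case $l=k-1$ is the hypothesis itself. For the inductive step, assuming the result for $l-1$, I have a witness $w \in \Cl_{\cF}(\ccS)$ agreeing with $v$ on $[l-1]$, and for each size-$(k-1)$ subset $J \subseteq [l]$ containing the new index $l$, a witness agreeing with $v$ on $J$. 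The goal is to combine these finitely many witnesses using $\mu$ so as to produce a single element of $\Cl_{\cF}(\ccS)$ agreeing with $v$ on all of $[l]$.

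The key combinatorial idea, which is exactly the engine of the Lemma~\ref{lemma:maj} proof, is that the near unanimity property lets us ``correct'' one coordinate at a time while preserving the coordinates already fixed. For each coordinate $i \in [l-1]$ I can find a witness $v^i \in \Cl_{\cF}(\ccS)$ matching $v$ on a $(k-1)$-set containing both $i$ and $l$; then I iteratively apply $\mu$ to $w$, the running approximation, and the next $v^i$, arguing that because at most one of the $k$ arguments disagrees with $a_j$ on any already-fixed coordinate $j$, the near unanimity axiom forces the output to keep the value $a_j$, while simultaneously fixing the value at $l$. The bookkeeping must ensure that at each application of $\mu$ only a single argument can deviate on the coordinates we care about, so that near unanimity applies; the higher arity $k$ means $\mu$ takes $k$ arguments rather than three, so the iteration and the choice of which witnesses to feed in becomes the delicate part.

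The main obstacle I anticipate is precisely this organization of the $k$-ary combination: with $maj$ one only needs to reconcile three vectors at a time, but for general $k$ one must carefully select which witnesses occupy the $k$ argument slots at each stage and verify the invariant ``on every coordinate under control at this stage, at most one of the $k$ inputs is wrong.'' Making this invariant precise and checking it survives each application of $\mu$ is where the real care lies, whereas the underlying mechanism is identical in spirit to the majority case. Since this is a classical result, I would state it as adapted from~\cite{baker1975polynomial} and either give this induction in full or, to save space, remark that it is the arity-$k$ analogue of the explicit argument in Lemma~\ref{lemma:maj} and cite~\cite{burris2006course} for the general universal-algebra version.
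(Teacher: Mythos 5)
Your fallback option (state the theorem as adapted from~\cite{baker1975polynomial} and point to~\cite{burris2006course}) is in fact exactly what the paper does: Theorem~\ref{thm:BP} is imported without proof, and the only argument written out is the $k=3$ majority case, Lemma~\ref{lemma:maj}. The problem is with the substance of your proposal, the induction itself: the inductive step, as you organize it, fails for every $k \geq 4$. You invoke the induction hypothesis once, to get a single anchor $w$ agreeing with $v$ on $[l-1]$, and then try to fix coordinates one at a time by applying $\mu$ to $w$, the running approximation $u^{i-1}$, and a witness $v^i$ agreeing with $v$ on a $(k-1)$-set containing $i$ and $l$. When $k=3$ these three vectors exactly fill the slots of $maj$ and the invariant ``at most one wrong argument per controlled coordinate'' holds; that is precisely the proof of Lemma~\ref{lemma:maj}. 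When $k\geq 4$ you must fill $k-3$ further slots, and no choice preserves the invariant: padding with copies of $w$ puts two or more wrong arguments at coordinate $l$; padding with copies of $u^{i-1}$ does the same at coordinate $i$; padding with copies of $v^i$ does the same at the already-fixed coordinates outside the $(k-1)$-set of $v^i$, which exist as soon as more than $k-3$ of them have been fixed. The obstruction is structural, not bookkeeping: near unanimity needs the sets of coordinates where the $k$ arguments err to be pairwise disjoint inside the target set, and a witness guaranteed correct on only $k-1$ coordinates cannot cover a fixed set once it has grown past size $k-1$. So the ingredients you allow yourself can reach targets of size $k$, but the iteration cannot continue beyond that.

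The repair is to use the induction hypothesis $k$ times instead of once. State the inductive claim for all subsets of size $l-1$ (or note it is invariant under permuting coordinates), pick any $k$ distinct indices $i_1,\dots,i_k \in [l]$, and obtain from the hypothesis witnesses $w^1,\dots,w^k \in \Cl_{\cF}(\ccS)$ with $w^j$ agreeing with $v$ on $[l]\setminus\{i_j\}$; each $w^j$ is then wrong on at most the single coordinate $i_j$. The single application $\mu(w^1,\dots,w^k)$ agrees with $v$ on all of $[l]$, because each coordinate of $[l]$ equals at most one of the $i_j$, so at most one argument deviates there and near unanimity forces the value $v_j$. This one-shot combination of co-singleton witnesses, rather than a coordinate-by-coordinate correction anchored at a single $w$, is the standard Baker--Pixley induction; with that change your base case, easy direction, and overall induction scheme go through as you describe.
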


This allows to settle the case of the two infinite families of clones
of our restricted lattice $S_{10}^k = <Th_k^{k+1}> $ and $S_{12}^k = <Th_k^{k+1},x \wedge (y \rightarrow z) > $.

\begin{corollary}\label{cor:coro}
If a clone $\cF$ contains $Th_k^{k+1}$ then $\Mem_{\cF}$ is solvable in time $O(mn^{k})$.
\end{corollary}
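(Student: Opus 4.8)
The plan is to apply the Baker--Pixley theorem (Theorem~\ref{thm:BP}) directly, using the fact that $Th_k^{k+1}$ is a near unanimity term of arity $k+1$. First I would observe that $Th_k^{k+1}(x_1,\dots,x_{k+1})$ returns $x$ whenever at least $k$ of its arguments equal $x$; in particular, if all $k+1$ arguments equal $x$ except possibly one, then at least $k$ of them equal $x$, so the output is $x$. This is precisely the near unanimity condition, so $Th_k^{k+1}$ is a near unanimity term of arity $k+1$. Consequently, by Theorem~\ref{thm:BP} applied with arity $k+1$, a vector $v$ belongs to $\Cl_{\cF}(\ccS)$ if and only if for every set of indices $I$ of size $k$ (that is, $(k+1)-1$), the projection $v_I$ lies in $\Cl_{\cF}(\ccS)_I$.

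The second step is to turn this characterization into a polynomial-time membership test, matching the running time claimed. The key point is that $\Cl_{\cF}(\ccS)_I = \Cl_{\cF}(\ccS_I)$, because the operations of $\cF$ act coordinate-wise (this is the same projection identity used in the proof of Proposition~\ref{prop:decisionenum}). Thus for each fixed $I$ of size $k$, deciding whether $v_I \in \Cl_{\cF}(\ccS_I)$ is a membership question over a domain of only $k$ coordinates. Since $|D| = 2$ and $k$ is fixed, the closure $\Cl_{\cF}(\ccS_I)$ is a subset of $\{0,1\}^k$ of constant size, and it can be computed by the naive saturation procedure described after the definition of $\EnumClo_\cF$ in constant time (depending only on $k$). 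Building $\ccS_I$ from $\ccS$ costs $O(mk) = O(m)$ time, so each index set $I$ is handled in $O(m)$ time.

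Finally I would count the index sets. There are $\binom{n}{k} = O(n^k)$ subsets $I$ of $[n]$ of size $k$, and for each we spend $O(m)$ time to form the projection and a constant amount of work to decide membership in the constant-size closure. Summing over all such $I$ gives a total running time of $O(mn^k)$, which is the claimed bound. The main thing to be careful about is the bookkeeping of arities: $Th_k^{k+1}$ has arity $k+1$, so Baker--Pixley yields projections of size $(k+1)-1 = k$, giving $\binom{n}{k}$ subsets rather than $\binom{n}{k-1}$; this is exactly what produces the exponent $k$ in $O(mn^k)$. No genuine obstacle arises here, since the hard analytic content lives entirely in Theorem~\ref{thm:BP}, which we are permitted to assume; the corollary is essentially an accounting argument once the near unanimity arity is identified correctly.
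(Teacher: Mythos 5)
Your proposal is correct and follows essentially the same route as the paper's proof: apply Theorem~\ref{thm:BP} with the near unanimity term $Th_k^{k+1}$ of arity $k+1$ to reduce membership to the $\binom{n}{k}$ projections of size $k$, use the identity $\Cl_{\cF}(\ccS)_I = \Cl_{\cF}(\ccS_I)$, and compute each constant-size closure by naive saturation after building $\ccS_I$ in $O(m)$ time. Your explicit check that $Th_k^{k+1}$ is a near unanimity term and the careful arity bookkeeping are points the paper leaves implicit, but the argument is the same.
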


\begin{proof}
 Let $\ccS$ bet a set of vectors and let $v$ be a vector. By Theorem~\ref{thm:BP}, $v\ \in \Cl_{\cF}(\ccS)$
 if and only if for all $I$ of size $k$, $v_I \in \Cl_{\cF}(\ccS)_I$. First notice that $\Cl_{\cF}(\ccS)_I = \Cl_{\cF}(\ccS_I)$ because the functions of $\cF$ act coefficient-wise on $\ccS$. The algorithm generates for each $I$ of size $k$ the set  $\Cl_{\cF}(\ccS_I)$. For a given $I$, we first need to build the set $\ccS_I$ in time $m$ and then the generation of $\Cl_{\cF}(\ccS_I)$ can be done in constant time. Indeed, we can apply the classical incremental algorithm to generate the elements in $\Cl_{\cF}(\ccS_I)$, and the cardinal of $\Cl_{\cF}(\ccS_I)$  only depends on $k$ which is a constant. The time to generate all $\Cl_{\cF}(\ccS_I)$ is $O(mn^k)$ and then all the tests can be done in $O(n^k)$.
 \end{proof}

We have proved that the complexity of any closure problem in one of our infinite families is polynomial.
We can use the method of Proposition~\ref{prop:speedup} to obtain an algorithm with delay $O(n^{k})$
to enumerate the elements of a set closed by a clone containing a near unanimity function of arity $k+1$. When the arity is $3$, we can use the method of Proposition~\ref{prop:2SAT} to obtain an algorithm with the better delay $O(n)$.

The complexity of $\Mem_\cF$ is increasing with the smallest arity of a near unanimity function in $\cF$. Hence we now investigate the complexity of the \emph{uniform problem} when the clone is given as input. Let \textsc{ClosureTreshold} be the following problem: given a set $\ccS$ of vectors and an integer $k$ decide whether the vector $\mathbf{1}\in \Cl_{S_{10}^k}(\ccS) $.
It is a restricted version of the uniform problem, but it is already hard to solve because we can reduce the problem \textsc{HittingSet} to its complement.

\begin{theorem}\label{thm:coNP-completeness}
 \textsc{ClosureTreshold} is $\coNP$-complete.
\end{theorem}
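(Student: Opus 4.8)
The plan is to prove the two halves of $\coNP$-completeness separately, after first converting the membership question into a purely combinatorial covering condition. The bridge is the Baker--Pixley theorem (Theorem~\ref{thm:BP}): since $S_{10}^k=\langle Th_k^{k+1}\rangle$ and $Th_k^{k+1}$ is a near unanimity term of arity $k+1\geq 3$, we have $\one\in\Cl_{S_{10}^k}(\ccS)$ if and only if $\one_I\in\Cl_{S_{10}^k}(\ccS_I)$ for every index set $I$ with $|I|=k$, using that $\Cl_{S_{10}^k}(\ccS)_I=\Cl_{S_{10}^k}(\ccS_I)$ because the operations act coefficient-wise. So the whole problem reduces to deciding, for a fixed $k$-set $I$, when the all-ones vector of length $k$ lies in the closure of the length-$k$ projections.

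First I would prove the key combinatorial lemma: for $|I|=k$, one has $\one_I\in\Cl_{S_{10}^k}(\ccS_I)$ if and only if some $v\in\ccS$ already satisfies $I\subseteq\1(v)$. The nontrivial direction is a counting argument. A single application $Th_k^{k+1}(w^1,\dots,w^{k+1})$ is $0$ at a coordinate exactly when at least two of the inputs are $0$ there; hence if the output is all-ones on the $k$ coordinates of $I$, each coordinate is $0$ in at most one input, so the total number of zero-coordinates among the $k+1$ inputs is at most $k$. If none of the inputs were already all-ones on $I$, each would contribute at least one zero, giving at least $k+1$ zero-coordinates, a contradiction. An induction on the closure level then shows $\one_I$ can enter $\Cl_{S_{10}^k}(\ccS_I)$ only if it is present in $\ccS_I$ from the start. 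Combined with Baker--Pixley this yields the clean characterisation: $\one\in\Cl_{S_{10}^k}(\ccS)$ iff every $k$-subset $I$ of $[n]$ is contained in $\1(v)$ for some $v\in\ccS$; equivalently, $\one\notin\Cl_{S_{10}^k}(\ccS)$ iff there is a $k$-set $I$ meeting $\0(v)$ for every $v\in\ccS$, i.e. a size-$k$ transversal of the family $\{\0(v)\mid v\in\ccS\}$.

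With this characterisation the two bounds are immediate. For membership in $\coNP$, a witness for the complement is a single index set $I$ of size $k$; one checks in time $O(mk)$ that no $v\in\ccS$ is all-ones on $I$, so the complement is in $\NP$. For hardness I would reduce \textsc{HittingSet} to the complement of \textsc{ClosureTreshold}: given a family $\{T_1,\dots,T_m\}$ over $[n]$ and a bound $k$, build $\ccS=\{v^1,\dots,v^m\}$ with $\0(v^j)=T_j$. Since hitting is monotone, the family has a transversal of size at most $k$ iff it has one of size exactly $k$ (pad an arbitrary transversal with unused elements, using $k\le n$), which by the characterisation happens iff $\one\notin\Cl_{S_{10}^k}(\ccS)$. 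Thus \textsc{HittingSet} is a yes-instance iff \textsc{ClosureTreshold} is a no-instance, and since \textsc{HittingSet} is $\NP$-complete the complement of \textsc{ClosureTreshold} is $\NP$-hard.

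The main obstacle is the combinatorial lemma, made delicate by the fact that $k$ is part of the input: a priori the closure may use operations of unbounded arity, and even after Baker--Pixley restricts attention to projections of size $k$ there are $\binom{n}{k}$ of them, far too many to enumerate. The counting argument is precisely what makes each projection trivial to analyse and, crucially, produces a polynomial-size certificate (a single $k$-set) for non-membership, which is exactly what the $\coNP$ upper bound requires. The remaining points---padding transversals up to exact size $k$, and the degenerate cases where some $T_j$ is empty (equivalently $\one\in\ccS$, in which case membership trivially holds and no transversal exists)---are routine and each need only a sentence.
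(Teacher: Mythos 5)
Your proof is correct and follows essentially the same route as the paper: Baker--Pixley to reduce membership of $\one$ to the size-$k$ projections, the same pigeonhole argument showing that $\one_I$ can only lie in $\Cl_{S_{10}^k}(\ccS_I)$ if some vector of $\ccS_I$ already equals $\one$, and the same reduction from \textsc{HittingSet} using characteristic vectors of complemented hyperedges. The only detail the paper states explicitly that you elide is the assumption $k\geq 3$, under which $S_{10}^k=\langle Th_k^{k+1}\rangle$ holds (for $k=2$ the clone has the additional generator $x\wedge(y\vee z)$, which is harmless here but must be mentioned).
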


\begin{proof}
First notice that the problem is in $\coNP$ since by Theorem~\ref{thm:BP}, the answer to the problem is negative if and only if one can exhibit a subset of indices $I$ of size $k$ such that no element of $\Cl_{S_{10}^k}(\ccS_I)$ is equal to $\one$.
Let us show that the later holds if and only if no element of $\ccS_I$ is equal to $\one$. We assume that $k\geq 3$ hence $S_{10}^k = <Th_k^{k+1}>$.
If no element in $\ccS_I $ is equal to $\one$, then the application
of $Th_k^{k+1}$ to $\ccS_I$ preserves this property. Indeed, consider $Th_k^{k+1}(v^1,\dots,v^{k+1})$. Each $v^i$ has a zero coefficient and since there are $k+1$ such vectors and the vectors are of size $k$, by the pigeonhole principle, there are $i,j,l$ such that $v^i_l = v^j_l = 0$. This implies that $Th_k^{k+1}(v^1,\dots,v^{k+1})\neq \one$.

\medskip

Let us show that \textsc{HittingSet} can be reduced to \textsc{ClosureTreshold}. Given a hypergraph $\cH=(V,\mathcal{E})$ and an integer $k$, \textsc{HittingSet} asks whether there exists
a subset $X\subseteq V$ of size $k$ that intersects all the hyperedges of $\cH$.
This problem is a classical NP-complete problem~\cite{garey2002computers}. Let $\cH=(V,\cE)$ be a hypergraph and
$k$ be an integer. Let $\bar{\cH}$ be the hypergraph on $V$ whose hyperedges are the complement of the hyperedges of $\cH$, and let $\ccS$ be the set of characteristic vectors of the hyperedges of $\bar{\cH}$. Then $\cH$ has a transversal of size $k$ if and
only if there is a set $I$ of indices of size $k$ such for all $v\in\ccS_I $,
$v \neq \one$. Indeed, $I$ is a hitting set of $\cH$ if for all $E\in \cE$, there exists $i\in I$ such that $i\in E$ which implies that $i\notin \overline{E}$ and then the characteristic vector $v$ of $\overline{E}$ is such that $v_i=0$.

Since the other direction is straightforward, we have proved that there is a set $I$ of indices of size $k$ such that for all $v\in\ccS_I $, $v \neq \one$ if and only if there is a set $I$ of indices of size $k$ such that for all $v\in\Cl_{S_{10}^k}(\ccS_I)$,
$v \neq \one$. By Theorem~\ref{thm:BP}, the later property is equivalent to $ \one \notin \Cl_{S_{10}^k}(\ccS)$.
Therefore we have given a polynomial time reduction from \textsc{HittingSet} to the complement of \textsc{ClosureTreshold} which proves the proposition.
\end{proof}

In fact, the result is even stronger. We cannot hope to get an FPT algorithm for \textsc{ClosureTreshold} parametrized by $k$
since \textsc{HittingSet} parametrized by the size of the hitting set is $\W[2]$-complete~\cite{flum2006parameterized}.
It implies that if we want to significantly improve the delay of our enumeration algorithm for the clone $S_{10}^k$, we should drop the backtrack search since it relies on solving $\Mem_{S_{10}^k}$.

\section{Larger Domains}\label{sec:largerdomains}

In this section, we try to extend some results of the boolean domain to larger domains, which is the simplest
generalization possible. The main problem is that, for domain of size $3$ or more, the lattice of clones is uncountable and not as well understood as Post's lattice. We show how we can do a backtrack search for threshold functions and algebraic functions
on larger domains. We also prove that for domain of size $3$, there is a clone generated by a single binary function $f$ such that $\Mem_{<f>}$ is $\NP$-hard.

\subsection{Tractable cases}

We show here how the cases of near unanimity term (subsection~\ref{sec:threshold} and \ref{sec:all}) and of the addition (subsection~\ref{sec:algebra}) can be generalized.
Using the Baker-Pixley theorem (Theorem~\ref{thm:BP}), we can get an equivalent to Corollary~\ref{cor:coro} and to Proposition~\ref{prop:speedup} in any domain size.

\begin{corollary}
If $\cF$ contains a near unanimity operation, then $\Mem_\cF \in P$.
\end{corollary}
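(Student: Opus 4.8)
The plan is to reduce the general statement to the boolean case already handled by Corollary~\ref{cor:coro}, but now applied over an arbitrary finite domain $D$. The key structural fact I would invoke is the Baker-Pixley theorem (Theorem~\ref{thm:BP}): if $\cF$ is a clone containing a near unanimity operation of arity $k$, then membership of a vector $v$ in $\Cl_{\cF}(\ccS)$ is completely determined by the projections onto index sets $I$ of size $k-1$. Precisely, $v \in \Cl_{\cF}(\ccS)$ if and only if for every such $I$ we have $v_I \in \Cl_{\cF}(\ccS)_I$.

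First I would observe, exactly as in the proof of Corollary~\ref{cor:coro}, that $\Cl_{\cF}(\ccS)_I = \Cl_{\cF}(\ccS_I)$, since every operation in $\cF$ acts coefficient-wise and therefore commutes with projection onto $I$. This lets me replace the global closure by a closure of the projected instance $\ccS_I$, which lives in $D^{k-1}$, a space of constant dimension once $k$ is fixed. The algorithm then enumerates, for each of the $O(n^{k-1})$ index sets $I$ of size $k-1$, the projected family $\ccS_I$ (computable in time $O(m)$) and its closure $\Cl_{\cF}(\ccS_I)$ by the naive saturation procedure described after the definition of $\EnumClo_{\cF}$. Since $|D|$ and $k$ are constants, the closure $\Cl_{\cF}(\ccS_I)$ has bounded cardinality (at most $|D|^{k-1}$ distinct vectors), so each such closure is computed in constant time, and testing whether $v_I \in \Cl_{\cF}(\ccS_I)$ is also constant. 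Accepting $v$ precisely when all these local tests succeed yields a polynomial time decision procedure, hence $\Mem_{\cF} \in \P$.

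The only point requiring care is the dependence on $|D|$: over a larger domain the bound $|D|^{k-1}$ on the size of each projected closure is larger than in the boolean case, but it remains a constant for fixed $\cF$, so the overall running time is still polynomial, of order $O(mn^{k-1} + n^{k-1})$. I would emphasize that, unlike the boolean case, here I cannot appeal to any classification of clones; the argument relies solely on the existence of a near unanimity term and on Baker-Pixley, which is exactly what makes it robust enough to survive the passage to arbitrary finite domains. I do not expect a genuine obstacle, since this corollary is essentially Corollary~\ref{cor:coro} with the boolean domain replaced by $D$ throughout, and the proof of that corollary never used $|D|=2$ beyond the constant-size bookkeeping; the main thing to verify is that the commutation $\Cl_{\cF}(\ccS)_I = \Cl_{\cF}(\ccS_I)$ and the finiteness of each local closure both hold verbatim over $D$, which they do.
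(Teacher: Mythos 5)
Your proof is correct and matches the paper's intended argument: the paper proves this corollary exactly by invoking the Baker--Pixley theorem (Theorem~\ref{thm:BP}) together with the algorithm of Corollary~\ref{cor:coro}, using the commutation $\Cl_{\cF}(\ccS)_I = \Cl_{\cF}(\ccS_I)$ and the fact that each projected closure has constant size $|D|^{k-1}$ for fixed $\cF$. Nothing in your write-up deviates from or adds a gap to that route.
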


\begin{proposition}
 If $\cF$ contains a near unanimity term of arity $k+1$, then $\EnumClo_\cF$ can be solved with delay $O(n^{k})$.
\end{proposition}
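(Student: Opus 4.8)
The plan is to combine the Baker--Pixley characterization (Theorem~\ref{thm:BP}) with the backtrack search of Proposition~\ref{prop:partialsolutions}, amortizing the membership test in the style of Proposition~\ref{prop:speedup}. Since $\cF$ contains a near unanimity term of arity $k+1$, Theorem~\ref{thm:BP} tells us that $v \in \Cl_{\cF}(\ccS)$ if and only if for every set of indices $I$ of size $k$ we have $v_I \in \Cl_{\cF}(\ccS)_I = \Cl_{\cF}(\ccS_I)$. The extension problem $\ExtClo_\cF$ reduces to $\Mem_\cF$ exactly as in Proposition~\ref{prop:decisionenum}, because the operations act coordinate-wise, so it suffices to produce a backtrack search whose per-node work, summed over a root-to-leaf branch, is $O(n^k)$.

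First I would do a precomputation step, outside the delay, that for every $k$-subset $I$ of $[n]$ builds the projected closure $\Cl_{\cF}(\ccS_I)$, storing it as the set of admissible tuples on the coordinates in $I$. By the argument of Corollary~\ref{cor:coro}, each $\Cl_{\cF}(\ccS_I)$ has size bounded by a constant depending only on $k$ and $|D|$, and can be computed by the naive saturation algorithm in time $O(m)$; there are $O(n^k)$ such subsets, so the whole precomputation is $O(mn^k)$ and is charged to the polynomial precomputation permitted by Proposition~\ref{prop:partialsolutions}.

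Next I would run the depth-first backtrack search over partial solutions. At a node corresponding to a partial vector $v$ of length $l$, its prefix $v_{[l-1]}$ already satisfies the Baker--Pixley condition on all $k$-subsets of $[l-1]$, since it lies on the path from the root. Hence, when extending to length $l$, I only need to verify the condition for those $k$-subsets $I \subseteq [l]$ that actually contain the new index $l$; there are $\binom{l-1}{k-1} = O(l^{k-1})$ of them, and each test is a constant-time lookup into the precomputed table for $\Cl_{\cF}(\ccS_I)$. Thus the work at a node of depth $l$ is $O(l^{k-1})$, and summing along a branch of length $n$ gives a delay of $\sum_{l=1}^{n} O(l^{k-1}) = O(n^{k})$, as claimed.

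The main obstacle is ensuring the amortization is genuinely correct, namely that extending a valid prefix requires checking \emph{only} the newly created $k$-subsets rather than re-verifying all of them; this hinges on the fact that the Baker--Pixley condition is hereditary, so that $v_{[l]} \in \Cl_{\cF}(\ccS_{[l]})$ is equivalent to $v_{[l-1]} \in \Cl_{\cF}(\ccS_{[l-1]})$ together with the condition on all size-$k$ index sets touching coordinate $l$. Once this heredity is in place the delay bound follows immediately. A secondary point worth stating carefully is that $\Cl_{\cF}(\ccS)_I = \Cl_{\cF}(\ccS_I)$, which again uses that every operation of $\cF$ acts coordinate-wise; this identity is what lets the precomputed projected closures serve as exact certificates for the extension test.
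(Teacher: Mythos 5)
Your proposal is correct and follows essentially the same route the paper intends: it instantiates the Baker--Pixley characterization (Theorem~\ref{thm:BP}) with the projected-closure precomputation of Corollary~\ref{cor:coro} and the amortized backtrack check of Proposition~\ref{prop:speedup}, which is exactly what the paper means when it says the boolean arguments carry over to any domain. The only detail worth adding is how prefixes of length $l<k$ are validated (e.g.\ by also precomputing $\Cl_{\cF}(\ccS_{[l]})$ for $l<k$, a constant amount of extra work), since the $k$-subset condition is vacuous there.
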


The second tractable case is a generalization of Subsection~\ref{sec:algebra}. The simplest generalization of the sum over $\mathbb{F}_2$ is to consider any finite group operation. Using Sim's stabilizer chain it is possible to decide membership in groups given by generators~\cite{furst1980polynomial,zweckinger2013computing}.

\begin{proposition}
 Let $f$ be a group operation over $D$. Then $\Mem_{<f>} \in P$.
\end{proposition}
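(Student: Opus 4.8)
The plan is to recognize $\Cl_{<f>}(\ccS)$ as a subgroup of the finite group $(D^n,f)$ and to reduce the membership test to membership in a permutation group of polynomial degree, where Sims's stabilizer chain provides a polynomial time algorithm.

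First I would observe that, since $f$ acts coordinate-wise and is associative, the coordinate-wise operation turns $D^n$ into the $n$-th direct power of the finite group $(D,f)$, and $\Cl_{<f>}(\ccS)$ is exactly the sub-semigroup of $(D^n,f)$ generated by $\ccS$ (the projections in $<f>$ contribute nothing new). The key elementary fact is that in a \emph{finite} group every nonempty subset $H$ closed under the group operation is already a subgroup: for any $a\in H$, finiteness forces $a^i=a^j$ for some $1\le i<j$, so the identity $a^{j-i}$ and the inverse $a^{-1}=a^{j-i-1}$ both lie in $H$ (the case $a=e$ being trivial). Hence $\Cl_{<f>}(\ccS)$ is the subgroup $\langle\ccS\rangle$ of $(D^n,f)$ generated by $\ccS$, and deciding $v\in\Cl_{<f>}(\ccS)$ is the same as deciding $v\in\langle\ccS\rangle$. (If $\ccS=\emptyset$ the closure is empty and membership fails, so we may assume $\ccS$ nonempty.)

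Next I would linearize this as a permutation group membership problem. Via the left regular (Cayley) representation $d\mapsto(x\mapsto f(d,x))$, the fixed finite group $(D,f)$ embeds into $\mathrm{Sym}(D)$. Taking the product over the $n$ coordinates, acting on the disjoint union $[n]\times D$, yields an injective homomorphism $\phi\colon(D^n,f)\hookrightarrow\mathrm{Sym}([n]\times D)$ sending $v$ to the permutation that acts on the $i$-th block by $x\mapsto f(v_i,x)$. Because $\phi$ is an injective homomorphism, $v\in\langle\ccS\rangle$ if and only if $\phi(v)\in\langle\phi(\ccS)\rangle$. I would then compute the generators $\phi(s)$ for $s\in\ccS$ together with $\phi(v)$, and invoke the polynomial time membership test for permutation groups given by generators, based on Sims's stabilizer chain~\cite{furst1980polynomial,zweckinger2013computing}, to decide $\phi(v)\in\langle\phi(\ccS)\rangle$ in time polynomial in $m$ and $n$.

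The one point requiring care, and the crux that makes the whole reduction run in polynomial time, is keeping the permutation degree polynomial. This is exactly where the hypothesis that $D$ is a \emph{fixed} finite domain is used: the regular representation of $D$ has degree $|D|$, so the product representation acts on $n|D|=O(n)$ points, and both the generators $\phi(s)$ and the target $\phi(v)$ are computable in polynomial time. With the degree polynomial, the Schreier--Sims machinery applies directly and yields $\Mem_{<f>}\in\P$.
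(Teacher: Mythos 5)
Your proof is correct and takes the same route as the paper, which states this proposition without a written proof and simply appeals, in the preceding sentence, to membership testing in groups given by generators via Sims's stabilizer chain~\cite{furst1980polynomial,zweckinger2013computing}. Your write-up supplies exactly the details the paper leaves implicit---that $\Cl_{<f>}(\ccS)$ is a subgroup of $(D^n,f)$ and not merely a sub-semigroup (by the finiteness argument), and that the product of regular representations embeds $(D^n,f)$ into a permutation group of degree $n|D|$, polynomial because $D$ is fixed---so the Schreier--Sims machinery applies as the paper intends.
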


The problem  $\Mem_{\cF}$ is known in the universal algebra community under the name \emph{subpower membership problem} or SMP. The method for groups also works for extensions of groups by multilinear operations such as ring or modules. For semigroups, some cases are known to be $\PSPACE$-complete and other are in $\P$~\cite{bulatov2016subpower,steindl2017subpower}. There are also works on SMP for clones which satisfy conditions from universal algebras: for expansions of finite nilpotent Maltsev algebras of prime power size or for residually finite algebras with cube terms, SMP is in $\P$~\cite{mayr2012subpower,szendrei2019subpower}, while it is hard for some algebras satisfying any strong Maltsev conditions~\cite{shriner2018hardness}.

\subsection{A limit to the backtrack search}

The last case we would like to extend is the clone generated by the conjunction.
A natural generalization is to fix an order on $D$ and to study the complexity
of $\Mem_{<f>}$ with $f$ monotone. Let $f$ be the function over $D=\{0,1,2\}$ defined by  $f(x,y) = \min(x+y,2)$. This function is monotone in each of its arguments. We prove that \textsc{Exact3Cover} reduces to $\Mem_{<f>}$, where \textsc{Exact3Cover} is the problem of finding an exact cover of a set by subsets of size $3$. The next proposition is from~\cite{mary2016efficient} and it has been independently proven using the same reduction for semigroups with idempotent elements in Lemma $5.2$ of~\cite{bulatov2016subpower}.

\begin{proposition}\label{prop:NPcover}
 $\Mem_{<f>}$ is $\NP$-complete.
\end{proposition}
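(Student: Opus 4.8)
The plan is to prove $\NP$-completeness of $\Mem_{<f>}$ for $f(x,y)=\min(x+y,2)$ over $D=\{0,1,2\}$ by reduction from \textsc{Exact3Cover}. Membership in $\NP$ is immediate: one guesses a polynomial-size straight-line program over $\ccS$ using $f$ (the closure can be built incrementally, and a witness that $v\in\Cl_{<f>}(\ccS)$ is a formula tree whose leaves are elements of $\ccS$); since $f$ is computable in polynomial time and the closure stabilizes, a membership certificate can be checked efficiently. The core of the work is therefore the $\NP$-hardness reduction.

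The key observation driving the reduction is the behavior of $f$ on the three values. Since $f(x,y)=\min(x+y,2)$ acts coordinate-wise, a coefficient equal to $2$ is \emph{absorbing} (once a $2$ appears it stays a $2$, as $f(2,y)=2$), while the value $1$ records that a coordinate has been ``hit exactly once''. Thus on a given coordinate, obtaining exactly the value $1$ in the target vector $v$ forces that, among all the generators combined, that coordinate received total ``weight'' exactly $1$, whereas obtaining the value $2$ only requires total weight at least $2$. This asymmetry is precisely what encodes the exactness constraint of \textsc{Exact3Cover}.

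First I would take an instance of \textsc{Exact3Cover}: a ground set $U=\{u_1,\dots,u_{3q}\}$ and a family of triples $T_1,\dots,T_m\subseteq U$ with $|T_j|=3$, asking whether some subfamily partitions $U$. I would build vectors of length roughly $3q$ (one coordinate per ground element, possibly with a few bookkeeping coordinates): to each triple $T_j$ associate a generator vector $w^j$ whose coordinate $u_i$ is $1$ if $u_i\in T_j$ and $0$ otherwise. The target vector $v$ would be the all-ones vector $\one$ on the ground-set coordinates. Then $v\in\Cl_{<f>}(\ccS)$ via a sum of some chosen generators should hold if and only if each ground-set coordinate is covered with total weight exactly $1$, i.e. the chosen triples form an exact cover. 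The delicate point is to ensure that \emph{no other combinations} of generators (repeated use of a generator, or partial sums producing $2$'s that are then unusable) can cheat the exactness requirement; the absorbing nature of $2$ is the safeguard, since any coordinate that is hit twice jumps to $2\neq 1$ and can never return to $1$.

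The main obstacle I anticipate is exactly controlling the closure so that reaching $\one$ is possible \emph{only} through exact covers. Because $f$ is commutative, associative, and idempotent-free in the naive sense ($f(1,1)=2$), I must verify carefully that intermediate computations cannot reintroduce value $1$ on a coordinate that already reached $2$, and that the enforced use of each element exactly once cannot be simulated by clever reuse of generators — here idempotence issues matter, since $f(x,x)$ may differ from $x$ and one should confirm that repeating a generator only ever increases coordinates toward the absorbing $2$. I would handle this by a clean invariant argument: any element of $\Cl_{<f>}(\ccS)$ has, on each coordinate, a value equal to $\min$ of $2$ and the sum of the corresponding coordinates of the multiset of generators used; establishing this invariant (by induction on the closure levels $\cF^i(\ccS)$) reduces the correctness of the reduction to the purely combinatorial statement that $\one$ is attainable iff an exact cover exists. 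Once that invariant is in place, both directions of the equivalence follow directly, and since \textsc{Exact3Cover} is $\NP$-complete the result is established.
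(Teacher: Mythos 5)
Your proposal is correct and takes essentially the same route as the paper: the identical reduction from \textsc{Exact3Cover} (characteristic vectors of the triples as generators, target vector $\one$), with correctness resting on the same key fact, namely that by associativity and commutativity every element of $\Cl_{<f>}(\ccS)$ is the coordinate-wise capped sum $\min(2,\cdot)$ of a multiset of generators, so that reaching the value exactly $1$ on every coordinate forces the chosen multiset to be a set forming an exact cover. Your capped-sum invariant also yields the polynomial-size witnesses (multiplicities can be truncated at $2$) that the paper invokes for membership in $\NP$, so the plan is complete as stated.
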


\begin{proof}
 We reduce \textsc{Exact3Cover}, which is $\NP$-hard~\cite{garey2002computers}, to $\Mem_{<f>}$. Let $\ccS$ be an instance of \textsc{Exact3Cover}, that is a set of subsets of $[n]$ of size $3$. If we consider that the sets in $\ccS$ are represented by their characteristic vectors, then $\ccS$ is an instance of $\Mem_{<f>}$ and we now prove that $\one \in \Cl_{<f>}(\ccS)$ if and only if
 there is an exact cover of $\ccS$. First note that $f$ is associative,  therefore any element of $\Cl_{<f>}(\ccS)$ can be written $f(s^1,f(s^2,f(s^3, \dots)$ with $s^i \in \ccS$. Since $f$ is also commutative, the order of the operations is not relevant either. Hence we can represent any element  $v \in \Cl_{<f>}(\ccS)$ by a multiset of elements of $\ccS$, which combined by $f$ yields $v$.
 Notice that all elements in $\Cl_{<f>}(\ccS)$ can be seen as a multiset containing at most twice an element since $f(s,s) = f(s,f(s,s))$. If $s_i > 0$ then $f(s_i,s_i) =2$, therefore a multiset which yields $\1$ by $f$ should in fact be a set.
 Moreover a set which yields $\one$ satisfies that for all $i\leq n$ there is one and only one of its elements with a coefficient $1$ at the index $i$. Such a set is an exact cover of $\ccS$, which proves the reduction.

 The problem is in $\NP$ because an element $v$ is in $\Cl_{<f>}(\ccS)$, if and only if
 there is a multiset of elements of $\ccS$ such that applying $f$ to its elements yields $v$. We impose that the witnesses are multiset containing element at most twice. We have seen these witnesses always exist and it guarantees that they are of polynomial size.
\end{proof}

This hardness result implies that we cannot use the backtrack search to solve the associated enumeration algorithm.
However, if we allow a space proportional to the number of solutions, we can still get a polynomial delay algorithm for associative functions, a property satisfied by the function $f$ of the last proposition. Note that the space used is proportional to the \emph{number of solutions} while the backtrack search only requires a space polynomial in the input size.

\begin{proposition}
 If $f$ is an associative function of arity $2$, then $\EnumClo_{<f>} \in \DelayP$.
\end{proposition}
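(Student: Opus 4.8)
The plan is to exploit associativity to give $\Cl_{<f>}(\ccS)$ a graded structure and then enumerate it level by level, storing all solutions found so far so that membership tests become trivial. Define the \emph{rank} of an element $w \in \Cl_{<f>}(\ccS)$ as the least $k$ such that $w = f(s^1,\dots,s^k)$ for some $s^1,\dots,s^k \in \ccS$ (associativity makes the parenthesization irrelevant). The rank-$1$ elements are exactly $\ccS$. The key structural fact, which I would prove first, is that every element $w$ of rank $r+1$ can be written $w = f(v,s)$ with $s \in \ccS$ and $v$ of rank exactly $r$: taking a left-nested minimal decomposition $w = f(f(\dots f(s^1,s^2)\dots),s^{r+1})$, the prefix $v = f(s^1,\dots,s^r)$ has rank $\le r$, and rank exactly $r$, since a shorter decomposition of $v$ would yield a shorter one of $w$. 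Hence, writing $\mathcal{L}_r$ for the set of rank-$r$ elements, we have $\mathcal{L}_{r+1} = \{ f(v,s) \mid v \in \mathcal{L}_r,\ s \in \ccS \} \setminus \bigcup_{j \le r} \mathcal{L}_j$ and $\Cl_{<f>}(\ccS) = \biguplus_r \mathcal{L}_r$. This grading is exactly what associativity buys us, and it is what fails for a general binary operation.

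The algorithm is a breadth-first traversal of this grading. I would maintain a trie (or hash table) $T$ keyed by vectors that holds every element discovered so far; this serves both to remove redundancies and to test, in time $O(n)$, whether a freshly produced vector is new. Note that we cannot instead use reverse search with a polynomial-space canonical-parent rule: finding a decomposition $w = f(v,s)$ with $v \in \Cl_{<f>}(\ccS)$ would require deciding $\Mem_{<f>}$, which is $\NP$-hard by Proposition~\ref{prop:NPcover}. This is precisely why the algorithm stores all solutions, using space $O(Nn)$ where $N = |\Cl_{<f>}(\ccS)|$. To \emph{expand} a vector $v$ means to compute the $m$ vectors $f(v,s)$ for $s \in \ccS$ and, for each, test membership in $T$, and if it is new, insert it into $T$ and append it to the buffer for the next level; one expansion thus costs $O(mn)$.

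The crux is to obtain polynomial \emph{delay} and not merely output-polynomial total time, since expanding a given vector may produce only already-seen elements, and a naive breadth- or depth-first search can suffer a gap proportional to $N$ between two outputs. I would decouple \emph{output} from \emph{discovery} and schedule the two in lockstep, level by level. In phase $r$ the buffer already contains $\mathcal{L}_r$ (produced during phase $r-1$, with $\mathcal{L}_1=\ccS$ set up in a polynomial precomputation); the phase outputs the $|\mathcal{L}_r|$ vectors of $\mathcal{L}_r$ and performs the $|\mathcal{L}_r|$ expansions of those same vectors, pairing one expansion with each output. Between two consecutive outputs the algorithm then does a single expansion, for a delay of $O(mn)$ plus $O(n)$ to write the solution. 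The schedule is well defined because the number of outputs and the number of expansions in a phase are both exactly $|\mathcal{L}_r|$, and because expanding $\mathcal{L}_r$ produces precisely $\mathcal{L}_{r+1}$, which is then ready at the start of phase $r+1$. Since $|\mathcal{L}_{r+1}| \le m\,|\mathcal{L}_r|$ the levels cannot blow up, the number of nonempty levels is at most $N$, and as soon as a phase produces an empty next level the algorithm halts immediately after its last output.

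The main obstacle is this last point: turning the forward exploration of the supergraph into a genuine $\DelayP$ rather than merely $\IncP$ algorithm. The associativity-induced grading is what makes the balanced one-expansion-per-output schedule possible; without it one would only be able to argue incremental time.
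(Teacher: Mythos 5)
Your proof is correct, and at its core it is the same algorithm as the paper's: both traverse the supergraph on vertex set $\Cl_{<f>}(\ccS)$ with arcs $v \to f(v,s)$ for $s \in \ccS$, both use associativity to guarantee that every element of the closure is reachable from $\ccS$ along such arcs (your left-nested minimal decomposition and rank function are a refinement of exactly this reachability fact), and both store every generated solution in a trie to suppress repeats, hence space proportional to the number of solutions. Where you genuinely differ is the traversal order and the delay accounting. The paper performs a depth-first traversal of the supergraph and directly asserts a delay of $O(mn)$; strictly speaking, a plain preorder DFS can backtrack through a long chain of vertices all of whose remaining neighbours are already in the trie before it finds the next new vertex, so the $O(mn)$ bound there is only amortized unless one adds the standard repair (e.g.\ outputting vertices at even depth in preorder and at odd depth in postorder). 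Your breadth-first organization by rank, with the level identity $\mathcal{L}_{r+1} = \{f(v,s) \mid v \in \mathcal{L}_r,\ s \in \ccS\} \setminus \bigcup_{j \le r} \mathcal{L}_j$ (which you prove correctly) and the explicit one-expansion-per-output schedule, yields a worst-case delay of $O(mn)$ with no further tricks, which is what $\DelayP$ formally demands; the price is the phase bookkeeping that the paper's DFS avoids. Your side remark that a reverse-search, polynomial-space alternative is blocked because finding a parent would amount to deciding $\Mem_{<f>}$, which is $\NP$-hard by Proposition~\ref{prop:NPcover}, matches the paper's own discussion following this proposition, though you state it somewhat more strongly than it is justified (hardness of membership makes the natural parent rules fail, but does not by itself rule out every canonical-parent scheme).
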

\begin{proof}
Let $\ccS$ be an instance of $\EnumClo_{<f>}$. Let $G$ be the directed graph with vertices $\Cl_{<f>}(\ccS)$
and from each $v \in \Cl_{<f>}(\ccS)$, there is an arc to $f(v,s)$ for all $s \in \ccS$.
Since $f$ is associative, by definition of $G$, every vertex of $\Cl_{<f>}(\ccS)$ is accessible from a vertex in $\ccS$.
Therefore we can do a depth-first traversal of the graph $G$ to enumerate all solutions.
A step of the traversal is in polynomial time: from an element $v$ we generate its neighborhood: $f(v,s)$ for $s \in \ccS$.
The computation of $f(v,s)$ is in time $O(n)$ and $|\ccS| = m$. We must also test whether the solution $f(v,s)$
has already been generated. This can be done in time $O(n)$ by maintaining a trie containing the generated solutions. In conclusion the delay of the enumeration algorithm is in $O(mn)$ thus polynomial.
\end{proof}

To obtain a polynomial space algorithm, we could try to use the \emph{reverse search} method~\cite{avis1996reverse}.
To do that, we want the graph $G$ to be a directed acyclic graph, which is the case if we require the function
to be monotone. The monotonicity also ensures that the depth of $G$ is at most $n(|D|-1)$.
However we also need to be able to compute for each element of $G$ a canonical ancestor
in polynomial time and it does not seem to be easy even when $f$ is monotone.

\begin{openproblem}
Find a property of $f$ which ensures the existence of an easy to compute ancestor so that we obtain a polynomial delay and polynomial space enumeration algorithm.
\end{openproblem}

\begin{openproblem}
 Generalize the previous algorithm to clones generated by several functions or functions of arity larger than three.
\end{openproblem}

\section{Non uniform operators}\label{sec:nonuniform}

In this section we generalize the class of operators acting on vectors of booleans. We relax the condition that an operator acts in the same way on each coefficient. On the other hand, we consider operators acting on $k$ solutions at most and in polynomial time, so that the saturation algorithm still works in incremental polynomial time.

We study the simplest case of \emph{unary} operators acting on a \emph{single} coefficient.
There are four functions from the boolean domain to the boolean domain.
When computing the closure, the identity has no effect, while the negation allows to have any value for a coefficient
which is also trivial to deal with. Therefore we are only interested in the two constant functions.

Let $\uparrow_i$ be the function which to a vector $v$ associates $\uparrow_i(v)$ such that for all $j\neq i$, $\uparrow_i(v)_j = v_j$ and $\uparrow_i(v)_i = 1$. We call this function an \emph{upward closure}.
The \emph{downward closures} are defined similarly: for each vector $v$ and index $j\neq i$, $\downarrow_i(v)_j = v_j$ and
$\downarrow_i(v)_i = 0$. We prove that we can add any sets of such functions to a set of operators over the boolean domain
and enumerate the elements of a closure with polynomial delay.

We generalize the problem $\Mem_{\cF}$ and $\EnumClo_{\cF}$, by adding to the input the set of upward and downward closures that can be used to compute the closure. Indeed, these operators work on a single coefficient and we should allow to grow their numbers when the size of the vectors increases to make our problem meaningfully different. In particular we want to allow the set of all downward closures as input to represent solutions closed by taking subsets.

For each set of boolean operators $\cF$, we define $\EnumUDClo_\cF$: given a set of vectors $\ccS$, a set of upward closures $\mathcal{U}$ and a set of downward closures $\mathcal{D}$, enumerate $\Cl_{\cF\cup \mathcal{U} \cup \mathcal{D}}(\ccS)$.
We present proofs that all problems $\EnumUDClo_\cF$ can be reduced to $\EnumClo_\cF$ in a tight way, which provides
good enumeration algorithms except for  $\EnumUDClo_\emptyset$. Note that when both $\uparrow_i \in \mathcal{U}$ and $\downarrow_i \in \mathcal{D}$, the value of the $i$th coefficient does not matter. Therefore we always assume that there is no $i$ such that $\uparrow_i \in \mathcal{U}$ and $\downarrow_i \in \mathcal{D}$.

\begin{proposition}\label{prop:and}
 $\EnumUDClo_{\emptyset}$ and $\EnumUDClo_{E_2}$ can be solved with delay $O(mn)$.
\end{proposition}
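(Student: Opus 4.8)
The plan is to handle the two cases after isolating the roles of the upward and downward closures. Write $U=\{i \mid \uparrow_i \in \mathcal{U}\}$, $W=\{i \mid \downarrow_i \in \mathcal{D}\}$ (disjoint by assumption) and $A = [n]\setminus(U\cup W)$. For $\EnumUDClo_{\emptyset}$ there is no binary operation, so each element of the closure comes from a single $s \in \ccS$ by raising some coordinates of $U$ and lowering some coordinates of $W$; concretely it ranges over the subcube that fixes the coordinates of $A$ to $s_A$, fixes a coordinate $i \in U$ to $1$ or leaves it free according to whether $s_i=1$, and dually for $W$. Thus $\Cl_{\mathcal{U}\cup\mathcal{D}}(\ccS)$ is a union of $m$ subcubes, i.e. exactly the set of models of a DNF formula with $m$ terms over $n$ variables, which is enumerated with delay $O(mn)$ by the $\enumSat{DNF}$ algorithm recalled in Section~\ref{sec:monotone}.

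For $\EnumUDClo_{E_2}$ I would first establish a normal form for $\Cl_{E_2\cup\mathcal{U}\cup\mathcal{D}}(\ccS)$. The downward closures are harmless because $0$ absorbs under $\wedge$: one has $\downarrow_i(X)\wedge Y = \downarrow_i(X\wedge Y)$ (equivalently $\downarrow_i(\cdot)=\cdot\wedge d^i$ with $d^i$ the vector that is $0$ only at $i$), so every lowering can be pushed to the very end of a computation. Upward closures cannot be deferred in the same way, but they satisfy $\uparrow_i(X)\wedge\uparrow_i(Y)=\uparrow_i(X\wedge Y)$, which lets a raising be pushed across an intersection \emph{provided the same coordinate is raised in both operands}. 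Carrying this out by induction on the construction shows that every element is of the form $\downarrow_R\bigl(\bigcap_k \uparrow_{T_k}(s^k)\bigr)$ with $s^k\in\ccS$, $T_k\subseteq U$, $R\subseteq W$. Evaluating this expression coordinatewise and setting $u=\bigcap_k s^k\in\Cl_{E_2}(\ccS)$, I would read off the characterization: $v\in\Cl_{E_2\cup\mathcal{U}\cup\mathcal{D}}(\ccS)$ if and only if there is $u\in\Cl_{E_2}(\ccS)$ with $u_j=v_j$ for $j\in A$, $u_i\le v_i$ for $i\in U$, and $u_i\ge v_i$ for $i\in W$.

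This characterization reduces membership to that of $\Mem_{E_2}$: given $v$, let $B^{*}$ be the set of $s\in\ccS$ carrying a $1$ on every coordinate of $\1(v)\cap(A\cup W)$ and let $u^{*}=\bigcap_{s\in B^{*}}s$; then $v$ lies in the closure iff $u^{*}$ equals $v$ on $A$ and is $0$ at every $0$-coordinate of $v$ in $U$, since $u^{*}$ is the smallest admissible witness and hence carries the most zeros. As all operators act coordinatewise, the extension problem on a prefix reduces to membership on the corresponding projection as in Proposition~\ref{prop:decisionenum}, so Proposition~\ref{prop:partialsolutions} already gives polynomial delay. To reach delay $O(mn)$ I would run the amortized backtrack of Proposition~\ref{prop:unionfast} with its arrays $COMP$ (compatible vectors) and $COUNT[\cdot]$ (compatible vectors equal to $0$ at a coordinate), modified by the characterization: setting a coordinate of $A\cup W$ to $1$ restricts compatibility and must not drop any committed $0$-coordinate's count to zero, while setting a coordinate of $U$ to $1$ is always legal and leaves $COMP$ untouched; dually, setting a coordinate of $A\cup U$ to $0$ requires a surviving witness ($COUNT>0$), while setting a coordinate of $W$ to $0$ is always legal. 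Since each vector leaves $COMP$ at most once per branch and each departure touches $O(n)$ counters, the amortized delay is $O(mn)$.

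The crux, and the main obstacle, is the genuine non-commutation of $\wedge$ with the upward closures: a coordinate raised in one operand survives an intersection with any operand that is already $1$ there, so raisings cannot simply be collected at the end, and the tempting shortcut of enumerating $\Cl_{E_2}(\ccS)$ and expanding each element over its admissible raisings would produce the same vector from several witnesses. The normal form is what removes this ambiguity, and the asymmetric rule it yields --- $1$'s on $U$ and $0$'s on $W$ are free, whereas $1$'s on $A\cup W$ and $0$'s on $A\cup U$ constrain the base set --- is precisely what makes the modified backtrack both redundancy-free and amortizable to $O(mn)$.
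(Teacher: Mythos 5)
Your proposal is correct and follows essentially the same route as the paper: for $\EnumUDClo_{\emptyset}$ your ``union of $m$ subcubes'' is exactly the paper's reduction to $\enumSat{DNF}$ with one term per vector of $\ccS$, and for $\EnumUDClo_{E_2}$ your normal form and witness characterization (free $1$'s on $U$, free $0$'s on $W$, everything else a plain intersection constraint) is the paper's argument that useless applications of $\uparrow_i$ and $\downarrow_i$ can be removed, reducing membership to $\Mem_{E_2}$ on a projection of $\ccS$. Your explicit modified $COMP$/$COUNT$ backtrack is precisely the ``simple adaptation of Proposition~\ref{prop:unionfast}'' the paper invokes, with the same amortization giving delay $O(mn)$.
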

\begin{proof}
 The problem $\EnumUDClo_{\emptyset}$ can be reduced to $\enumSat{DNF}$ by mapping each element of $v \in \ccS$ to a clause $C_v$ of a formula $\phi_{\ccS}$. The variable $x_i$ is in $C_v$ if $v_i=1$, $\downarrow_i\notin D$ and $\neg x_i$ is in $C_v$ if $v_i=0$, $\uparrow_i\notin U$. By construction, the satisfying assignments of $\phi_{\ccS}$ are in bijection with $\Cl_{\mathcal{U},\mathcal{D}}(\ccS)$. Since the formula $\phi_{\ccS}$ has $m$ clauses of size at most $n$, one can enumerate its models with delay $O(mn)$.

Let $\ccS$ be a set of vectors, $\mathcal{U}$ be a set of upward closures and $\mathcal{D}$ be a set of downward closures.
 We want to decide whether $v \in \Cl_{E_2\cup \mathcal{U} \cup \mathcal{D}}(\ccS)$. If $v_i = 1$ and $\uparrow_i \in \mathcal{U}$ (resp. if $v_i = 0$ and $\downarrow_i \in \mathcal{D}$), then there is no constraint on the coefficient $i$ which can always be turned to the value $1$ (resp. $0$). Consider $i$ such that $v_i = 0$ and $\uparrow_i \in \mathcal{U}$. Assume that $v \in \Cl_{E_2\cup \mathcal{U} \cup \mathcal{D}}(\ccS)$, therefore there is a sequence of operations in $E_2\cup \mathcal{U} \cup \mathcal{D}$ which allows to build $v$ from $\ccS$. There are only two operators which can change the value of the coefficient $i$: $\uparrow_i$ and $\wedge$. Since $\wedge$ is decreasing and acts independently on each coefficient, we can remove all the uses of $\uparrow_i$ in the sequence of operations yielding $v$ without changing the final result. Similarly, if $v_i = 1$ it cannot be useful to use $\downarrow_i \in \mathcal{D}$. Therefore we can reduce the problem of deciding whether $v \in \Cl_{E_2\cup \mathcal{U} \cup \mathcal{D}}(\ccS)$ to the problem
 $\Mem_{E_2}$ on some projection of $\ccS$. A simple adaptation of Proposition~\ref{prop:unionfast} yields an algorithm with delay $O(mn)$.
 \end{proof}

 Note that it seems hard to improve the complexity of $\EnumUDClo_{\emptyset}$ since we can reduce $\enumSat{MONDNF}$ to it while preserving the size of the instance.  To each clause $C$ of a monotone DNF formula $\phi$ we map the vector $v_C$, such that $(v_C)_i =1$ if and only if $x_i$ is in the clause $C$. Let $\mathcal{U}$ be the set of all upward closures and $\ccS_{\phi}$ be the set of vectors $v_C$. The satisfying assignments of $\phi$ are in bijection with $\Cl_{ \mathcal{U}}(\ccS_{\phi})$ and $\ccS_{\phi}$
 has as many vectors as $\phi$ has clauses.

\begin{proposition}\label{prop:add}
$\EnumUDClo_{L_0}$ and $\EnumUDClo_{L_2}$ can be solved with delay $O(n)$.
\end{proposition}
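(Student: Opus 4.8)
The plan is to observe that, unlike a general closure, these two sets have a very rigid algebraic shape: $\Cl_{L_0 \cup \mathcal{U} \cup \mathcal{D}}(\ccS)$ is a vector subspace of $\mathbb{F}_2^n$, and $\Cl_{L_2 \cup \mathcal{U} \cup \mathcal{D}}(\ccS)$ is an affine subspace. For $L_0$ this is immediate: the closure is nonempty and, since $+ \in L_0$, it is closed under coordinate-wise addition; a nonempty subset of $\mathbb{F}_2^n$ closed under $+$ is a subspace. For $L_2$ the same reasoning applied to the ternary operation $x+y+z$ shows the closure is closed under affine combinations, hence, after fixing any base point $a_0$ and translating, a coset $a_0 + W_0$ of a subspace $W_0$. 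Enumerating a subspace (or a coset) with delay $O(n)$ is exactly what the Gray code argument of Section~\ref{sec:algebra} achieves, provided a basis can be computed in the polynomial precomputation step. So the whole task reduces to producing an explicit generating family.

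The second step is to pin down this family by deciding, coordinate by coordinate, which standard basis vectors $e_i$ the operators force into the space. Writing $U$ (resp. $D$) for the set of indices carrying an upward (resp. downward) closure, I claim for $W = \Cl_{L_0 \cup \mathcal{U} \cup \mathcal{D}}(\ccS)$ that
$$ W = \mathrm{span}\Big(\ccS \cup \{e_i \mid i \in U\} \cup \{e_i \mid i \in D \text{ and } \exists s \in \ccS,\ s_i = 1\}\Big). $$
The inclusion $\supseteq$ holds because each listed generator lies in the closure: $e_i = \uparrow_i(\zero)$ for $i \in U$ (with $\zero \in W$ since $W$ is a subspace), while for $i \in D$ with some $s_i = 1$ the vector $e_i = s + \downarrow_i(s)$ is obtained by one downward closure and one addition. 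For the inclusion $\subseteq$, I check that the right-hand span is already closed under every $\uparrow_i$ and $\downarrow_i$: whenever $e_i$ belongs to the span, applying $\uparrow_i$ or $\downarrow_i$ to a vector $w$ merely adds $\zero$ or $e_i$ and stays in the span; and for the remaining indices $i \in D$, no generator has a $1$ at coordinate $i$, so this coordinate stays $0$ throughout and the downward closure acts trivially. The case of $L_2$ is handled by the same bookkeeping applied to the direction space and a base point: the enriched generating set yields $W_0 = \mathrm{span}\big(\{s + s_1\}_{s \in \ccS} \cup \{e_i\}_{i \text{ forced}}\big)$ with $s_1 \in \ccS$ fixed.

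Finally, once this generating family is available, I compute a basis of $W$ (resp. of the direction space $W_0$ together with a base point $a_0$) by Gaussian elimination over $\mathbb{F}_2$ in polynomial time, charged to the precomputation. I then enumerate all $2^{\dim}$ elements exactly as in Section~\ref{sec:algebra}: running through the subsets of the basis in Gray code order, each successive element differs from the previous one by the addition of a single basis vector and is produced in time $O(n)$, so the delay is $O(n)$; distinct subsets yield distinct vectors by linear independence, so there is no redundancy. For $L_2$ one simply starts the Gray code enumeration from $a_0$ instead of $\zero$.

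The main obstacle I anticipate is the affine case $L_2$: because $\zero$ need not belong to the closure, one must carry a base point and redo the free-coordinate classification relative to the direction space rather than to the space itself, taking care that a coordinate forced to a constant value (for instance an index of $U$ on which every vector of $\ccS$ already equals $1$) is treated separately from a genuinely free coordinate. Once this bookkeeping is set up correctly, everything else follows from the subspace structure and the Gray code enumeration already used for $\EnumClo_{L_0}$.
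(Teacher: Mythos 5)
Your proof is correct and takes essentially the same route as the paper: the key point in both is that each $\uparrow_i$ or $\downarrow_i$ either contributes the unit vector $e_i$ to the (direction) space of the closure or acts trivially on it (equivalently, the coordinate is free or fixed), after which everything reduces to the basis computation and Gray code enumeration of Section~\ref{sec:algebra}. The only difference is packaging --- the paper projects out the affected coordinates and invokes $\EnumClo_{L_0}$/$\EnumClo_{L_2}$ on a projection of $\ccS$, while you enlarge the generating set by the forced $e_i$'s --- and your handling of the affine case $L_2$ (where an index carrying an upward closure can in fact be fixed to $1$) is more explicit than the paper's one-line appeal to the ``same argument''.
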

\begin{proof}
Consider an index $i$, such that  $\uparrow_i \in \mathcal{U}$. Then if we consider $\uparrow_i(v + v)$, we obtain the base vector $e^{i}$ with $0$ everywhere except at position $i$. Therefore the value of coefficient $i$ is not constrained in  $\Cl_{L_0 \cup \mathcal{D} \cup \mathcal{U}(\ccS)}$ and we can solve the problem on $\ccS_{[n]\setminus \{i\}}$.

If $\downarrow_i \in \mathcal{D}$ and there is some vector $v$ with $v_i = 1$ then $ v + \downarrow_i(v)$ is the vector $e^{i}$.
If all vectors $v \in \ccS$ are such that $v_i = 0$ then all vectors $v' \in \Cl_{L_0 \cup \mathcal{U} \cup \mathcal{D}}(\ccS)$ have $v'_i = 0$. In both cases we can also solve the problem on $\ccS_{[n]\setminus \{i\}}$.
Therefore to solve the problem $\EnumUDClo_{L_0}$ it is enough to solve $\EnumClo_{L_0}$ on a projection of $\ccS$.

Since $\Cl_{L_2}(\ccS)$ is an affine space, it is equal to $v + \Cl_{L_0}(\ccS)$, therefore we can use the same argument to
solve $\EnumUDClo_{L_2}$ by reduction to $\EnumClo_{L_2}$.
\end{proof}

\begin{proposition} \label{prop:BP}
 Let $\cF$ be a clone with $Th_k^{k+1} \in \cF$ then $\EnumUDClo_{\cF}$ can be solved with delay $O(n^k)$.
\end{proposition}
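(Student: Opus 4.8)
The plan is to observe that the Baker--Pixley machinery of Theorem~\ref{thm:BP} applies verbatim to the extended closure, so that only the computation of the small projections has to be adapted to the non-uniform operators. Write $R = \Cl_{\cF\cup\mathcal{U}\cup\mathcal{D}}(\ccS)$. Although $\mathcal{U}$ and $\mathcal{D}$ are not coordinate-wise operators, the set $R$ is by definition closed under every operation of $\cF\cup\mathcal{U}\cup\mathcal{D}$, and in particular under all of $\cF$; hence $R=\Cl_{\cF}(R)$. Since $\cF$ contains the near unanimity term $Th_k^{k+1}$ of arity $k+1$, Theorem~\ref{thm:BP} applied with $R$ as the generating set gives that $v\in R$ if and only if $v_I\in R_I$ for every index set $I$ of size $k$. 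The same remark applies to every projection $R_{[l]}$, which is again closed under $Th_k^{k+1}$, so extendability of a partial solution reduces to membership exactly as in Proposition~\ref{prop:decisionenum}.

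The first step I would carry out is to show that projection commutes with the extended closure, namely $R_I=\Cl_{\cF\cup\mathcal{U}_I\cup\mathcal{D}_I}(\ccS_I)$, where $\mathcal{U}_I=\{\uparrow_i\mid\, \uparrow_i\in\mathcal{U},\ i\in I\}$ and $\mathcal{D}_I$ is defined analogously. This is the same argument used for $\cF$ alone in the proof of Corollary~\ref{cor:coro}: every operation acts coordinate-wise, the operations of $\cF$ uniformly and each $\uparrow_i$ (resp. $\downarrow_i$) on the single coordinate $i$, so the value on the coordinates of $I$ depends only on the inputs restricted to $I$; moreover an operator $\uparrow_i$ or $\downarrow_i$ with $i\notin I$ projects to the identity on $\ccS_I$ and thus contributes nothing. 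Since $|I|=k$ is a constant, each $R_I$ is a closure over a universe of at most $2^k$ vectors: building $\ccS_I$ costs $O(m)$ and the naive saturation then runs in constant time.

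The second step is the enumeration itself. I would precompute $R_I$ for all $\binom{n}{k}=O(n^k)$ subsets $I$ of size $k$, in total time $O(mn^k)$, which is permitted as a polynomial precomputation and is not charged to the delay. Then I would run the backtrack search of Proposition~\ref{prop:partialsolutions}, deciding extendability as in Proposition~\ref{prop:speedup}: when a partial solution of size $l-1$ is extended by one coordinate to size $l$, all projections not involving the index $l$ have already been verified, so it suffices to check $v_{J\cup\{l\}}\in R_{J\cup\{l\}}$ for each of the $O(l^{k-1})$ subsets $J\subseteq[l-1]$ of size $k-1$, each test being a constant-time lookup in the precomputed table. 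Summing this $O(l^{k-1})$ cost along a branch of length $n$ yields a delay of $O(n^k)$.

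The only genuine subtlety, and the step I would be most careful about, is the first one: one must be certain that inserting the non-uniform operators $\mathcal{U}$ and $\mathcal{D}$ neither destroys the invariance of $R$ under the near unanimity term nor breaks the commutation of projection with closure. Both facts hold precisely because $Th_k^{k+1}$ remains inside $\cF$ (so $R=\Cl_{\cF}(R)$ and Theorem~\ref{thm:BP} is applicable) and because each $\uparrow_i$ and $\downarrow_i$ touches a single coordinate (so projecting onto $I$ simply discards the operators acting outside $I$). Once these two points are established, the remainder is a routine adaptation of Corollary~\ref{cor:coro} and Proposition~\ref{prop:speedup}.
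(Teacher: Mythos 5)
Your proof is correct and follows essentially the same route as the paper's: both exploit the identity $\Cl_{\cF\cup\mathcal{U}\cup\mathcal{D}}(\ccS)=\Cl_\cF(\Cl_{\cF\cup\mathcal{U}\cup\mathcal{D}}(\ccS))$ to apply Theorem~\ref{thm:BP} to the extended closure, reduce membership to the size-$k$ projections $\Cl_{\cF\cup\mathcal{U}\cup\mathcal{D}}(\ccS_I)$, and then combine the precomputation of Corollary~\ref{cor:coro} with the incremental backtrack search of Proposition~\ref{prop:speedup} to obtain delay $O(n^k)$. Your explicit verification that projection commutes with the closure in the presence of the single-coordinate operators $\mathcal{U}$ and $\mathcal{D}$ spells out a step the paper leaves implicit, but the underlying argument is identical.
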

\begin{proof}
First notice that by definition of the $\Cl$ operations, $ \Cl_{\cF \cup \mathcal{U} \cup \mathcal{D}}(\ccS) = \Cl_\cF(\Cl_{\cF \cup \mathcal{U} \cup \mathcal{D}}(\ccS))$.
Therefore we can apply Theorem~\ref{thm:BP} to the right hand side of the equation, since $\cF$ contains $Th_k^{k+1}$. As a consequence we have $ \Cl_{\cF \cup \mathcal{U} \cup \mathcal{D}}(\ccS) = \{ v \mid \forall |I| =k, \, v_I \in \Cl_{\cF}(\Cl_{\cF \cup \mathcal{U} \cup \mathcal{D}}(\ccS_I)) \}$. Applying the first equality again, we have $ \Cl_{\cF \cup \mathcal{U} \cup \mathcal{D}}(\ccS) = \{ v \mid \forall |I| =k, \, v_I \in \Cl_{\cF \cup \mathcal{U} \cup \mathcal{D}}(\ccS_I) \}$.
The characterization we obtain extends the one of Theorem \ref{thm:BP}. Since $\Cl_{\cF \cup \mathcal{U} \cup \mathcal{D}}(\ccS_I)$ can be generated in polynomial time for all $I$ of size $k$, we can use the same algorithm as in Corollary~\ref{cor:coro}.
\end{proof}

\begin{proposition} \label{prop:limits}
$\EnumUDClo_{S_{10}}$ and $\EnumUDClo_{S_{12}}$ can be solved with delay $O(mn)$.
\end{proposition}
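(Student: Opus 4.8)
The plan is to mirror the proof of Proposition~\ref{prop:S10S12}: decompose $\Cl_{S_{10}\cup\mathcal{U}\cup\mathcal{D}}(\ccS)$ (and its $S_{12}$ analogue) into at most $m$ pieces, enumerate each piece in \emph{lexicographic} order with delay $O(n)$, and then merge the $m$ enumerations with the ``always output the smallest current solution'' technique of \cite{phd_strozecki} already used in Proposition~\ref{prop:S10S12}. Since that merge has delay equal to the sum of the per-piece delays, this yields the claimed $O(mn)$ bound.

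First I would prove a non-uniform version of Lemma~\ref{lemma:struc}. Writing $U=\{i\mid \uparrow_i\in\mathcal{U}\}$ and, for $s\in\ccS$, $W_s=\1(s)\cup U$, the target statement is
\[
\Cl_{S_{10}\cup\mathcal{U}\cup\mathcal{D}}(\ccS)=\bigcup_{s\in\ccS}\Cl_{M_2\cup\mathcal{U}\cup\mathcal{D}}(\ccS_{W_s}),
\]
where pieces are extended by zeros outside $W_s$, together with the analogous statement for $S_{12}$ using $BF$ in place of $M_2$ and treating the always-one index set $I$ exactly as in Lemma~\ref{lemma:struc}. The proof re-runs the induction of Lemma~\ref{lemma:struc}: every operation of $S_{10}$ is monotone and $0$-preserving, and the generator $x\wedge(y\vee z)$ has support contained in that of its first argument, so the only operators that can enlarge a support are the $\uparrow_i$ with $i\in U$. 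Tracing back through first arguments, the support of any closure element lands inside $\1(s)\cup U$ for a \emph{single} $s$. Conversely, inside the piece $s_{W_s}$ equals $\one$ on $\1(s)$, so applying the $\uparrow_i$ for $i\in U\setminus\1(s)$ produces $\one$ on all of $W_s$; since $\one\wedge(x\vee y)=x\vee y$, this simulates $\vee$ and hence $M_2$ (resp.\ $BF$) inside the piece, giving the stated equality.

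Next I would enumerate each piece $\Cl_{M_2\cup\mathcal{U}\cup\mathcal{D}}(\ccS_{W_s})$ in lexicographic order with delay $O(n)$. Because $\one$ belongs to the piece, every downward closure is absorbed: $\bar e^i=\downarrow_i(\one)$ lies in the piece and $\downarrow_i(v)=v\wedge\bar e^i$ is realised by the $\wedge$ of $M_2$, so the piece reduces to a closure by $M_2$ (resp.\ $BF$) together with only upward closures. The up-free coordinates then contribute exactly a monotone-upward freedom on top of the $M_2$-closure, i.e.\ $w$ is in the piece iff some $u\in\Cl_{M_2}(\ccS_{W_s})$ agrees with $w$ off $U$ and satisfies $u\le w$ on $U$. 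I would handle this by adapting the hill-climbing algorithm of Proposition~\ref{prop:M2}, which already emits $\Cl_{M_2}$ in lexicographic order with delay $O(n)$, letting the up-free bits be raised freely while preserving the lexicographic sweep; the $S_{12}$ pieces are $BF$-closures, enumerable in lexicographic order with delay $O(n)$ as in Proposition~\ref{prop:S10S12}. Merging the $\le m$ pieces then gives delay $O(mn)$.

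The hard part will be the extended structural lemma, specifically making the support-containment argument airtight under upward closures: one must check that iterating $S_{10}$-operations with the $\uparrow_i$ never lets a support escape $\1(s)\cup U$ for the $s$ obtained by following first arguments, and that a \emph{single} $s$ always suffices rather than a union of several --- precisely where the absence of a genuine $\vee$ in $S_{10}$ is used. The second delicate point is keeping the per-piece delay at $O(n)$ and in a fixed lexicographic order once the monotone-upward freedom on the $U$-coordinates is added to the hill-climbing of Proposition~\ref{prop:M2}; and for $S_{12}$ the interaction of the always-one set $I$ with $\mathcal{U}$ and $\mathcal{D}$ (under the standing assumption that no coordinate carries both $\uparrow_i$ and $\downarrow_i$) must be tracked as in Lemma~\ref{lemma:struc}.
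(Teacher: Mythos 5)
Your route is genuinely different from the paper's. The paper never re-proves a structural decomposition in the presence of $\mathcal{U}$ and $\mathcal{D}$: it simulates the unary operators by adding $O(m+n)$ new generator vectors and then invokes Proposition~\ref{prop:S10S12} as a black box. Concretely, for $S_{10}$ it shows $\Cl_{S_{10}\cup\mathcal{U}}(\ccS)=\Cl_{S_{10}}(\ccS')$ with $\ccS'=\ccS\cup\{s'\mid s\in\ccS\}\cup\{e^i\mid \uparrow_i\in\mathcal{U}\}$, using the identity $\uparrow_i(x)=x'\wedge(x\vee e^i)$, then absorbs $\mathcal{D}$ by adding the vectors $y^i$; for $S_{12}$ it observes that $x\wedge(\uparrow_i(s)\to s)=\downarrow_i(x)$, so a coordinate carrying a unary operator becomes unconstrained and is projected out via Proposition~\ref{prop:red}. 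Your plan instead generalizes Lemma~\ref{lemma:struc} and redoes the union-of-pieces merge; your non-uniform lemma for $S_{10}$ (with $W_s=\1(s)\cup U$) is correct, since the support-tracing induction goes through precisely because $\uparrow_i$ only ever adds indices of $U\subseteq W_s$, and the merge argument is the same as in Proposition~\ref{prop:S10S12}, so a completed version of your proof would give the same $O(mn)$ bound.

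However, the step you leave as ``adapt the hill-climbing'' is a genuine gap, and it is not symmetric to the downward case you do handle. You absorb $\downarrow_i$ correctly via $\downarrow_i(v)=v\wedge\downarrow_i(\one)$ because the all-ones vector of the piece is available; but the analogous absorption $\uparrow_j(v)=v\vee e^j$ needs $e^j$ to lie in the piece, which can fail (if all vectors of $\ccS_{W_s}$ share a $1$-coordinate $j\in U$, the piece $\Cl_{M_2\cup\mathcal{U}\cup\mathcal{D}}(\ccS_{W_s})$ need not contain $e^j$), and this is exactly why ``raising the up-free bits freely'' is not a plain $M_2$-closure and why Proposition~\ref{prop:M2} does not apply unchanged. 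The gap can be closed cleanly: after the normalization of Proposition~\ref{prop:red}, for every coordinate $i$ either some generator of $\ccS$ is $0$ at $i$ or $\downarrow_i\in\mathcal{D}$; since the piece's generators are the projections of \emph{all} of $\ccS$, this property is inherited, so the piece contains $\zero$ (take the conjunction of all its generators and apply the available $\downarrow_i$'s), hence also $e^j=\uparrow_j(\zero)$ for every $j\in U$. Each piece is then literally $\Cl_{M_2}$ (resp.\ $\Cl_{BF}$) of $\ccS_{W_s}\cup\{\downarrow_i(\one_{W_s})\mid \downarrow_i\in\mathcal{D},\ i\in W_s\}\cup\{e^j\mid j\in U\}$, and Proposition~\ref{prop:M2} applies verbatim with delay $O(n)$ and lexicographic output. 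Your $S_{12}$ case is likewise deferred rather than proved; note that the paper sidesteps all of its bookkeeping (the always-one set against $\mathcal{U}$ and $\mathcal{D}$) with the single identity $x\wedge(\uparrow_i(s)\to s)=\downarrow_i(x)$, which reduces $\EnumUDClo_{S_{12}}$ to $\EnumClo_{S_{12}}$ on a projection of the instance.
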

\begin{proof}
We consider a set of upward closures $\mathcal{U}$ and we let $s'$ be the vector $s$ to which each element of
 $\mathcal{U}$ has been applied.
 Let $e^i$ be the vector such that $e^i_i = 1$ and $e^i_j = 0$ for $i\neq j$. We let $\ccS' = \ccS \cup \{s' \mid s \in \ccS\} \cup \{ e^i \mid \uparrow_i \in \mathcal{U}\}$.
 Let us prove that $\Cl_{S_{10}\cup\mathcal{U}}(\ccS) = \Cl_{S_{10}}(\ccS')$.
Note that the vector $\zero$ is in  $\Cl_{S_{10}}(\ccS)$ since $\wedge$ is in the clone $S_{10}$ and for every index $i$ there are two vectors $v,w$ such that $v_i =0$ and $w_i = 1$.
 The inclusion from right to left is clear: it is easy to build $s'$ by definition and $e^i = \uparrow_i (\zero)$.
 The other inclusion is true since $\uparrow_i (x) = x' \wedge (x \vee e^i)$.

We can prove by induction, using the monotonicity of $\vee$ and $\wedge$ that
$\Cl_{S_{10} \cup \mathcal{U} \cup \mathcal{D}}(\ccS)= \Cl_{\mathcal{D}}( \Cl_{\mathcal{U}}(\Cl_{S_{10}}(\ccS))$.
Therefore we only need to characterize $ \Cl_{\mathcal{D}}(\Cl_{S_{10}}(\ccS'))$, since we have shown that $\Cl_{\mathcal{U}}(\Cl_{S_{10}}(\ccS)) = \Cl_{S_{10}}(\ccS')$.
Let us define $y^i$ as the intersection of all elements $s \in \ccS'$ with $s_i =1$
to which are applied all $\downarrow_j \in \mathcal{D}$ for $j\neq i$.
Let us define $\ccS'' = \ccS' \cup \{ y^i \mid i \in [n] \}$, we now prove that
$ \Cl_{\mathcal{D}}(\Cl_{S_{10}}(\ccS')) = \Cl_{S_{10}}(\ccS'')$. The inclusion from right to left is clear: it is easy to build
$y^i$ since $\wedge$ is in the clone $S_{10}$. Now consider $v \in \Cl_{S_{10}}(\ccS')$ and $\downarrow_i \in \mathcal{D}$. By Lemma~\ref{lemma:struc}, $v$ can be seen as the union of some atoms of $\ccS$ intersected with $v$. Therefore, if we consider the union of $y^j$ with $j \in \1(v)$ and $j \neq i$, it is equal to $\downarrow_i(v)$ once intersected with $v$ since $y^j_i =0$ by construction.
It proves that $ \downarrow_i(v) \in \Cl_{S_{10}}(\ccS'')$ which proves the equality of the two closures.

As a conclusion, we have reduced an instance of $\EnumUDClo_{S_{10}}$ with $m$ vectors of size $n$ to an instance of $\EnumClo_{S_{10}}$ with at most $O(m+n)$ vectors of size $n$ which proves the result, using the algorithm of Proposition~\ref{prop:S10S12}.

 We now consider $\Cl_{S_{12} \cup \mathcal{U} \cup \mathcal{D}}(\ccS)$.
 Let $\uparrow_i \in \mathcal{U}$, we have assumed that there is $s \in \ccS$ with $s_i = 0$.
 Consider $x \in \Cl_{S_{12} \cup \mathcal{U} \cup \mathcal{D}}(\ccS)$, we have $x \wedge (\uparrow_i(s) \to s) = \downarrow_i(x)$. Therefore when we have an upper closure $\uparrow_i$, we can simulate the corresponding downward closure  $\downarrow_i(x)$ and thus forget about the coefficient which is not constrained.
 The proof is the same for downward closures, therefore we can reduce $\EnumUDClo_{S_{12}}$ to $\EnumClo_{S_{12}}$ on a projection of the instance, which by Proposition~\ref{prop:S10S12} yields the desired algorithm.
\end{proof}

If we allow more general operators as inputs, that is any $k$-ary function from $k$-tuples of vectors to vectors,
it is possible to use the incremental delay saturation algorithm. However, the problem of deciding membership in the closure becomes hard. For instance, we consider unary operators $O_{i,j,k}$, which act on the coefficients $i,j,k$ of a vector:
if $v_i=v_j=v_k=0$ then $O_{i,j,k}(v)_i=O_{i,j,k}(v)_j=O_{i,j,k}(v)_k=1$ otherwise the coefficients are left unchanged.
It is easy to reduce the problem \textsc{Exact3Cover} to deciding membership in the closure by these operators (a subset of them are given as input). To each subset $\{i,j,k\}$ is associated the operator  $O_{i,j,k}$, and the vector $\one$ belongs to the closure of $\ccS = \{ \zero \}$ by these operators if and only if the set of subsets has an exact cover.

\begin{openproblem}
 Is the membership problem hard for unary operators acting on $2$ coefficients ?
\end{openproblem}

There is another natural view on downward and upward closures. The question is to enumerate the subsets or supersets of
$\Cl_{\cF}(\ccS)$ for any fixed $\cF$. In that way, upward and downward closures do not mix with the operators in $\cF$.
We let the reader check that these problems are still solvable with polynomial delay by inspecting every case of the reduced Post's lattice. When generating a set of elements closed by subsets or supersets, we are often only interested in the minimal or maximal elements for inclusion. The next section deal with this natural extension.

\section{Minimal and maximal elements}\label{sec:max}

In this section, we are interested in enumerating minimal (resp. maximal) elements of $\Cl_{\cF}(\ccS)$ for a given family $\cF$. Given two vectors $v,u \in \{0,1\}^{n}$, we say that $v$ is smaller than $u$ if $\1(v) \subseteq \1(u)$, that is the vectors are ordered by the inclusion of the corresponding sets. A vector $v\in \Cl_\cF(\ccS)$ is said to be minimal (resp. maximal), if it is smaller (resp. larger) or incomparable with all vectors of $\Cl_\cF(\ccS)$. We denote by $\EnumCloMax_{\cF}$ and $\EnumCloMin_{\cF}$ the problem of, given $\ccS$, enumerating respectively the maximal and the minimal elements of $\Cl_{\cF}(\ccS) \setminus \{\zero,\one\}$. We exclude $\zero$ and $\one$ to obtain non trivial problems when $\Cl_{\cF}(\ccS)$ contains them.
Note that $\EnumCloMax_{\cF}$ and $\EnumCloMin_{\cF}$ cannot be solved by a simple saturation algorithm, therefore they are not naturally in $\IncP$ contrarily to all the problems studied in the previous sections.

\subsection{Cases with few solutions}

There are many clones which have only a polynomial number of minimal or maximal elements which can be easily found,
making their enumeration trivial.

The maximal elements of a closure by $E_2$ are the original vectors since the operation is decreasing. The minimal elements are the intersection of all base vectors containing $i$ for each $i \in [n]$.
The closures by the clones $M_2$, $R_0$ and $R$ are characterized as union of atoms or the intersection with a base vector of a union of atoms for $S_{10}$ and $S_{12}$.
As a consequence the minimal elements are these atoms which can be computed in polynomial time.
For the maximal elements, if we assume that the union of all atoms is equal to $\one$ then it is the unions of all atoms but one.

The hierarchies $S_{10}^k$ and $S_{12}^k$ contain respectively $S_{10}$ and $S_{12}$ and thus a closure by  $S_{10}^k$ and $S_{12}^k$
contains the atoms defined for $S_{10}$ and $S_{12}$. They are their minimal elements, which are thus easy to produce. However since the structure of $S_{10}^k$ and $S_{12}^k$ is slightly different from $S_{10}$ and $S_{12}$, they have many more maximal elements and we will see in Section~\ref{ssec:infinite} that they are much harder to enumerate.

We are left with the cases of minimal and maximal elements when the clone contains the majority function, maximal elements when the clone contains a threshold function of arity larger than $3$ and minimal and maximal elements for the algebraic clones $L_0$ and $L_2$.

\subsection{Majority }\label{ssec:}

For the five clones which contain the majority operator, we can use the reduction to a $2CNF$ formula of Proposition~\ref{prop:2SAT}. Then we can generate the maximal (resp. minimal) elements of their closure by generating maximal (resp. minimal) models of a $2CNF$ formula. Note that the minimal models of a $2CNF$ formula are the negations of the maximal models of the same formula where each variable is negated. Therefore the problem of enumerating maximal models is equivalent to the problem of generating minimal models.

The problem of enumerating the maximal models of a $2CNF$ formula has already been studied and can be solved with polynomial delay as explained in~\cite{kavvadias2000generating}.
The algorithm works by eliminating by resolution all positive variables in the formula. It means that for a given variable
$x$, and for all clauses $A \vee x$ and $B \vee \neg x$ all clauses $A \vee B$ are created and $A \vee x$, $B \vee \neg x$ are removed.
This operation does not change the set of maximal models. This step can be successively applied in time $O(n^3)$ to obtain $m'$ clauses where $n$ is the number of variables.
Then the formula is a $2CNF$, such that each variable is always positive or always negative.
When enumerating maximal models of such a formula all positive variables are set to $1$ and the formula is thus an antimonotone $2CNF$.
Therefore the problem can be reduced to the enumeration of maximal independent sets of the graph whose vertices are variables and the edges are clauses. The later problem can be solved with delay $O(n^3)$ and space $O(n^2)$ since $m' \leq n^2$, see~\cite{LawlerLK80}. Note that any graph can be encoded into an antimonotone $2CNF$, therefore improving the enumeration of maximal solutions of antimonotone $2CNF$ means finding a better enumeration algorithm for maximal independent sets in general graphs which is a hard open problem.
%

\subsection{Linear algebra}

The clone $L_0$ is generated by $+$, hence the closure by $L_0$ is a vector space. Therefore we want to understand the problem of finding the minimal and maximal vectors of a vector space over $\mathbb{F}_2$. Since the minimum of a vector space is always $\zero$,  we consider the minimal \emph{non zero vectors}, a concept which is related to circuits of a matroid.
A binary matroid $\mathcal{M}$ is represented by a set of vectors over $\mathbb{F}_2$.
 A circuit of $\mathcal{M}$ is a minimal --for the inclusion-- dependent set, that is a subset $S$ of vectors in $\mathcal{M}$ such that there is a linear combination of elements in $S$ equal to $\zero$.

\begin{theorem}
 The problem $\EnumCloMin_{L_0}$ is equivalent to the enumeration of the circuits of a binary matroid.
\end{theorem}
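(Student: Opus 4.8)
The plan is to realize $\Cl_{L_0}(\ccS)$ as a binary linear code and to exploit the classical correspondence between support-minimal nonzero codewords and the circuits of the binary matroid attached to a parity-check matrix. First I would record the reformulation: since $L_0 = \langle + \rangle$, the set $\Cl_{L_0}(\ccS)$ is exactly the $\mathbb{F}_2$-vector space $V$ spanned by the vectors of $\ccS$ inside $\mathbb{F}_2^{n}$. As $\zero$ is the unique minimum of $V$ and is excluded, the solutions of $\EnumCloMin_{L_0}(\ccS)$ are precisely the nonzero $v \in V$ whose support $\1(v)$ is minimal for inclusion, save for the all-ones vector $\one$ which is also excluded.

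Next I would set up the matroid for the forward reduction. In polynomial time one computes a basis of $V$ and then a parity-check matrix $H$, that is a generator matrix of the dual code $V^{\perp}$, so that $V = \{ v \mid Hv = \zero \}$; this is standard $\mathbb{F}_2$ linear algebra. Let $M[H]$ be the binary matroid on ground set $[n]$ whose elements are the columns $H_1,\dots,H_n$ of $H$ and whose independent sets are the linearly independent column sets. The crux is the identity
\[
  v \in V \iff Hv = \zero \iff \sum_{i \in \1(v)} H_i = \zero ,
\]
which says that $\1(v)$ is exactly a set of columns summing to zero, i.e. an element of the cycle space of $M[H]$. Over $\mathbb{F}_2$ the minimal nonempty such sets are precisely the circuits, so the map $v \mapsto \1(v)$ is a bijection between the support-minimal nonzero vectors of $V$ and the circuits of $M[H]$.

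For the converse I would start from a binary matroid $M$ given by a set of vectors $B = \{B_1,\dots,B_n\}$ over $\mathbb{F}_2$, its elements, whose circuits are the minimal dependent subsets of columns. Putting $V = \{ v \mid Bv = \zero\} = \ker B$ and computing a basis $\ccS$ of $V$ in polynomial time, the same identity shows that the circuits of $M$ are exactly the support-minimal nonzero vectors of $\Cl_{L_0}(\ccS) = V$. Each direction is thus a polynomial-time instance map together with the solution bijection $v \leftrightarrow \1(v)$, which is exactly the tight equivalence used throughout the paper, since the associated problem $C$ has a single solution per object and hence constant delay.

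The main obstacle is the matroid-theoretic heart of the argument: proving cleanly that the support-minimal nonzero codewords are exactly the circuits of $M[H]$, i.e. that the minimal nonzero elements of the cycle space are the circuits. I would argue directly over $\mathbb{F}_2$: a nonzero dependency among columns is a nonempty subset summing to zero, every proper nonempty subset of a circuit is independent so a circuit carries a unique dependency, and any zero-sum set that is not a single circuit strictly contains a smaller zero-sum set and is therefore not support-minimal. The only remaining bookkeeping concerns $\zero$ and $\one$: the empty support is never a circuit, so $\zero$ causes no difficulty, whereas $\one$ corresponds to the ground set $[n]$ possibly being a circuit. Since $\EnumCloMin_{L_0}$ excludes $\one$, the reduction toward circuit enumeration simply tests in polynomial time whether $[n]$ is a circuit and outputs it separately when it is, which does not affect the delay.
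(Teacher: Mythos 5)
Your proof is correct and follows essentially the same route as the paper's: both directions pass between a spanning set of the vector space $\Cl_{L_0}(\ccS)$ and a representation/parity-check matrix whose kernel is that space, and both rest on the correspondence between support-minimal nonzero kernel vectors and circuits of the binary matroid on the columns. The only difference is one of care: you spell out the circuit--codeword correspondence and the bookkeeping for the excluded vectors $\zero$ and $\one$ (the degenerate case $\Cl_{L_0}(\ccS)=\{\zero,\one\}$, where $[n]$ is a circuit but $\EnumCloMin_{L_0}$ outputs nothing), which the paper's terser proof leaves implicit.
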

\begin{proof}
Let $\mathcal{M}$ be a binary matroid and let $M$ be the matrix whose columns are the vectors of $\mathcal{M}$.
 The dependent sets of $\mathcal{M}$ are the solutions to $Mx = 0$ or supersets of these elements.
 We can find a basis of the kernel of $M$ in polynomial time and we call it $B$. Notice that by definition,
 $\Cl_{L_0}(B)$ contains only dependent sets of $\mathcal{M}$ and amongst them all its minimal dependent sets.
 Hence solving $\EnumCloMin_{L_0}$ on $B$ is equivalent to the enumeration of the circuits of $\mathcal{M}$.

 Now assume that we are given a set $\mathcal{S}$ of binary vectors. We can compute a basis $B$ of the vector space $\Cl_{L_0}(\mathcal{S})$ and then we compute in polynomial time a matrix $M$ such that $B$ is a basis of the kernel of $M$.
 The binary matroid defined by the columns vectors of $M$ has for circuits the minimal elements of $\Cl_{L_0}(B)= \Cl_{L_0}(\mathcal{S})$, which proves the reduction.
\end{proof}

The enumeration of the circuits of a matroid can be done in incremental polynomial time~\cite{khachiyan2005complexity}, but it is an open question to find a polynomial delay for this problem even when the matroid is binary. In particular, it is not possible to enumerate the circuits by increasing Hamming weight, since finding the one with smallest Hamming weight is $\NP$-hard~\cite{berlekamp1978inherent}. Moreover, we cannot use the backtrack search, since deciding whether a set of vectors can be extended into a circuit is $\NP$-complete~\cite{durand2003inference}. Therefore we have an incremental polynomial time algorithm for $\EnumCloMin_{L_0}$ but improving it would solve an important open question.

The reduction done for $\EnumCloMin_{L_0}$ also works for $\EnumCloMin_{L_2}$, but since it is an affine space,
the problem is equivalent to finding the minimal solutions of a non homogeneous system. As a consequence solving $\EnumCloMin_{L_2}$ is equivalent to enumerating the circuits of a matroid containing \emph{a given element}. This later problem is also in $\IncP$~\cite{khachiyan2005complexity}.

\begin{openproblem}
 Can we relate $\EnumCloMax_{L_0}$ to a known enumeration problem as we have done for $\EnumCloMin_{L_0}$?
 In particular, is $\EnumCloMax_{L_0} \in \IncP$.
\end{openproblem}

\subsection{The infinite part}\label{ssec:infinite}

We now study $\EnumCloMax_{S_{10}^k}$ and $\EnumCloMax_{S_{12}^k}$ and we show that they are both equivalent to the enumeration of maximal independent sets of a hypergraph of dimension $k$, a problem denoted by $\EnumKMaxIndSet$. The dimension of a hypergraph is the size of its largest hyperedge. An independent set of a hypergraph is a subset of its vertices that does not contain any hyperedge. It is said to be maximal, if it is maximal by inclusion. The problem of enumerating the maximal independent sets or equivalently the minimal transversal of a hypergraph of dimension bounded by $k$ belongs to $\IncP$ (c.f. \cite{eiter1995identifying,boros98,boros2000efficient}) but no polynomial delay algorithm is known.  Therefore the equivalence of
$\EnumKMaxIndSet$ with $\EnumCloMax_{S_{10}^k}$ and $\EnumCloMax_{S_{12}^k}$ proves that they are in $\IncP$ since this class is closed under polynomial delay reduction.


We first give a characterization of the vectors of $Cl_{S_{10}^k}(\ccS)$ and $Cl_{S_{12}^k}(\ccS)$. By Theorem \ref{thm:BP}, a vector $v$ belongs to $Cl_{S_{10}^k}(\ccS)$ (resp. $Cl_{S_{12}^k}(\ccS)$) if and only if for every subset of indices $I$ of size $k$, $v_I\in Cl_{S_{10}^k}(\ccS)_{I}$ (resp. $v_I\in Cl_{S_{12}^k}(\ccS)_{I}$). So the vectors of the closure are completely determined by the closures of the projections on the $k$-subsets of indices of $\ccS$.  Since the projection of the closure is equal to the closure of the projection, the following two lemmas give a characterization of the vectors in $Cl_{S_{10}^k}(\ccS)_{I}$ and $Cl_{S_{12}^k}(\ccS)_{I}$.

\begin{lemma}\label{lm:S10}
     Let $k>3$, $\ccS $ be a set of vectors each of size $k$ and let $v$ be a vector of size $k$. Then $v\in Cl_{S_{10}^k}(\ccS)$ if and only if:
     \begin{itemize}
          \item There exists $u\in \ccS$ such that $\1(v)\subseteq \1(u)$
          \item For all $i,j\leq k$ such that $v_i=1$ and $v_j=0$, there exists $u \in \ccS$ with $u_i=1$ and $u_j=0$
     \end{itemize}
\end{lemma}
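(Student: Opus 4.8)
The plan is to prove both implications directly, since the Baker--Pixley theorem (Theorem~\ref{thm:BP}) is essentially vacuous here: $T := Th_k^{k+1}$ is a near-unanimity term of arity $k+1$, so it controls projections of size $k$, yet the vectors already have size $k$, so the only projection it isolates is the whole vector. Instead I would work with the single generator $T$, using that $S_{10}^k = \langle T \rangle$ means $\Cl_{S_{10}^k}(\ccS)$ is obtained from $\ccS$ by iterating $T$, so each direction can be argued by induction on this generation. The one computation I would record first is that $T(x,\dots,x,y,z) = x \wedge (y \vee z)$ when the first $k-1$ slots carry $x$: where $x=1$ the number of ones is $(k-1)+y+z$, which meets the threshold $k$ exactly when $y \vee z$, and where $x=0$ it is $y+z \le 2 < k$, forcing $0$. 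Hence $\wedge \in S_{10}^k$ (set $z=y$) and, crucially, fixing the first argument to a vector $u$ yields a \emph{masked disjunction} $u \wedge (y \vee z)$ simulating $\vee$ inside the support $\1(u)$; this is the analogue of the $\one \wedge (x \vee y)$ trick of Lemma~\ref{lemma:struc} and the main tool for the hard direction.

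For the forward implication ($v \in \Cl_{S_{10}^k}(\ccS)$ implies the two conditions) I would show each condition is an invariant preserved by $T$. For the second condition, fix a pair $(i,j)$ and note that the property ``$w_i = 1 \Rightarrow w_j = 1$'' is preserved: if $w = T(w^1,\dots,w^{k+1})$ has $w_i=1$ then at least $k$ of the inputs are $1$ at $i$, each of those is then $1$ at $j$, so at least $k$ inputs are $1$ at $j$ and $w_j=1$. Thus if no base vector witnessed $(1,0)$ on $(i,j)$, none of the closure would, contradicting $v_i=1,v_j=0$. For the first condition, the property ``$\1(w) \subseteq \1(u)$ for some $u \in \ccS$'' is preserved by a pigeonhole argument using the size constraint: for $w=T(w^1,\dots,w^{k+1})$, each coordinate $\ell \in \1(w)$ has at most one input equal to $0$ at $\ell$; as $|\1(w)| \le k < k+1$, these at most $k$ ``bad'' indices cannot cover all $k+1$ inputs, so some $w^{m^*}$ is $1$ on all of $\1(w)$, whence $\1(w) \subseteq \1(w^{m^*}) \subseteq \1(u^{m^*})$ by the induction hypothesis.

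For the backward implication I would give an explicit construction, assuming $P := \1(v)$ and $Z := \0(v)$ are both nonempty (the case $v = \one$, where the first condition forces $\one \in \ccS$, is immediate, and the degenerate case $v = \zero$ is handled separately, e.g.\ using Proposition~\ref{prop:red} to discard always-true coordinates). Fix $u \in \ccS$ with $P \subseteq \1(u)$ from the first condition, and for each $j \in Z$ and $i \in P$ the vector $u^{i,j} \in \ccS$ with $u^{i,j}_i = 1$, $u^{i,j}_j = 0$ from the second. For each $j \in Z$ I would form $a^j := u \wedge \bigvee_{i \in P} u^{i,j}$, realized inside the closure by nesting the masked disjunction $g(u,\cdot,\cdot) = u \wedge (\cdot \vee \cdot)$. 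Then $a^j_i = u_i \wedge \bigvee_{i' \in P} u^{i',j}_i = 1$ for $i \in P$ (the term $i'=i$ already contributes a $1$), while $a^j_j = u_j \wedge 0 = 0$; so $\1(a^j) \supseteq P$ and $j \notin \1(a^j)$. Finally $v = \bigwedge_{j \in Z} a^j$ lies in the closure (as $\wedge \in S_{10}^k$) and has support $\bigcap_{j \in Z} \1(a^j) = P$, since this intersection contains $P$ yet excludes every $j \in Z$.

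The crux is this backward construction, and within it the observation that the masked disjunction $u \wedge (y \vee z)$ is natively available from $T$ and lets OR-type combinations be performed inside the support of a single base vector $u$ supplied by the first condition. Without it, $S_{10}^k$ looks purely ``conjunction-like'' (threshold $k$ out of $k+1$ only shrinks supports), and it is not at all obvious how a vector with a prescribed support could be assembled; recognizing that the $S_{10}$ generator $x \wedge (y \vee z)$ survives in the whole hierarchy is what unlocks the argument. The forward direction is comparatively routine, the only subtlety being the pigeonhole step, which is exactly where the hypothesis that the vectors have size $k$, one less than the arity of $T$, is used.
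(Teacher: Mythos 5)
Your proposal is correct on the same terms as the paper's own proof, and its hard direction takes a genuinely different route. The forward direction coincides with the paper's: both conditions are shown to be invariants of $Th_k^{k+1}$, the second by the identical counting argument, the first by the same pigeonhole of $k+1$ inputs against at most $k$ coordinates (the paper phrases it contrapositively---two inputs must share a zero inside $I$, so the output is not all ones on $I$---while you extract an input that is $\one$ on all of $\1(w)$; the estimate is the same). For the backward direction, the paper runs an induction on prefixes, showing $v_{[r]}\in Cl_{S_{10}^k}(\ccS)_{[r]}$ for $r=\ell,\dots,k$, and inside each step builds a sequence $z^i=Th_k^{k+1}(v',\dots,v',u^i,z^{i-1})$ whose mask $v'$ is an intermediate \emph{closure} element agreeing with $v$ on $[r-1]$. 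You instead exhibit one closed-form witness $v=\bigwedge_{j\in\0(v)}\bigl(u\wedge\bigvee_{i\in\1(v)}u^{i,j}\bigr)$, with the mask being the \emph{base} vector $u$ supplied by the first condition, the nesting being legitimate because $u\wedge((u\wedge A)\vee B)=u\wedge(A\vee B)$. Both constructions run on the same engine, the identity $Th_k^{k+1}(x,\dots,x,y,z)=x\wedge(y\vee z)$, which the paper uses implicitly through coordinate counting and you isolate as an explicit gadget; but your version eliminates both inductions, makes visible exactly where each hypothesis enters, and produces an explicit term of the clone certifying membership, while the paper's prefix induction mirrors the backtrack-search viewpoint used elsewhere in the paper. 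Your remark that Theorem~\ref{thm:BP} is vacuous at size $k$ also matches the paper, which invokes Baker--Pixley only afterwards, to lift this lemma to Corollary~\ref{cor:S10}.

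One caveat, which you share with the paper rather than introduce: the case $v=\zero$. There the two conditions degenerate to ``$\ccS\neq\emptyset$'', yet $\zero\in Cl_{S_{10}^k}(\ccS)$ can fail: for $\ccS=\{\one\}$ every function generated by $Th_k^{k+1}$ satisfies $f(\one,\dots,\one)=\one$, so the closure is $\{\one\}$. Hence the lemma as stated is false for $v=\zero$; the paper's own construction silently breaks there (its sequence $(z^i)_{i\leq\ell}$ is empty when $\ell=0$), and your proposed repair via Proposition~\ref{prop:red} cannot succeed either, since no argument can prove a false statement. The honest fix, for you and for the paper, is to add the hypothesis $\1(v)\neq\emptyset$ (harmless for the paper's later use of the lemma), rather than to claim the degenerate case can be handled separately.
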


\begin{proof}
     ($\Rightarrow$) Let us prove that if for a given set of indices $I$, $|I|\leq k$, $\ccS$ has the property that no projection of its vectors on $I$ is equal to $\one$, then this property is preserved by the application of $Th_k^{k+1}$ and thus $\Cl_{S_{10}^k}(\ccS)$ has the same property. This implies that if no vector $u$ of $\ccS$ is such that $\1(v)\subseteq \1(u)$, then $v\notin \Cl_{S_{10}^k}(\ccS)$.
     Let us consider $Th_k^{k+1}(u^1,\dots,u^{k+1})$, each $u^i$ has at least one zero on indices in $I$. Since there are $k+1$ such vectors and $I$ is of size at most $k$, by the pigeonhole principle, there are $i,j\leq k$ and $l\in I$ such that $u^i_l = u^j_l = 0$. This implies that $Th_k^{k+1}(u^1,\dots,u^{k+1})\neq \one$. Thus, if $v\in Cl_{S_{10}^k}(\ccS)$, there exists $u\in \ccS$ with $\1(v)\subseteq \1(u)$.

     Let us prove now that if, for all vectors $u \in \ccS$, we have $u_{i,j}\neq (1,0)$, then the same is true in $\Cl_{S_{10}^k}(\ccS)$, since the property is preserved by the application of $Th_k^{k+1}$. Assume that $v=Th_k^{k+1}(u^1,\dots,u^{k+1})$ with $v_i=1$. Then at least $k$ vectors from  $u^1,\dots,u^{k+1}$ are equal to $1$ on the $i^{\text{th}}$ coordinate. By hypothesis, all these $k$ vectors are equal to $1$ on the $j^{\text{th}}$ coordinate which implies that $v_j=1$.

    \medskip

     \noindent($\Leftarrow$)
     Let $v$ be a vector and assume that there exists $u\in \ccS$ such that $\1(v)\subseteq \1(u)$ and assume that for all $i,j\leq k$ such that $v_i=1$ and $v_j=0$, there exists $u \in \ccS$ with $u_i=1$ and $u_j=0$. We show that $v\in Cl_{S_{10}^k}(\ccS)$. By rearranging the coordinates, assume without loss of generality that $v_i=1$ for all $i\leq \ell$ and $v_i=0$ for all $\ell < i \leq k$. We show by induction that for all $\ell\leq r \leq k$, $v_{[r]}\in Cl_{S_{10}^k}(\ccS)_{[r]}$.
     For $r=\ell$, we have $v_{[r]}=\one$. By assumption there exists $u\in \ccS$ with $\1(v)\subseteq 1(u)$, so we have $v_{[r]}=u_{[r]}\in \ccS_{[r]}\subseteq Cl_{S_{10}^k}(\ccS)_{[r]}$. Assume now that $v_{[r-1]} \in Cl_{S_{10}^k}(\ccS)_{[r-1]}$ for a given $r$ with $ \ell <r \leq k$ and let us show that $v_{[r]} \in Cl_{S_{10}^k}(\ccS)_{[r]}$. Since $v_{[r-1]} \in Cl_{S_{10}^k}(\ccS)_{[r-1]}$ there exists $v'\in Cl_{S_{10}^k}(\ccS)$ such that $v'_{[r-1]}=v_{[r-1]}$. Observe first that if $v'_r=0$ then the result directly holds since $v_r=0$ and then $v'_{[r]}=v_{[r]}$. So assume that $v'_r=1$. Since by assumption, for each $i,j\leq k$ such that $v_i=1$ and $v_j=0$ there exists a vector $u \in \ccS$ with $u_i=1$ and $u_j=0$, we have that for each $i\leq \ell$, there exists $u^{i}\in \ccS$ with $u^i_i=1$ and $u^i_r=0$. We now construct a sequence of vectors $(z^{i})_{i\leq \ell}$ that belong to $Cl_{S_{10}^k}(\ccS)$ and we prove by induction that  each $z^{i}$ is such that :
     \begin{itemize}
          \item    $z^{i}_j=1$ for all $j\leq i$
          \item $z^{i}_j=0$ for all $\ell<j\leq r$
      \end{itemize}
      Then we have that $z^{\ell}_{[r]}=v_{[r]}$ which will prove that $v_{[r]}\in Cl_{S_{10}^k}(\ccS)_{[r]}$.
      We first let $z^{1}=Th_k^{k+1}(v',v',...,v',u^1,u^1)$. Clearly $z^{1}_1=1$ since $u^{1}=1$ and $v'_1=1$. Furthermore, since $u^{1}_r=0$, $z^{1}_r=0$  and since $v'_{j}=0$ for all $\ell< j <r$, $z^{1}_j=0$ for all $\ell< j <r$.
      Now for $1<i\leq \ell$ we let $z^{i}=Th_k^{k+1}(v',v',...,v',u^i,z^{i-1})$. Let us prove that $z^{i}$ has the sought properties. Let $j\leq i$. Notice that since $v'_j=1$, $z^{i}_j=1$ if either $u^{i}_j$ or $z^{i-1}_j$ is equal to one. If $j=i$ then $u^{i}_{j}=1$ and then $z^{i}_j=1$. Now if $j<i$ then by induction, $z^{i-1}_j=1$ and then $z^{i}_j=1$. Now let $\ell<j\leq r$. If $j<r$, we have that $v'_j=0$ and then $z^{i}_j=0$. Now since by induction $z^{i-1}_r=0$ and since by definition $u^{i}_r=0$, we have $z^{i}_r=0$.
      So for every $i\leq \ell$, $z^{i}$ has the sought property which implies in particular that $z^{\ell}_{[r]}=v_{[r]}$. Since at each step, $z^{i}$ is constructed by applying the threshold operator on vectors of $Cl_{S_{10}^k}(\ccS)$, $z^{\ell}$ belongs to $Cl_{S_{10}^k}(\ccS)$ and thus $v_{[r]}\in Cl_{S_{10}^k}(\ccS)_{[r]}$.
\end{proof}

\begin{lemma}
     Let $k>3$, $\ccS $ be a set of vectors each of size $k$ and let $v$ be a vector of size $k$. Assume that for all $i < k$, there is $s \in \ccS$ such that $s_i = 0$. Then $v\in Cl_{S_{12}^k}(\ccS)$ if and only if:
     \begin{itemize}
	  \item There exists $u\in \ccS$ such that $\1(v)\subseteq \1(u)$
          \item For all $i,j\leq k$ such that $v_i\neq v_j$, there exists $u \in \ccS$ with $u_i\neq u_j$
     \end{itemize}
\end{lemma}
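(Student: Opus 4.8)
The plan is to follow the two-directional structure of the proof of Lemma~\ref{lm:S10} and to isolate the single place where the extra generator $x \wedge (y \to z)$ of $S_{12}^k$ must intervene.

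For the direction ($\Rightarrow$), I would show that each of the two listed conditions is a necessary obstruction, i.e.\ that its negation describes a property of $\ccS$ that is stable under the whole clone. For the first condition, set $I = \1(v)$ (note $|I| \le k$); if no $u \in \ccS$ had $\1(v) \subseteq \1(u)$, then every vector of $\ccS$ would carry a $0$ inside $I$. This property is preserved by $Th_k^{k+1}$ by the same pigeonhole argument already used in Lemma~\ref{lm:S10}, and it is preserved by $x \wedge (y \to z)$ because $\1(x \wedge (y \to z)) \subseteq \1(x)$, so a $0$ of the first argument inside $I$ is inherited. Hence no element of the closure would be equal to $\one$ on $I$, contradicting $v_I = \one$. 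For the second condition, fix $i \neq j$ with $v_i \neq v_j$; if no $u \in \ccS$ had $u_i \neq u_j$, then the property ``$u_i = u_j$'' would hold throughout $\ccS$, and it is preserved by $Th_k^{k+1}$ (the numbers of ones at $i$ and at $j$ among the inputs coincide, so the output agrees at $i,j$) and by $x \wedge (y \to z)$ (it acts coordinate-wise, so equal values at $i,j$ in all three arguments yield equal values in the output). This would force $v_i = v_j$, a contradiction.

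The real work is the direction ($\Leftarrow$). After permuting coordinates so that $v = 1^{\ell}0^{k-\ell}$, I would run \emph{verbatim} the majority-based induction of Lemma~\ref{lm:S10}, which builds $v_{[r]}$ from $v_{[r-1]}$ using only $Th_k^{k+1}$ together with, for each $i \le \ell$, a \emph{direct} witness satisfying $u^i_i = 1$ and $u^i_r = 0$. The one discrepancy is that the present condition~2 is weaker than its $S_{10}^k$ counterpart: for a pair $(i,r)$ with $v_i = 1$, $v_r = 0$ it only supplies some $u \in \ccS$ with $u_i \neq u_r$, which may have the wrong orientation $u_i = 0$, $u_r = 1$. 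Repairing such a witness is exactly the job of the extra generator. Since $x \wedge (x \to y) = x \wedge y$, the clone $S_{12}^k$ contains $\wedge$, and because every coordinate is $0$ in some vector of $\ccS$, the intersection $\bigwedge_{s \in \ccS} s$ equals $\zero$, whence $\zero \in Cl_{S_{12}^k}(\ccS)$. Taking $u_0$ to be the witness of condition~1 (so $(u_0)_i = 1$ for all $i \le \ell$), the vector
\[
 u_0 \wedge (u \to \zero) = u_0 \wedge \neg u
\]
is a single application of $x \wedge (y \to z)$, hence lies in $Cl_{S_{12}^k}(\ccS)$, and it equals $1$ at coordinate $i$ (as $(u_0)_i = 1$ and $u_i = 0$) and $0$ at coordinate $r$ (as $u_r = 1$ forces $\neg u_r = 0$); it is therefore a direct witness. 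Substituting these repaired witnesses for the $u^i$ in the construction of Lemma~\ref{lm:S10} then yields $v \in Cl_{S_{12}^k}(\ccS)$.

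I expect the orientation repair to be the crux, and it is precisely what distinguishes $S_{12}^k$ from $S_{10}^k$: the implication $x \wedge (y \to z)$ makes a witness of \emph{either} orientation usable, which is exactly why condition~2 is stated with $u_i \neq u_j$ rather than matching the pattern of $v$. Once the flip is available, correctness of the surrounding induction is inherited unchanged from Lemma~\ref{lm:S10}, with the base case $r = \ell$ handled by $u_0$, whose projection on $[\ell]$ equals $\one = v_{[\ell]}$.
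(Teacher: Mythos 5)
Your proof is correct. The ($\Rightarrow$) half is essentially the paper's: both obstructions are shown to be preserved by $Th_k^{k+1}$ (pigeonhole, as in Lemma~\ref{lm:S10}) and by $x\wedge(y\to z)$ (its output's support lies in the first argument's support, and it acts coordinate-wise). The ($\Leftarrow$) half, however, takes a genuinely different route. The paper never reruns the threshold induction: using the support witness $s$ and $\wedge\in S_{12}^k$, it reduces to the projection on $\1(s)$, invokes Lemma~\ref{lemma:struc} to identify $Cl_{S_{12}}(\ccS_{\1(s)})$ with $Cl_{BF}(\ccS_{\1(s)})$ (this is where the every-coordinate-has-a-zero hypothesis enters), and then condition~2 is exactly the criterion for membership in that $BF$-closure, namely being a union of its atoms; the inclusion $S_{12}\subseteq S_{12}^k$ concludes. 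You instead rerun the explicit induction of Lemma~\ref{lm:S10}, repairing each wrongly oriented witness $u$ into $u_0\wedge(u\to\zero)$, with $\zero\in Cl_{S_{12}^k}(\ccS)$ supplied by $\wedge = x\wedge(x\to y)$ and the same hypothesis. The substitution is legitimate: the verification inside Lemma~\ref{lm:S10} uses only the witnesses' values at the two coordinates $i,r$ and closure under $Th_k^{k+1}$, so witnesses drawn from $Cl_{S_{12}^k}(\ccS)$ serve as well as witnesses from $\ccS$. What each approach buys: the paper's is shorter because it recycles the structural description of the limit clone $S_{12}$ as a union of boolean algebras; yours is more self-contained and pinpoints why condition~2 can be stated symmetrically ($u_i\neq u_j$) for $S_{12}^k$ while it must be oriented for $S_{10}^k$ --- one application of the extra generator flips a witness. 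Two marginal points, shared with the paper's own proof: the hypothesis must be read as holding for all $i\leq k$ (under the literal ``$i<k$'' the lemma fails, e.g.\ for $k=4$, $\ccS=\{(0,0,0,1)\}$, $v=\zero$), and the degenerate case $v=\zero$, where the induction of Lemma~\ref{lm:S10} is empty, is covered in your write-up because you prove $\zero\in Cl_{S_{12}^k}(\ccS)$ outright.
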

\begin{proof}
      ($\Rightarrow$) Assume that $v\in Cl_{S_{12}^k}(\ccS)$.
      Observe first that if for two coordinates $i,j$ we have $u_i=u_j$ for all vectors in $\ccS$, then applying any boolean operator produce a vector for which both coordinates are equal. So if $v_i\neq v_j$ there must be a vector in $\ccS$ with $u_i\neq u_j$. Now let us show that there exists a vector $u\in \ccS$ with $\1(v)\subseteq \1(u)$.
      By Lemma \ref{lm:S10} if no vector $u$ of $\ccS$  is such that $\1(v)\subseteq \1(u)$, then $v\notin Cl_{S_{10}^k}(\ccS)$. Furthermore the extra operator $x \wedge (y \to z)$ which is in $S_{12}^k$ is decreasing because of the $\wedge$. So if no vector $u$ of $\ccS$  is such that $\1(v)\subseteq \1(u)$, then $v\notin Cl_{S_{12}^k}(\ccS)$.

      \medskip

      \noindent($\Leftarrow$) Assume that $v$  has the following properties
      \begin{itemize}
          \item For all $i,j\leq k$ such that $v_i\neq v_j$, there exists $u \in \ccS$ with $u_i\neq u_j$
          \item There exists $s\in \ccS$ such that $\1(v)\subseteq \1(s)$
      \end{itemize}
      and let us show that $v\in Cl_{S_{12}^k}(\ccS)$. The first property shows it is enough to prove it projected on $\1(s)$.
      We can apply Lemma~\ref{lemma:struc} to  $Cl_{S_{12}}(\ccS_{\1(s)})$, which shows it is equal to $Cl_{BF}(\ccS_{\1(s)})$, since there is no index on which all elements of $\ccS$ are $1$. Since $S_{12}$ is included in $S_{12}^k$, $Cl_{S_{12}^k}(\ccS)$ contains $Cl_{BF}(\ccS_{\1(s)})$. Using the second property of $v$, we have that $v \in Cl_{BF}(\ccS_{\1(s)})$ which implies that
      $v\in Cl_{S_{12}^k}(\ccS)$.
\end{proof}

Notice that if two coordinates $i$ and $j$ are equal for every vector of $\ccS$, they will remain equal in the closure. So one can simplify the input vector set by deleting all but one equivalent coordinates. So assuming that no coordinates are equivalent, the two previous lemmas together with Theorem \ref{thm:BP} give the two following corollaries.

\begin{corollary}\label{cor:S10}
     Let $k>3$, $n\geq k$, $\ccS $ be a set of vectors each of size $n$ and let $v$ be a vector of size $n$. Then $v\in Cl_{S_{10}^k}(\ccS)$ if and only if:
     \begin{itemize}
          \item For all $i,j\leq n$ such that $v_i=1$ and $v_j=0$, there exists $u \in \ccS$ with $u_i=1$ and $u_j=0$
          \item For all $I\subseteq [n]$, $|I|=k$, there exists $u\in \ccS$ such that $\1(v_I)\subseteq \1(u_I)$
      \end{itemize}
\end{corollary}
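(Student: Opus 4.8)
The plan is to combine the Baker--Pixley theorem (Theorem~\ref{thm:BP}) with the size-$k$ characterization already obtained in Lemma~\ref{lm:S10}. Since $Th_k^{k+1}$ is a near unanimity term of arity $k+1$, Theorem~\ref{thm:BP} tells us that $v \in Cl_{S_{10}^k}(\ccS)$ if and only if $v_I \in Cl_{S_{10}^k}(\ccS)_I$ for every $I \subseteq [n]$ with $|I| = k$. Because the operations of $S_{10}^k$ act coefficient-wise, projection commutes with closure (as noted in the proof of Corollary~\ref{cor:coro}), so $Cl_{S_{10}^k}(\ccS)_I = Cl_{S_{10}^k}(\ccS_I)$. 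Thus the membership of $v$ is entirely determined by the membership of each projection $v_I$ in the closure of the projected set $\ccS_I$, where both are now vectors of size $k$.

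First I would apply Lemma~\ref{lm:S10} to each pair $\ccS_I$ and $v_I$. The lemma gives two conditions for $v_I \in Cl_{S_{10}^k}(\ccS_I)$: (a) there exists $u \in \ccS$ with $\1(v_I) \subseteq \1(u_I)$, and (b) for all $i,j \in I$ with $v_i = 1$ and $v_j = 0$ there exists $u \in \ccS$ with $u_i = 1$ and $u_j = 0$. Conjoining condition (a) over all $I$ of size $k$ is verbatim the second bullet of the corollary, so no further work is needed there.

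The only genuine manipulation is to show that conjoining condition (b) over all $I$ of size $k$ is equivalent to the pairwise condition of the first bullet, namely that it holds for all $i,j \in [n]$ with $v_i = 1$ and $v_j = 0$. One direction is immediate, since each pair occurring inside some $I$ is in particular a pair of $[n]$. For the converse, note that any two distinct indices $i,j \in [n]$ (and $v_i = 1$, $v_j = 0$ forces $i \neq j$) can be completed to a set $I$ of size $k$ because $n \geq k > 3$; hence the pairwise condition over $[n]$ implies condition (b) for every such $I$. This quantifier collapse is the step that genuinely uses the hypothesis $n \geq k$, together with the observation that condition (b) is local to pairs whereas condition (a) is not and must therefore remain quantified over all $I$. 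Putting the two bullets together yields exactly the stated characterization. I do not anticipate a serious obstacle: the argument is essentially a reorganization of the quantifiers provided by Theorem~\ref{thm:BP} and Lemma~\ref{lm:S10}, and the single point requiring care is precisely this quantifier collapse for the pairwise condition.
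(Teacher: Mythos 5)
Your proof is correct and takes essentially the same approach as the paper: the paper obtains this corollary precisely by combining Theorem~\ref{thm:BP} (projection commuting with closure, as in Corollary~\ref{cor:coro}) with Lemma~\ref{lm:S10} applied to each $\ccS_I$, leaving the quantifier reorganization implicit. Your explicit handling of the pairwise quantifier collapse via $n\geq k$ just spells out the step the paper glosses over with ``the two previous lemmas together with Theorem~\ref{thm:BP} give the two following corollaries.''
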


\begin{corollary}\label{cor:S12}
     Let $k>3$, $n\geq k$, $\ccS $ be a set of vectors each of size $n$ and let $v$ be a vector of size $n$. Then $v\in Cl_{S_{12}^k}(\ccS)$ if and only if for all $I\subseteq [n]$, $|I|=k$, there exists $u\in \ccS$ such that $\1(v_I)\subseteq \1(u_I)$.
\end{corollary}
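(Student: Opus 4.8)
The plan is to follow the same two-step scheme as for Corollary~\ref{cor:S10}: first reduce membership to the size-$k$ projections via the Baker--Pixley theorem, then feed each projection into the preceding characterization lemma for $Cl_{S_{12}^k}$ and simplify its conditions under the standing non-degeneracy assumptions on $\ccS$. Concretely, I would begin by noting that $Th_k^{k+1}$ is a near unanimity operation of arity $k+1$, so Theorem~\ref{thm:BP} applies with parameter $k+1$ and gives $v\in Cl_{S_{12}^k}(\ccS)$ if and only if $v_I\in Cl_{S_{12}^k}(\ccS)_I$ for every $I\subseteq[n]$ with $|I|=k$. Since the operations act coefficient-wise, $Cl_{S_{12}^k}(\ccS)_I=Cl_{S_{12}^k}(\ccS_I)$ (as already observed in the proof of Corollary~\ref{cor:coro}), so the problem reduces to deciding, for each $k$-subset $I$, whether the size-$k$ vector $v_I$ lies in the closure of the size-$k$ set $\ccS_I$.

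Next I would apply the preceding characterization lemma for $Cl_{S_{12}^k}$ to each pair $(\ccS_I,v_I)$. Its hypothesis demands that on each projection every coordinate carries a zero in some vector; I would guarantee this by first invoking Proposition~\ref{prop:red} together with the equivalence reduction, so that we may assume $\ccS$ has no constant coordinate and no two equivalent (always-equal) coordinates. Non-constancy means that for each $i$ there is $u\in\ccS$ with $u_i=0$, and its restriction $u_I$ supplies the required zero inside $\ccS_I$; hence the lemma applies verbatim and states that $v_I\in Cl_{S_{12}^k}(\ccS_I)$ iff (i) there is $u\in\ccS$ with $\1(v_I)\subseteq\1(u_I)$, and (ii) for all $i,j\in I$ with $(v_I)_i\neq (v_I)_j$ there is $u\in\ccS$ with $u_i\neq u_j$.

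The crux is then to discard condition (ii). Because no two coordinates of $\ccS$ are equivalent, for every pair $i\neq j$ there is already some $u\in\ccS$ with $u_i\neq u_j$, whose projection witnesses (ii) inside $\ccS_I$; thus (ii) holds for all $I$ irrespective of $v$ and may be dropped. What survives is condition (i) for each $I$, which is exactly the single condition in the statement, so intersecting over all $k$-subsets yields the claimed equivalence. I would highlight that this is precisely where the $S_{12}^k$ case diverges from the $S_{10}^k$ case of Corollary~\ref{cor:S10}: there the retained pairwise condition forbids a $(1,0)$ pattern --- an implication constraint that the equivalence reduction does not erase --- whereas here the only obstruction is an always-equal pair, which that reduction removes outright.

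The main obstacle I anticipate is the careful handling of the degenerate coordinates rather than any deep argument: one must check that the hypothesis of the characterization lemma genuinely holds on every projection $\ccS_I$, which fails exactly when some coordinate is forced to $1$, so the constant-$1$ coordinates must be projected out beforehand (they would otherwise make condition (i) satisfiable while the true closure keeps that coordinate fixed, by Lemma~\ref{lemma:struc}); and that both the non-constancy and the non-equivalence of coordinates transfer from the global set $\ccS$ to each $\ccS_I$. The latter is immediate once one observes that a witness $u\in\ccS$ for a pairwise or single-coordinate condition restricts to a witness $u_I\in\ccS_I$ as soon as the indices involved lie in $I$, so these global assumptions descend verbatim to every projection.
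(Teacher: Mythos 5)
Your proposal is correct and takes essentially the same route as the paper: Baker--Pixley (Theorem~\ref{thm:BP}) reduces membership to the size-$k$ projections, the preceding lemma characterizes each $Cl_{S_{12}^k}(\ccS_I)$, and the standing non-degeneracy assumptions make the lemma's hypothesis hold on every projection and render its pairwise condition automatic, leaving exactly the stated condition. You are in fact slightly more careful than the paper's one-line justification, which only mentions deleting equivalent coordinates and leaves implicit the need to also project out constant-$1$ coordinates (without which the corollary as literally stated would fail).
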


We are now ready to reduce $\EnumKMaxIndSet$ to $\CloMax_{S_{10}^k}$ and $\CloMax_{S_{12}^k}$.
Let $\cH=(C,\cE)$ be a hypergraph of dimension $k$ on $n$ vertices and assume
for simplicity that $\cH$ is $k$-regular (i.e. all its hyperedges are of size
exactly $k$). Let $\cT=\{\chi_{T} \mid T \in \binom{V}{k} \setminus \cE \}$
where $\binom{V}{k}$ denotes the subsets of $V$ of size $k$ and $\chi_{T}$
denotes the characteristic vector of $T$. Now for $i\leq n$, let $e^i$ be the
vector of size $n$ such that $e^i_i=1$ and $e^{i}_j=0$ for all $j\neq i$. If we denote by $\cU$ the set of vectors $\cT \cup \{e^i : i\leq n\}$ we have the following property.

\begin{lemma}
 $Cl_{S_{10}^k}(\cU)$ (resp. $Cl_{S_{12}^k}(\cU)$) is the set of the characteristic vectors of the independent sets of $\cH$.
\end{lemma}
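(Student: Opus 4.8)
The plan is to reduce membership in both closures to the purely combinatorial \emph{covering condition} supplied by the Baker--Pixley characterizations, Corollary~\ref{cor:S10} and Corollary~\ref{cor:S12}, and then to match that condition with independence in $\cH$. The first thing I would observe is that the unit vectors $e^i \in \cU$ make the first clause of Corollary~\ref{cor:S10} automatic: whenever $v_i = 1$ and $v_j = 0$, the vector $e^i$ already satisfies $e^i_i = 1$ and $e^i_j = 0$. Hence, on the instance $\cU$, the characterizations of $\Cl_{S_{10}^k}(\cU)$ and $\Cl_{S_{12}^k}(\cU)$ both collapse to the single statement: $v$ lies in the closure if and only if for every $I \subseteq [n]$ with $|I| = k$ there is some $u \in \cU$ with $\1(v_I) \subseteq \1(u_I)$. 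For $S_{12}^k$ one first checks the hypothesis of the corresponding lemma, but it holds since every coordinate is zero in some $e^{j}$ with $j \ne i$, so no coordinate is constantly one. This is exactly why the statement can treat the two closures together.

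Next I would prove the easy inclusion, that every $v$ in the closure is the characteristic vector of an independent set, arguing contrapositively. Suppose $\1(v)$ contains a hyperedge $E$; since $\cH$ is $k$-regular, $|E| = k$, so taking $I = E$ gives $\1(v_I) = E$. The covering condition would demand some $u \in \cU$ with $E \subseteq \1(u)$. But each $e^i$ has support of size one, too small to contain $E$, and each $\chi_T \in \cT$ has support a $k$-set $T \notin \cE$, so $T \ne E$ and, having the same size, $T$ cannot contain $E$ either. This contradiction forces $\1(v)$ to be independent.

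For the reverse inclusion I would fix an independent set $X = \1(v)$ and a $k$-set $I$, and cover $A := X \cap I = \1(v_I)$. If $|A| \le 1$, a suitable $e^i$ works. If $A = I$, then $I \subseteq X$ is a $k$-subset of an independent set, hence $I \notin \cE$ and $\chi_I \in \cT$ covers it exactly. The remaining case $2 \le |A| < k$ is the crux: I must exhibit a \emph{non-edge} $k$-set $T \supseteq A$, so that $\chi_T \in \cT$ and $\1(v_I) = A \subseteq T \cap I = \1((\chi_T)_I)$. When $|X| \ge k$ this is clean: extend $A$ to a $k$-subset $T$ of $X$, which is automatically a non-edge because $X$ is independent.

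I expect this last extension step to be the main obstacle, and the place where the argument is genuinely delicate. When the independent set $X$ is itself smaller than $k$, one must leave $X$ to pad $A$ up to size $k$, and there is no a priori guarantee that the padded set avoids $\cE$; the complete $k$-uniform hypergraph, where $\cT = \emptyset$, shows that this step cannot be purely formal and that some assumption is needed. I would therefore isolate, as a separate combinatorial lemma, the claim that every trace $A$ of an independent set extends to a non-edge $k$-set, proving it either under a mild non-degeneracy assumption on $\cH$ (e.g.\ that every set of size at most $k-1$ sits inside some non-edge) or directly in the regime relevant to the reduction, where the maximal independent sets have size at least $k$. The $S_{12}^k$ case requires no separate work: Corollary~\ref{cor:S12} yields the identical covering condition, so the same analysis transfers verbatim.
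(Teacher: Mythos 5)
Your proposal follows the paper's own route step by step: use Corollaries~\ref{cor:S10} and~\ref{cor:S12} (the unit vectors making the pairwise clause and the ``no constant-one coordinate'' hypothesis automatic) to collapse membership in both closures to the single covering condition, rule out supports containing a hyperedge because no vector of $\cU$ can cover them, and certify covering for independent supports by exhibiting suitable vectors of $\cU$. The one place where you diverge is precisely the step you flag as delicate, and there you are right and the paper is not. The paper's proof silently passes from ``for every $k$-set $I\subseteq[n]$ some $u\in\cU$ satisfies $\1(v_I)\subseteq\1(u_I)$'' to ``for every $k$-subset $I$ of $\1(v)$ some $u\in\cU$ satisfies $u_I=\one$'' as if the two were equivalent; the backward implication is exactly the extension step you isolate (padding a trace $A=\1(v)\cap I$ with $2\le|A|<k$ to a non-edge $k$-set), and when $|\1(v)|<k$ the second condition is vacuously true while the first can fail.

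Your counterexample is genuine, so this is not a gap you failed to close but an error in the lemma as stated. For the complete $k$-uniform hypergraph on $n>k$ vertices one has $\cT=\emptyset$ and $\cU=\{e^i \mid i\le n\}$; applying $Th_k^{k+1}$ or $x\wedge(y\to z)$ to vectors of support at most one again yields vectors of support at most one, so $Cl_{S_{10}^k}(\cU)=Cl_{S_{12}^k}(\cU)=\{\zero\}\cup\{e^i \mid i\le n\}$, yet every pair of vertices is an independent set. (The same example breaks Corollary~\ref{cor:reductionindependent}: the maximal closure elements are the singletons, while the maximal independent sets are the $(k-1)$-sets.) Your proposed repair is the right one, and your two candidate hypotheses are in fact equivalent: ``every set of size at most $k-1$ lies in some non-edge $k$-set'' holds if and only if every maximal independent set of $\cH$ has size at least $k$. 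Indeed, if some maximal independent set $X$ had $|X|<k$, a non-edge $T\supseteq X$ of size $k$ would itself be independent (its only $k$-subset is $T$), contradicting maximality; conversely, every set of size at most $k-1$ is independent (all edges have size exactly $k$), hence extends to a maximal independent set of size at least $k$, and any $k$-subset of an independent set is a non-edge. Under this hypothesis your case analysis ($|A|\le 1$ via $e^i$, $A=I$ via $\chi_I$, $2\le|A|<k$ via the extension step) is complete and correct; without it, the lemma, the paper's proof of it, and the reduction built on it all require the degenerate hypergraphs, namely those with a maximal independent set of size less than $k$, to be treated separately.
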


\begin{proof}
Since for every $i,j\leq n$ there is $e^{i}\in \cU$ such that $e^{i}_i=1$ and $e^{i}_j=0$, by Corollary \ref{cor:S10} and \ref{cor:S12}, a vector belongs to $Cl_{S_{10}^k}(\cU)$ (resp. $Cl_{S_{12}^k}(\cU)$) if and only if for every subset of indices $I\subseteq [n]$ of size $k$, there exists $u\in \cU$ such that $\1(v_I)\subseteq \1(u_I)$. So $v\in Cl_{S_{10}^k}(\cU)$ (resp. $v\in Cl_{S_{12}^k}(\cU)$) if and only if for every subset $I$ of $\1(v)$ of size $k$ there exists $u\in \cU$ such that $v_I=u_I=\one$. Since every vectors of $\cU$ has at most $k$ ones, $v\in Cl_{S_{10}^k}(\cU)$ (resp. $v\in Cl_{S_{10}^k}(\cU)$) if and only for all subsets $I$ of $\1(v)$ of size $k$ there exists $u\in \cT$ such that $u$ is the characteristic vector of $\1(v_I)$. By construction of $\cT$, the later is  equivalent to the fact that $I$ is not a hyperedge of $\cH$. So, since $\cH$ is $k$-regular, $v\in Cl_{S_{10}^k}(\cU)$ (resp. $v\in Cl_{S_{12}^k}(\cU)$) if and only if $\1(v)$ contains no hyperedge of $\cH$, i.e. $v$ is the characteristic vector of an independent set of $\cH$.
\end{proof}

The bijections between the closures of $\cU$ and the independent sets of $\cH$ are also bijections between the maximal elements of the closures and the maximal independent sets. Since $\cU$ can be built in time polynomial in $\cH$, we have proved the existence of polynomial delay reductions as stated in the next corollary.

\begin{corollary}\label{cor:reductionindependent}
      There are polynomial delay reductions from $\EnumKMaxIndSet$ to $\EnumCloMax_{S_{10}^k}$ and $\EnumCloMax_{S_{12}^k}$.
\end{corollary}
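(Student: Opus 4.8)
The plan is to read the corollary off the preceding lemma, which already does the heavy lifting by identifying $\Cl_{S_{10}^k}(\cU)$ (resp. $\Cl_{S_{12}^k}(\cU)$) with the characteristic vectors of the independent sets of $\cH$; all that remains is to lift this identification to the \emph{maximal} elements and to phrase it inside the polynomial delay reduction framework of Section~\ref{sec:preliminary}. Following the setup before the lemma I would take the reduction map $f$ to send $\cH$ to the vector set $\cU = \cT \cup \{e^i : i \leq n\}$. This $f$ runs in polynomial time for fixed $k$, since $|\cT| \leq \binom{n}{k}$ and each characteristic vector $\chi_T$ and each $e^i$ is computed in polynomial time; as in the surrounding text I would assume $\cH$ to be $k$-regular, the general dimension-$k$ case reducing to this one. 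For the auxiliary problem $C$ required by the definition of a polynomial delay reduction, I would use the trivial map sending a maximal element $v \in \Cl_{S_{10}^k}(\cU) \setminus \{\zero,\one\}$ to the single set $\1(v)$; this is in $\DelayP$ with constant delay, and distinct $v$ give distinct $\1(v)$, so the disjointness requirement $C(\cH v) \cap C(\cH z) = \emptyset$ for $v \neq z$ holds trivially.

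The central verification is that $v \mapsto \1(v)$ restricts to a bijection between the maximal elements of $\Cl_{S_{10}^k}(\cU)$ and the maximal independent sets of $\cH$. This is immediate from the order used throughout Section~\ref{sec:max}: by definition $v$ is smaller than $u$ precisely when $\1(v) \subseteq \1(u)$, so the bijection of the lemma is an isomorphism of the two inclusion orders and hence preserves maximality in both directions. Invoking Corollary~\ref{cor:S12} and the $S_{12}^k$ half of the lemma, the identical argument handles $\Cl_{S_{12}^k}(\cU)$.

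The one point that needs care --- and the only real obstacle, a minor one --- is that $\EnumCloMax$ enumerates maximal elements of $\Cl_\cF(\ccS) \setminus \{\zero,\one\}$, so I must check that deleting $\zero$ and $\one$ does not perturb the set of maximal elements. For $\zero$: the empty set is always independent, but since $k > 3$ every singleton $\{i\}$ is independent as well, so $\zero$ lies strictly below each $e^i$; being the least element it is never maximal and deleting it creates no new maximal element. For $\one$: as soon as $\cH$ has a hyperedge, $[n]$ contains it, so $\one \notin \Cl_{S_{10}^k}(\cU)$ and its removal is vacuous, while a hyperedge-free $\cH$ (whose sole maximal independent set is all of $[n]$) is a trivial instance handled separately. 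With these checks the two reduction conditions hold: $\EnumKMaxIndSet(\cH) = \bigcup_{v} C(\cH v)$, the union ranging over the maximal elements $v$ of $\Cl_{S_{10}^k}(\cU) \setminus \{\zero,\one\}$, and the parts are pairwise disjoint, giving the desired polynomial delay reductions to both $\EnumCloMax_{S_{10}^k}$ and $\EnumCloMax_{S_{12}^k}$.
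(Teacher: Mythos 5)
Your proof is correct and takes essentially the same route as the paper: the same reduction map $\cH \mapsto \cU = \cT \cup \{e^i : i\leq n\}$, the same appeal to the preceding lemma, and the same observation that $v \mapsto \1(v)$ is an isomorphism of the inclusion orders and therefore carries maximal elements of the closure to maximal independent sets. The only difference is that you also check that excluding $\zero$ and $\one$ does not change the set of maximal elements (treating the hyperedge-free instance separately) and spell out the auxiliary problem $C$ of the reduction framework; the paper passes over these points silently, so this is a refinement of the same argument rather than a different one.
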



Now we show that $\EnumCloMax_{S_{10}^k}$ and $\EnumCloMax_{S_{12}^k}$ can be
reduced to $\EnumKMaxIndSet$, proving that these three problems are equivalent.
Let $\ccS$ be a set of binary vectors of size $n$, and $k<n$ be an integer. We denote by $\cH_k(\ccS)$ the hypergraph on $n$ vertices such that the hyperedges are the subsets $I\subseteq [n]$, $|I|\leq k$ such that $u_I\neq \one$ for all $u\in \ccS$. A hyperedge of $\cH_k(\ccS)$ is a subset of coordinates that are never set to one at the same time in $\ccS$.
We denote by $I(\cH_{k(\ccS)})$ the set of independent sets of $\cH_{k}$ and we denote by $Max(I(\cH_{k(\ccS)}))$ its maximal ones. For a family of sets $\cT$, let us denote by $\chi(\cT)$ the set of characteristic vectors of the sets in $\cT$.

We first show that $Cl_{S_{12}^k}(\ccS)$ is the set of characteristic vectors of $I(\cH_{k(\ccS)})$. Since the inclusion ordering between sets correspond to the classical ordering between their characteristic vectors, this implies that  $Max(Cl_{S_{12}^k}(\ccS)) = \chi(Max(I(\cH_{k(\ccS)})))$. On the other hand $Cl_{S_{10}^k}(\ccS) \neq  \chi(I(\cH_{k(\ccS)}))$, but we will see that $Max(Cl_{S_{10}^k}(\ccS))=Max(Cl_{S_{12}^k}(\ccS))=\chi(Max(I(\cH_{k(\ccS)})))$.

\begin{theorem}\label{thm:S12k-equiv}
$\EnumCloMax_{S_{12}^k}$ and $\EnumKMaxIndSet$ are equivalent.

\end{theorem}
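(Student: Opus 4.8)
The reduction from $\EnumKMaxIndSet$ to $\EnumCloMax_{S_{12}^k}$ is already supplied by Corollary~\ref{cor:reductionindependent}, so the plan is to establish only the converse: a polynomial delay reduction from $\EnumCloMax_{S_{12}^k}$ to $\EnumKMaxIndSet$. The reduction itself is immediate, namely $\ccS \mapsto \cH_k(\ccS)$ followed by an enumeration of the maximal independent sets of this hypergraph, each read off as a characteristic vector. Everything hinges on the set equality $Cl_{S_{12}^k}(\ccS) = \chi(I(\cH_k(\ccS)))$ announced just before the statement, which I would prove first.

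To prove this equality I would invoke the characterization of Corollary~\ref{cor:S12}: a vector $v$ lies in $Cl_{S_{12}^k}(\ccS)$ iff for every $I\subseteq[n]$ with $|I|=k$ there is $u\in\ccS$ with $\1(v_I)\subseteq\1(u_I)$. Writing $X=\1(v)$, the point is that $\1(v_I)=I\cap X$ and $\1(u_I)=I\cap\1(u)$, so $\1(v_I)\subseteq\1(u_I)$ says exactly $I\cap X\subseteq\1(u)$. Hence the Corollary condition is equivalent to: for every $J\subseteq X$ with $|J|\leq k$ there is $u\in\ccS$ with $J\subseteq\1(u)$, i.e. $u_J=\one$. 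The only non-immediate step is passing between size-exactly-$k$ sets and size-at-most-$k$ subsets: given $J\subseteq X$ with $|J|\leq k$, extend it arbitrarily to a size-$k$ set $I$ (possible since $k<n$) and note $J\subseteq I\cap X$; conversely restrict an arbitrary size-$k$ set $I$ to $J=I\cap X$. By the definition of $\cH_k(\ccS)$ this says precisely that no subset of $X$ of size at most $k$ is a hyperedge, i.e. that $X$ is independent, giving $Cl_{S_{12}^k}(\ccS)=\chi(I(\cH_k(\ccS)))$.

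Since the inclusion order on subsets of $[n]$ matches the coordinatewise order on characteristic vectors, $\chi$ is an order isomorphism from $I(\cH_k(\ccS))$ onto $Cl_{S_{12}^k}(\ccS)$, hence maps maximal independent sets bijectively onto maximal elements of the closure, so that $Max(Cl_{S_{12}^k}(\ccS))=\chi(Max(I(\cH_k(\ccS))))$. The hypergraph $\cH_k(\ccS)$ has $n$ vertices and is built by testing each of the $O(n^k)$ subsets of size at most $k$ against the $m$ vectors, which is polynomial for fixed $k$; retaining only the inclusion-minimal hyperedges leaves dimension at most $k$, as $\EnumKMaxIndSet$ requires. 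Enumerating its maximal independent sets and emitting their characteristic vectors thus yields the desired polynomial delay reduction, and combined with Corollary~\ref{cor:reductionindependent} this proves the equivalence.

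I expect the main obstacle to be not this correspondence, which is bookkeeping once Corollary~\ref{cor:S12} is available, but rather the handling of the excluded vectors $\zero$ and $\one$. I would dispose of it by first applying Proposition~\ref{prop:red}: after this normalization every coordinate $i$ admits some $u\in\ccS$ with $u_i=1$, so no singleton is a hyperedge and the empty independent set ($\zero$) is never maximal; the vector $\one$ is maximal only in the degenerate case where $[n]$ carries no hyperedge at all, which is detected in polynomial time and removed. With these trivial cases excluded, the bijection between the non-trivial maximal elements of the closure and the maximal independent sets of $\cH_k(\ccS)$ is exact, completing the reduction.
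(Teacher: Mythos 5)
Your proposal follows essentially the same route as the paper's own proof: both directions rest on Corollary~\ref{cor:reductionindependent} for one side and, for the other, on the set equality $Cl_{S_{12}^k}(\ccS) = \chi(I(\cH_k(\ccS)))$ derived from Corollary~\ref{cor:S12}, with the order isomorphism of $\chi$ transferring maximality and the $O(mn^k)$ construction of $\cH_k(\ccS)$ giving the polynomial delay reduction. Your extra care about passing between size-exactly-$k$ and size-at-most-$k$ subsets, and about the excluded vectors $\zero$ and $\one$, only makes explicit bookkeeping that the paper leaves implicit.
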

\begin{proof}
We first prove that $Cl_{S_{12}^k}(\ccS) = \chi(I(\cH_{k(\ccS)}))$.

     We have $T\in \chi(I(\cH_{k(\ccS)}))$\\
     $\Leftrightarrow$ for each subset $I\subseteq \1(v)$, $|I|\leq k$, $I$ is not a hyperedge of $\cH_{k}(\ccS)$\\
     $\Leftrightarrow$ for each subset $I\subseteq \1(v)$, $|I|\leq k$, there exists $u\in \ccS$ such that $u_{I}=\one$\\
     $\Leftrightarrow$ $v\in Cl_{S_{12}^k}(\ccS)$ (By Corollary \ref{cor:S12})

     Given $\ccS$, $\cH_k(\ccS)$ can be constructed in time $O(mn^{k})$ where $m$ is the number of vectors in $\ccS$.
     Moreover we have proved that $Cl_{S_{12}^k}(\ccS) = \chi(I(\cH_{k(\ccS)}))$ and $\chi$ can be computed in linear time,
     which proves that this is a polynomial delay reduction from $\EnumCloMax_{S_{12}^k}$ to $\EnumKMaxIndSet$.
     Since Corollary~\ref{cor:reductionindependent} proves the reduction in the other direction, the two problems are equivalent.
      \end{proof}

  We now prove that $\EnumCloMax_{S_{10}^k}$ and $\EnumKMaxIndSet$ are equivalent by proving that the maximal elements of  $Cl_{S_{12}^k}(\ccS)$ and $Cl_{S_{10}^k}(\ccS)$ are the same.

\begin{theorem}
$\EnumCloMax_{S_{10}^k}$ and $\EnumKMaxIndSet$ are equivalent.
 \end{theorem}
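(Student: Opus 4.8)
The plan is to deduce this equivalence from the previous theorem (Theorem~\ref{thm:S12k-equiv}) by showing that $\EnumCloMax_{S_{10}^k}$ and $\EnumCloMax_{S_{12}^k}$ are in fact the \emph{same} enumeration problem, i.e. that $Max(\Cl_{S_{10}^k}(\ccS)) = Max(\Cl_{S_{12}^k}(\ccS))$ for every input $\ccS$. Since $S_{10}^k = <Th_k^{k+1}>$ is a subclone of $S_{12}^k = <Th_k^{k+1}, x\wedge(y\to z)>$, we always have $\Cl_{S_{10}^k}(\ccS) \subseteq \Cl_{S_{12}^k}(\ccS)$. Hence it suffices to prove that every maximal element of $\Cl_{S_{12}^k}(\ccS)$ already belongs to $\Cl_{S_{10}^k}(\ccS)$: once this is known, a maximal element of the larger closure is automatically maximal in the smaller one, and conversely any maximal element $w$ of the smaller closure is dominated by some maximal element $v$ of the larger one, which lies in $\Cl_{S_{10}^k}(\ccS)$ and therefore must equal $w$. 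Equality of the two sets of maximal elements makes the two enumeration problems coincide on every instance, and the equivalence with $\EnumKMaxIndSet$ then follows from Theorem~\ref{thm:S12k-equiv}.

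The heart of the argument is the inclusion $Max(\Cl_{S_{12}^k}(\ccS)) \subseteq \Cl_{S_{10}^k}(\ccS)$, which I would prove with the two characterizations of Corollary~\ref{cor:S10} and Corollary~\ref{cor:S12}. Let $v$ be maximal in $\Cl_{S_{12}^k}(\ccS)$. By Corollary~\ref{cor:S12} it satisfies the covering condition: for every $I$ with $|I|=k$ there is $u\in\ccS$ with $\1(v_I)\subseteq\1(u_I)$. This is exactly the second condition of Corollary~\ref{cor:S10}, so the only thing left to check in order to conclude $v\in\Cl_{S_{10}^k}(\ccS)$ is the pairwise condition: for all $i,j$ with $v_i=1$ and $v_j=0$ there is $u\in\ccS$ with $u_i=1$ and $u_j=0$. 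I would argue by contradiction. Suppose this fails for some pair $(i,j)$, so that every $u\in\ccS$ with $u_i=1$ also has $u_j=1$. Let $v'$ be obtained from $v$ by setting coordinate $j$ to $1$, so that $v' > v$; the claim is that $v'$ still satisfies the covering condition, hence $v'\in\Cl_{S_{12}^k}(\ccS)$ by Corollary~\ref{cor:S12}, contradicting the maximality of $v$.

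To establish the claim I would verify the covering condition of $v'$ on each $k$-set $I$. If $j\notin I$ then $v'_I=v_I$ and there is nothing to prove. If $j\in I$ and $i\in I$, a vector $u$ covering $\1(v_I)$ has $u_i=1$ (because $v_i=1$), hence $u_j=1$ by the failure hypothesis, so $u$ also covers $\1(v'_I)=\1(v_I)\cup\{j\}$. If $j\in I$ but $i\notin I$, I would instead apply the covering condition of $v$ to the set $I'=(I\setminus\{j\})\cup\{i\}$, which again has size $k$: the covering vector $u$ has $u_i=1$, hence $u_j=1$, and since $v_j=0$ it covers $\1(v_{I\setminus\{j\}})=\1(v_I)$, so it covers $\1(v'_I)$ as well. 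This exchange of the index $j$ for $i$ is the key combinatorial step of the proof.

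The main obstacle is the boundary case where the flipped vector $v'$ equals $\one$, i.e. when the maximal element $v$ sits immediately below the top of the lattice; there the clean contradiction with maximality breaks down, and this is precisely where the exclusion of $\zero$ and $\one$ must be invoked with care. I would resolve it by observing that $\one$ (and symmetrically $\zero$) satisfies the covering condition for $\Cl_{S_{10}^k}$ exactly when it does for $\Cl_{S_{12}^k}$, so the two closures contain the top (resp. bottom) element together: when $\one\in\Cl_{S_{12}^k}(\ccS)$ it is the unique maximal element of the closure and is discarded, so both problems are left with the same (possibly empty) list of maximal elements, while when $\one\notin\Cl_{S_{12}^k}(\ccS)$ the case $v'=\one$ cannot arise and the exchange argument applies unconditionally. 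Verifying that this degenerate situation is handled consistently with the convention that $\zero$ and $\one$ are removed is the delicate point I expect to spend the most care on.
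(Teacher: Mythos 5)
Your core argument is the paper's own proof. The paper derives the theorem from Theorem~\ref{thm:S12k-equiv} in the same way, by showing that the maximal elements of $\Cl_{S_{10}^k}(\ccS)$ and $\Cl_{S_{12}^k}(\ccS)$ coincide, and its key step (Claim 1 there) is exactly your exchange $I'=(I\setminus\{j\})\cup\{i\}$, only run in the contrapositive direction: the paper assumes the covering condition fails for the flipped vector $v'$ at some $I$ and produces a $u\in\ccS$ with $u_{i,j}=(1,0)$, whereas you assume the pairwise condition fails and verify covering for $v'$ at every $I$. Your second step (a maximal element of the smaller closure is dominated by a maximal element of the larger one, which by the first step lies in the smaller closure) is also verbatim the paper's.

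The genuine problem is your last paragraph: the claim that when $\one\in\Cl_{S_{12}^k}(\ccS)$ "both problems are left with the same (possibly empty) list of maximal elements" is false, and no amount of care can repair it, because in that case the two solution sets can genuinely differ. Take $k=4$, $n=5$ and $\ccS=\{\one,e^2,e^3,e^4,e^5\}$, where $e^i$ is the $i$-th unit vector (no two coordinates are equivalent here). Since $\one\in\ccS$, every vector satisfies the covering condition, so by Corollary~\ref{cor:S12} we get $\Cl_{S_{12}^4}(\ccS)=\{0,1\}^5$, and the maximal elements of $\Cl_{S_{12}^4}(\ccS)\setminus\{\zero,\one\}$ are the five vectors with exactly one zero. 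On the other hand, the only vector of $\ccS$ with a one in coordinate $1$ is $\one$, so $(1,0)\notin\ccS_{1,j}$ for every $j$; by the pairwise condition of Corollary~\ref{cor:S10}, every $v\in\Cl_{S_{10}^4}(\ccS)$ with $v_1=1$ equals $\one$, hence $\Cl_{S_{10}^4}(\ccS)=\{v\mid v_1=0\}\cup\{\one\}$, whose unique maximal element after removing $\zero$ and $\one$ is $(0,1,1,1,1)$. So $(1,0,1,1,1)$ is a solution of $\EnumCloMax_{S_{12}^4}$ but not of $\EnumCloMax_{S_{10}^4}$. The equivalence survives, but for a different reason that your proof must supply: $\one\in\Cl_{S_{12}^k}(\ccS)$ can be tested in time $O(mn^k)$, and when it holds both problems are trivial ($\Cl_{S_{12}^k}(\ccS)$ is the whole cube, and $\Cl_{S_{10}^k}(\ccS)$ is cut out by the pairwise condition alone, i.e. it is the family of subsets of $[n]$ closed under the relation $i\to j$ whenever $(1,0)\notin\ccS_{i,j}$, whose maximal proper members are at most $n$ and computable in polynomial time), so in this case one maps to a trivial instance of $\EnumKMaxIndSet$; the exchange argument is then only needed, and only valid, in the case $\one\notin\Cl_{S_{12}^k}(\ccS)$. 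For what it is worth, the paper itself elides this point: its proof tacitly works with maximal elements of the full closures, where the equality does hold, and never reconciles this with the convention excluding $\zero$ and $\one$; your instinct that this is the delicate spot was correct, but the resolution you propose is the one step of your proof that would fail.
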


\begin{proof}
Since $S_{10}^k \subseteq S_{12}^k$ we have that $Cl_{S_{10}^k}(\ccS)\subseteq Cl_{S_{12}^k}(\ccS)$.

 \textbf{Claim 1}:$Max(Cl_{S_{12}^k}(\ccS)) \subseteq Max(Cl_{S_{10}^k}(\ccS))$

Let $v\in Max(Cl_{S_{12}^k}(\ccS))$ and let us show that $v\in Cl_{S_{10}^k}(\ccS)$. Assume it is not, by corollaries  \ref{cor:S10} and \ref{cor:S12} and since $v\in Cl_{S_{12}^k}(\ccS)$, we know that there exists $i,j\leq n$ such that $v_{i,j}=(1,0)$ while the vector $(1,0)$ does not belong to $\ccS_{i,j}$. By maximality of $v$ in $Cl_{S_{12}^k}(\ccS)$, the vector $v'$ obtained from $v$ by setting $j$ to $1$ does not belong to $Cl_{S_{12}^k}(\ccS)$, so by corollary \ref{cor:S12}, there exists $I\subseteq \1(v')$, $|I|\leq k$ such that $v'_{I}\notin \ccS_{I}$. Notice that $j$ must belong to $I$ since otherwise $v$ would not belong to $Cl_{S_{12}^k}(\ccS)$. Now let $I':= I\setminus \{j\} \cup \{i\}$. We have that $I'\subseteq \1(v)$ and $|I'|\leq k$. So again, by corollary \ref{cor:S12} there exists $u\in \ccS$ such that $u_{I'}=v_{I'}$. Since $u_{I}\neq v'_{I}$ and $v'_j=1$ we have that $u_j=0$. But since $u_i=v_i=1$, $u_{i,j}=(1,0)$ contradicting the fact that the vector $(1,0)$ does not belong to $\ccS_{i,j}$. So $v\in Cl_{S_{10}^k}(\ccS)$, and since $v$ is maximal in $Cl_{S_{12}^k}(\ccS)$ it is also maximal in $Cl_{S_{10}^k}(\ccS)$.

Assume that $v\in Max(Cl_{S_{10}^k}(\ccS))$. Since  $Cl_{S_{10}^k}(\ccS)\subseteq Cl_{S_{12}^k}(\ccS)$, $v\in Cl_{S_{12}^k}(\ccS)$. Now assume that $v$ is not maximal in $Cl_{S_{12}^k}(\ccS)$, then there exists $v'\in Max(Cl_{S_{12}^k}(\ccS))$ such that $v<v'$. But then, by Claim 1, $v'\in Cl_{S_{10}^k}(\ccS)$ which contradicts the maximality of $v$ in $Cl_{S_{10}^k}(\ccS)$.

We have thus proved that $Max(Cl_{S_{12}^k}(\ccS)) = Max(Cl_{S_{10}^k}(\ccS))$ which proves that
$\EnumCloMax_{S_{10}^k}$ and $\EnumCloMax_{S_{12}^k}$ are equivalent. Then by Theorem~\ref{thm:S12k-equiv} and the transitivity of equivalence, we have that  $\EnumCloMax_{S_{10}^k}$ and $\EnumKMaxIndSet$ are equivalent.
\end{proof}









\acknowledgements{
The authors have been partially supported by the French Agence Nationale de la Recherche,
AGGREG project reference ANR-14-CE25-0017-01 and we thank the members of the project and Mamadou Kant\'e for interesting discussions about enumeration. We also thank Florent Madelaine for his help with CSP and universal algebra. We thank Bruno Zanuttini for his remarks about representing closures as boolean formulas.
}

\bibliographystyle{plain}
 \bibliography{biblio.bib}

\end{document}